\colorlet{Black}{black}
\newenvironment{tz}[1][]{%
                                \begin{tikzpicture}[baseline={([yshift=-.8ex]current bounding                        box.center)},#1] %
                                }{%
                        \end{tikzpicture} %
                        }
\DeclareRobustCommand{\SkipTocEntry}[5]{}
\tikzstyle{none}=[inner sep=0pt]
\tikzstyle{rn}=[circle,fill=Red,draw=Black,line width=0.8 pt]
\tikzstyle{gn}=[circle,fill=Lime,draw=Black,line width=0.8 pt]
\tikzstyle{bl}=[circle,fill=Blue,draw=Black,line width=0.8 pt]
\tikzstyle{simple}=[-,draw=Black,thick]
\tikzstyle{arrow}=[-,draw=Black,postaction={decorate},decoration={markings,mark=at position .5 with {\arrow{>}}},thick]
\tikzstyle{tick}=[-,draw=Black,postaction={decorate},decoration={markings,mark=at position .5 with {\draw (0,-0.1) -- (0,0.1);}},line width=2.000]
\def\thickness{0.7pt}
\tikzstyle{dot}=[circle, draw=black, fill=black, inner sep=.5ex, line width=\thickness, node on layer=foreground]
     \gdef\node@@on@layer{%
      \setbox\tikz@tempbox=\hbox\bgroup\pgfonlayer{#1}\unhbox\tikz@tempbox\endpgfonlayer\egroup}
\def\node@on@layer{\aftergroup\node@@on@layer}
\def\calign@preamble{%
   &\hfil\strut@
    \setboxz@h{\@lign$\m@th\displaystyle{##}$}%
    \ifmeasuring@\savefieldlength@\fi
    \set@field
    \hfil
    \tabskip\alignsep@
}
\let\cmeasure@\measure@
\patchcmd\cmeasure@{\divide\@tempcntb\tw@}{}{}{}
\patchcmd\cmeasure@{\divide\@tempcntb\tw@}{}{}{}
\patchcmd\cmeasure@{\ifodd\maxfields@
  \global\advance\maxfields@\@ne
  \fi}{}{}{}    
\newenvironment{calign}
{%
  \let\align@preamble\calign@preamble
  \let\measure@\cmeasure@
  \align
}
{%
  \endalign
}  
\tikzset{
    master/.style={
        execute at end picture={
            \coordinate (lower right) at (current bounding box.south east);
            \coordinate (upper left) at (current bounding box.north west);
        }
    },
    slave/.style={
        execute at end picture={
            \pgfresetboundingbox
            \path (upper left) rectangle (lower right);
        }
    }
}
\tikzset{blob/.style={draw, circle, fill=white, inner sep=1pt, minimum width=15pt, font=\scriptsize, line width=0.7pt}}
\tikzset{greenregion/.style={fill=green, fill opacity=0.3, draw=none}}
\tikzset{redregion/.style={fill=red, fill opacity=0.3, draw=none}}
\tikzset{blueregion/.style={fill=blue, fill opacity=0.3, draw=none}}
\tikzset{yellowregion/.style={fill=yellow, fill opacity=0.5, draw=none}}
\tikzset{cyanregion/.style={fill=cyan, fill opacity=0.3, draw=none}}
\tikzset{orangeregion/.style={fill=orange, fill opacity=0.6, draw=none}}
\tikzset{solidgreenregion/.style={fill=green!30, fill opacity=1, draw=none}}
\tikzset{solidredregion/.style={fill=red!30, fill opacity=1, draw=none}}
\tikzset{solidblueregion/.style={fill=blue!30, fill opacity=1, draw=none}}
\tikzset{solidyellowregion/.style={fill=yellow!30, fill opacity=1, draw=none}}
\tikzset{string/.style={line width=0.7pt}}
\tikzset{zig/.style={decoration={zigzag,segment length=3, amplitude=0.5}}}
\tikzset{bnd/.style={draw,string}}   
\tikzset{projector/.style={circle, draw, font=\scriptsize, inner sep=-5pt, minimum width=0.35cm, string, fill=white}}
\tikzset{dimension/.style={font=\scriptsize, inner sep=1pt}}
\tikzset{arrow data/.style 2 args={
      decoration={
         markings,
         mark=at position #1 with \arrow{#2}},
         postaction=decorate}
}
\tikzset{along path/.style={every path/.style={}, sloped, allow upside down}}
\def\zxnormal {
                \def \zxscale{0.55}
                \def\zxnodescale{0.8}
                \def\vertexscale{0.7}
                \def\zxshift{0.075cm}
                \def\hadscale{0.8}
                \def\trianglescale{1}
                \def\boxscale{1}
                }
\def\zxgreen{white}
\def\zxwhite{white}
\tikzset{front/.style ={node on layer=foreground}}
\tikzset{zx/.style = {string, scale=\zxscale}}
\tikzset{zxnode/.style n args={1}{blob,scale=\zxnodescale,fill=#1,node on layer=foreground}}
\tikzset{box/.style={draw, rectangle, fill=white, inner sep=1pt, minimum width=10pt,minimum height=10pt, font=\scriptsize, line width=0.7pt,scale=\zxnodescale,node on layer=foreground}}
\tikzset{boxvertex/.style={draw, rectangle, fill=white, line width=0.733pt,scale=0.75*\vertexscale}}
\tikzset{bigbox/.style={draw, rectangle, fill=white,  minimum width=\boxscale *18pt,minimum height=\boxscale*8pt, line width=0.7pt,scale=\zxnodescale}}
\newlength{\unitbox}
\tikzset{widebox/.style ={draw,rectangle, fill=white, line width=0.7pt,scale=0.75*\zxnodescale,minimum height=15pt,inner sep=1pt,  minimum width = \unitbox,   anchor=center }}
\tikzset{wideboxm/.style n args={1}{draw,rectangle, fill=white, line width=0.7pt,scale=0.75*\zxnodescale,minimum height=15pt,inner sep=1pt,  minimum width =2\unitbox+#1\unitbox,   anchor=center }}
\tikzset{triangleup/.style n args={1}{draw, shape=isosceles triangle, isosceles triangle stretches, fill=white, line width=0.7pt,scale=0.75*\zxnodescale,minimum height=15pt,inner sep=1pt,  minimum width = #1*\trianglescale cm +0.15*\trianglescale cm,  shape border rotate=90, anchor=south }}
\tikzset{triangledown/.style n args={1}{draw, shape=isosceles triangle, isosceles triangle stretches, fill=white, line width=0.7pt,scale=0.75*\zxnodescale,minimum height=15pt,inner sep=1pt,  minimum width = #1*\trianglescale cm +0.15*\trianglescale cm,  shape border rotate=-90, anchor=north }}
\tikzset{zxvertex/.style n args={1}{draw,fill=#1,circle,line width=0.7pt,scale=0.75*\vertexscale}}
\tikzset{zxdown/.style={yshift=-\zxshift}}
\tikzset{zxup/.style={yshift=\zxshift}}
\newcommand\mult[3]{ 
\draw[string] (#1.center) to [out=up, in=-135] +(0.5*#2,#3) to [out=-45, in=up] +(0.5*#2,-#3);
\node[zxvertex=\zxgreen,zxdown] at ($(#1)+(0.5*#2,#3)$){};
}
\newcommand\unit[2]{ 
\draw[string] (#1.center) to + (0, -#2);
\node[zxvertex=\zxgreen] at ($(#1) +(0,-#2)$){};
}
\newcommand{\Tr}{\mathrm{Tr}}
\renewcommand{\to}[1][]{\ensuremath{\xrightarrow{#1}}}
\theoremstyle{plain} 
\newtheorem{theorem}{Theorem}[section]
\newtheorem{lemma}[theorem]{Lemma}
\newtheorem{corollary}[theorem]{Corollary}          
\newtheorem{proposition}[theorem]{Proposition}
\theoremstyle{definition} 
\newtheorem{definition}[theorem]{Definition}
\newtheorem{remark}[theorem]{Remark}
\newtheorem{notation}[theorem]{Notation}
\newtheorem{example}[theorem]{Example}
\newtheorem*{example*}{Example}
\theoremstyle{remark}  
\newtheoremstyle{special_statement} 
        {\topskip}
        {\topskip}
        {\addtolength{\leftskip}{2.5em} \itshape }
        {}
        {\bfseries}
        {:}
        {.5em}
        {}
\theoremstyle{special_statement}
\DeclareMathOperator{\Hom}{Hom}
\DeclareMathOperator{\End}{End}
\newcommand{\id}{\mathrm{id}}
\newcommand{\Aut}{\ensuremath{\mathrm{Aut}}}
\newcommand{\Chan}{\mathrm{Chan}}
\newcommand{\Rep}{\mathrm{Rep}}
\newcommand{\CP}{\ensuremath{\mathrm{CP}}}
\newcommand{\Hilb}{\ensuremath{\mathrm{Hilb}}}
\newcommand{\Obj}{\ensuremath{\mathrm{Obj}}}
\newcommand{\Corep}{\ensuremath{\mathrm{Corep}}}
\newcommand{\F}{\ensuremath{\mathrm{SSFA}}}
\DeclareFontFamily{OMX}{MnSymbolE}{}
\DeclareSymbolFont{MnLargeSymbols}{OMX}{MnSymbolE}{m}{n}
\DeclareFontShape{OMX}{MnSymbolE}{m}{n}{
    <-6>  MnSymbolE5
   <6-7>  MnSymbolE6
   <7-8>  MnSymbolE7
   <8-9>  MnSymbolE8
   <9-10> MnSymbolE9
  <10-12> MnSymbolE10
  <12->   MnSymbolE12
}{}
\DeclareFontShape{OMX}{MnSymbolE}{b}{n}{
    <-6>  MnSymbolE-Bold5
   <6-7>  MnSymbolE-Bold6
   <7-8>  MnSymbolE-Bold7
   <8-9>  MnSymbolE-Bold8
   <9-10> MnSymbolE-Bold9
  <10-12> MnSymbolE-Bold10
  <12->   MnSymbolE-Bold12
}{}
\let\llangle\@undefined
\let\rrangle\@undefined
\DeclareMathDelimiter{\llangle}{\mathopen}%
                     {MnLargeSymbols}{'164}{MnLargeSymbols}{'164}
\DeclareMathDelimiter{\rrangle}{\mathclose}%
                     {MnLargeSymbols}{'171}{MnLargeSymbols}{'171}
\newcounter{DRcomment}
\newcounter{DVcomment}
\newcounter{BMcomment}
\newcounter{JVcomment}
\newcommand\ignore[1]{}
\tikzstyle{blackdot}=[circle, draw=black, fill=black, inner sep=.5ex, line width=\thickness, node on layer=foreground]
\tikzstyle{whitedot}=[circle, draw=black, fill=white, inner sep=.5ex, line width=\thickness, node on layer=foreground]
\tikzset{proofdiagram/.style={scale=1}}
\newlength\morphismheight
\newlength\minimummorphismwidth
\newlength\stateheight
\title{Entanglement-symmetries of covariant channels}
\author{Dominic Verdon}
\date{School of Mathematics \\ University of Bristol \\[2ex]
16/02/2024}
\begin{document}

\normalsize
\zxnormal
\maketitle

\begin{abstract}
Let $G$ and $G'$ be monoidally equivalent compact quantum groups, and let $H$ be a Hopf-Galois object realising a monoidal equivalence between these groups' representation categories. This monoidal equivalence induces an equivalence $\Chan(G) \to \Chan(G')$, where $\Chan(G)$ is the category whose objects are finite-dimensional $C^*$-algebras with an action of $G$ and whose morphisms are covariant channels. We show that, if the Hopf-Galois object $H$ has a finite-dimensional $*$-representation, then channels related by this equivalence can simulate each other using a finite-dimensional entangled resource. We use this result to calculate the entanglement-assisted capacities of certain quantum channels.
\end{abstract}

\maketitle

\section{Introduction}

\subsection{Overview}

\paragraph{Covariant channels.} Symmetry is an important concept in quantum information theory and in mathematics more generally. Using symmetry, a problem which is apparently difficult can be transformed into a problem which is easier to solve.

For quantum channels, the usual notion of symmetry, associated to a group action, is called \emph{covariance}. The definition for compact groups $G$ is as follows. Recall that an \emph{action} of the group $G$ on an finite-dimensional (f.d.) $C^*$-algebra is a continuous homomorphism $\alpha_A: G \to \Aut(A)$, where $\Aut(A)$ is the automorphism group of $A$; we will write $\alpha_{g,A}:= \alpha_A(g) \in \Aut(A)$. We call an f.d. $C^*$-algebra with an action of $G$ an f.d. \emph{$G$-$C^*$-algebra}. A channel $f: A \to B$ between f.d. $G$-$C^*$-algebras (i.e. a completely positive map preserving the canonical $G$-invariant state) is called \emph{covariant} if it intertwines the actions, i.e. $\alpha_{g,B} (f(x)) = f (\alpha_{g,A} (x))$ for all $g \in G, x \in A$. 

When a channel is covariant, it possesses some structure which  allows us to manipulate it. In quantum information theory, entanglement is an important resource for information-processing tasks. In this work we will exhibit a construction which allows covariant channels to be reversibly transformed into each other using quantum entanglement as a resource. 

\paragraph{From groups to Hopf-Galois objects.}

In order to define this construction precisely, we will first motivate it by considering a trivial construction arising from group elements. For any element $g$ of the group $G$ and any f.d. $G$-$C^*$-algebra $A$, we obtain an automorphism $\alpha_{g,A} \in \Aut(A)$. By definition, conjugation by these automorphisms preserves covariant channels; that is, for any covariant $f: A \to B$, we have 
\begin{align}\label{eq:conjugate}
\alpha_{g,B} \circ f \circ \alpha_{g,A}^{-1} = f.
\end{align}
From this perspective, we obtain no interesting transformations of covariant channels by looking at elements of the symmetry group. 

Category theory now provides some useful insight.  For any compact group $G$, consider its category $\Rep(G)$ of finite-dimensional continuous unitary representations. Let $\Hilb$ be the category of finite-dimensional Hilbert spaces and linear maps. We call a faithful unitary $\mathbb{C}$-linear strong monoidal functor $F: \Rep(G) \to \Hilb$ a \emph{fibre functor}. There is a \emph{canonical} fibre functor which takes every representation to its underlying Hilbert space; however, there are in general other, inequivalent fibre functors.

For any element $g$ of the group $G$, and any object $V$ of the category $\Rep(G)$, we obtain a unitary linear map $u_{g,V} \in \End(F(V))$, by acting with the element $g$ on the representation $V$. The data $\{u_{g,V}\}_{V \in \Obj(\Rep(G))}$ defines a \emph{unitary monoidal natural automorphism} of the canonical fibre functor $F$. In fact, all such automorphisms are obtained in this way. We thereby obtain an isomorphism of groups $G \cong \End_u(F)$, where $\End_u(F)$ is the group of unitary monoidal natural automorphisms of the canonical fibre functor.

This observation suggests a generalisation: rather than consider conjugation arising from elements of the symmetry group $G \cong \End_u(F)$ as in~\eqref{eq:conjugate}, we might consider conjugation by unitary transformations between inequivalent fibre functors on $\Rep(G)$. 

For this, we need a more general notion of unitary transformation. This generalisation can be motivated by considering \emph{Hopf-Galois objects}. For every fibre functor $F'$ on $\Rep(G)$, there exists a certain $*$-algebra $H(F')$, the Hopf-Galois object associated to $F'$~\cite{Bichon1999}. Unitary monoidal natural automorphisms of the canonical fibre functor $F$ --- that is, elements of the group $G$ --- correspond precisely to 1-dimensional $*$-representations of $H(F)$.  

We are therefore led to consider finite-dimensional $*$-representations of the Hopf-Galois object $H(F')$ associated to a general fibre functor $F'$. Equivalently~\cite[Thm. 3.14]{Verdon2020}, one can use ideas from 2-category theory and consider \emph{unitary pseudonatural transformations} $F \to F'$.

\paragraph{Entanglement-symmetries of covariant channels.}

We recall that, by Tannaka-Krein-Woronowicz duality~\cite[Thm. 2.3.2]{Neshveyev2013}, every fibre functor $F'$ on $\Rep(G)$ is associated with a compact quantum group $G_{F'}$. Even if the original group $G$ is an ordinary compact group, the group $G_{F'}$ may be a compact quantum group, so we will work in the more general setting of compact quantum groups. An action of a compact quantum group $G$ on an f.d. $C^*$-algebra, and covariance for channels between f.d. $G$-$C^*$-algebras, can be defined analogously to the case of ordinary compact groups~\cite{Wang1998}.

Let $G$ be a compact quantum group, let $F$ be the canonical fibre functor on $\Rep(G)$, let $F'$ be some other fibre functor, and let $G':=G_{F'}$ be the compact quantum group associated to $F'$. In this work we show that this data gives rise to an equivalence of categories $\Chan(G) \to \Chan(G')$, where $\Chan(G)$ is the category of f.d. $G$-$C^*$-algebras and covariant channels. In particular:
\begin{itemize}
\item For every f.d. $G$-$C^*$-algebra $A$, we obtain a corresponding f.d. $G'$-$C^*$-algebra $A'$.
\item For every $G$-covariant channel $f: A \to B$, we obtain a corresponding $G'$-covariant channel $f': A' \to B'$.
\end{itemize}
We saw in the first paragraph of this introduction that, for every element of the group $G$, we obtain an automorphism $\alpha_{g,A} \in \Aut(A)$ for every f.d. $G$-$C^*$-algebra $A$. Now that we have generalised to arbitrary fibre functors $F'$, we should obtain some generalisation of an automorphism. Let $H(F')$ be the Hopf-Galois object associated to $F'$, and let  $(\pi,H_e)$ be a finite-dimensional $*$-representation of $H(F')$. For every f.d. $G$-$C^*$-algebra $A$ we now obtain, instead of an automorphism, a \emph{quantum bijection} $(\alpha_{A},H_e): A \to A'$. The definition of a quantum bijection is as follows~\cite[Lem. 3.6]{Verdon2022}. Let $\Psi: \mathbb{C} \to B(H_e) \otimes B(H_e)$ be the channel initialising a maximally entangled state 
\begin{equation}\label{eq:entstatepuredef}
\ket{\psi}:= \frac{1}{\sqrt{\dim(H_e)}}\sum_{i=1}^{\dim(H_e)}\ket{i}\otimes \ket{i} \in H_e \otimes H_e.
\end{equation}
A quantum bijection $(\alpha_A,H_e):A \to A'$ is  precisely a pair of channels
\begin{align*}
u_{A}: A \otimes B(H_e) \to A' && v_{A}: A' \otimes B(H_e) \to A
\end{align*}
satisfying the following equations (the diagrams are read from bottom to top):
\begin{align*}
\includegraphics[scale=1,valign=c]{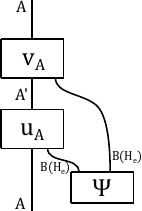}
~~=~~
\includegraphics[scale=1,valign=c]{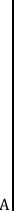}
&&
\includegraphics[scale=1,valign=c]{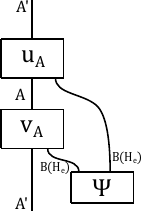}
~~=~~
\includegraphics[scale=1,valign=c]{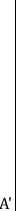}
\\
v_A \circ (u_A \otimes \id_{B(H_e)}) \circ (\id_A \otimes \Psi)~~~~~~ \id_A
&&
u_A \circ (v_A \otimes \id_{B(H_e)}) \circ (\id_{A'} \otimes \Psi)~~~~~~ \id_{A'}
\end{align*}
We can therefore generalise the conjugation action in~\eqref{eq:conjugate} by conjugating, not by automorphisms, but by quantum bijections. In contrast to~\eqref{eq:conjugate}, the action can be non-trivial. In fact, it is non-trivial precisely when the equivalence is nontrivial, since, for every $G$-covariant channel $f: A \to B$, the following equations are obeyed:
\begin{calign}\label{eq:sourcechanentequiv}
\includegraphics[scale=1,valign=c]{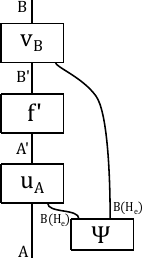}
~~=~~
\includegraphics[scale=1,valign=c]{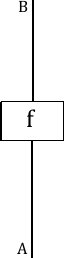}
&&
\includegraphics[scale=1,valign=c]{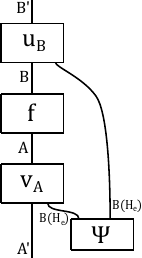}~~=~~
\includegraphics[scale=1,valign=c]{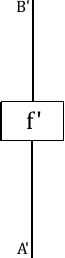}
\end{calign}
We see that the quantum bijections arising from the $*$-representation $(\pi,H_e)$ of $H(F')$ realise the equivalence $\Chan(G) \to \Chan(G')$ specified by the fibre functor $F'$.

We remark that all the maps in~\eqref{eq:sourcechanentequiv} are channels, so the transformation $\Chan(G) \to \Chan(G')$ is physical. We call these transformations arising from finite-dimensional $*$-representations of Hopf-Galois objects for $G$ \emph{entanglement-symmetries of covariant channels}. They are symmetries, not of a single covariant channel alone, but of the whole category $\Chan(G)$.

\paragraph{Entanglement-assisted channel coding.} A natural application of these entanglement-symmetries is to  channel coding. The basic setup is as follows. Alice and Bob can communicate through a channel $N: A \to B$, where $A$ is some input system belonging to Alice, and $B$ is some output system belonging to Bob. They want to communicate through a channel $T: X \to Y$, where $X$ is some input system belonging to Alice and $Y$ is some output system belonging to Bob. A \emph{channel coding scheme for $T$ from $N$} is defined by an encoding channel $E: X \to A$, performed by Alice, and a decoding channel $D: B \to Y$, performed by Bob, such that $D \circ N \circ E = T$:
\begin{calign}\nonumber
&\includegraphics[scale=0.4,valign=c]{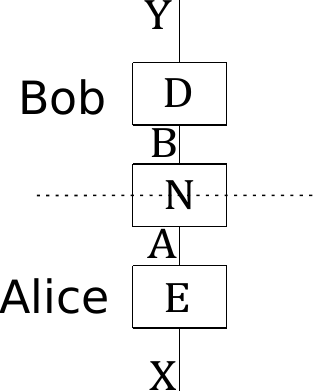}
~~&=~~~
&\includegraphics[scale=0.4,valign=c]{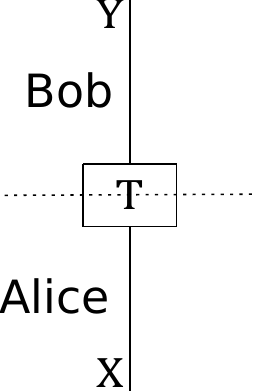}
\\\nonumber
&D \circ N \circ E
~~&~~
&T
\end{calign}
The existence of a channel coding scheme for $T$ from $N$ means that Alice and Bob can communicate through the channel $T$ even through they only share the channel $N$.
\ignore{ More loosely, we can say that the channel $N$ is transformed into the channel $T$ by the encoding and decoding channels $(E,D)$.}

We now introduce entanglement-assistance. Let $H_e$ be a Hilbert space, and let $\Psi: \mathbb{C} \to B(H_e) \otimes B(H_e)$ be the channel initialising the maximally entangled state defined in~\eqref{eq:entstatepuredef}. Operationally, we assume that this state has been initialised beforehand, and that Alice is in possession of one of the entangled systems and Bob the other. To perform her encoding map, Alice may use her half of the maximally entangled state, giving the type $E: X \otimes B(H_e) \to A$ for her encoding operation; to perform his decoding map, Bob may use his half of the maximally entangled state, giving the type $D: B \otimes B(H_e) \to Y$ for his decoding operation. We say that $(E,D,H_e)$ is an \emph{entanglement-assisted channel coding scheme} for $T$ from $N$ if the following equation is obeyed:
\begin{calign}\nonumber
&\includegraphics[scale=1,valign=c]{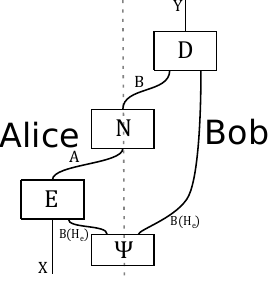}
~~&=~~~
&\includegraphics[scale=1,valign=c]{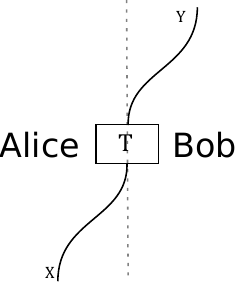}
\\\label{eq:entassistsourcechan}
& D \circ (N \otimes \id_{B(H_e)}) \circ (E \otimes \id_{B(H_e)}) \circ (\id_X \otimes \Psi)
~~&~~
& T
\end{calign}
The existence of an entanglement-assisted channel coding scheme $(E,D,H_e)$ for $T$ from $N$ means that Alice and Bob can communicate through the channel $T$ even though they only share the channel $N$ and and a maximally entangled state of two quantum systems of dimension $\dim(H_e)$. 

It is not hard to show that the relation 
\begin{quote} $(N_1: A_1 \to B_1) \geq (N_2: A_2 \to B_2)$ iff there exists an entanglement-assisted channel coding scheme for $N_2$ from $N_1$
\end{quote} 
defines a partial order on any set of channels. By~\eqref{eq:sourcechanentequiv}, channels related by an entanglement-symmetry of covariant channels are equal under this partial order; that is, they can simulate each other using an entangled resource. 

Knowing that two channels are equivalent in the entanglement-assisted regime is useful. As a first application, in Section~\ref{sec:application} we use entanglement-symmetries of covariant channels to compute the entanglement-assisted capacities of certain quantum channels. This computation is based on the fact that the entanglement-assisted capacities of classical channels are equal to their ordinary capacities. By taking a covariant classical channel and transforming it to a quantum channel using an entanglement-symmetry, we immediately obtain the entanglement-assisted capacity of the resulting quantum channel whenever the capacity of the classical channel is known (since it is identical to that of the classical channel).

\subsection{Structure of the paper}  

In Section~\ref{sec:background} we recall background material on the diagrammatic calculus for rigid $C^*$-tensor categories, unitary pseudonatural transformations between fibre functors (which are precisely finite-dimensional $*$-representations of Hopf-Galois objects), and Tannaka-Krein-Woronowicz duality.

In Section~\ref{sec:algapproach} we set up the Frobenius-algebraic framework for f.d. $G$-$C^*$-algebras that leads to our results. We recall from~\cite{Neshveyev2018} that f.d. $G$-$C^*$-algebras equipped with their canonical faithful $G$-invariant state correspond precisely to \emph{separable standard Frobenius algebras} ($\F$s) in the category $\Rep(G)$. Covariant channels are counit-preserving morphisms between these Frobenius algebras obeying an additional \emph{CP condition}. We put these results in the framework of Tannaka-Krein-Woronowicz duality; given a rigid $C^*$-tensor category $\mathcal{T}$ and a fibre functor $F$ on $\mathcal{T}$ associated to the compact quantum group $G$, one obtains an equivalence between the category of $\F$s in $\mathcal{T}$ and the category $\Chan(G)$. In particular, it follows that the equivalence $\Rep(G) \to \Rep(G')$ corresponding to a fibre functor $F'$ on $\Rep(G)$ with associated compact quantum group $G'$ induces an equivalence $\Chan(G) \to \Chan(G')$.

In Section~\ref{sec:thm} we prove the results about entanglement-symmetries of covariant channels we have outlined in this introduction. 

In Section~\ref{sec:ex} we compute concrete examples of covariant channels related by an entanglement-symmetry, in the case where $\Rep(G)$ is the category $\Hilb(G)$ of $G$-graded Hilbert spaces. We use these examples to compute the entanglement-assisted capacities of some quantum channels, as discussed earlier in this introduction. 

In Appendix~\ref{sec:app} we prove some facts about existence of invariant traces on f.d. $G$-$C^*$-algebras, extending results on $G$-invariant functionals in~\cite{Neshveyev2018}.

\subsection{Related work}

\paragraph{Quantum graph isomorphisms.} In~\cite[\S{}5.1]{Musto2018} the notion of a \emph{quantum graph} $(A,\Gamma)$ was introduced. This is defined by an \emph{adjacency matrix} $\Gamma$ on a f.d. $C^*$-algebra $A$; the adjacency matrix is a certain CP map $\Gamma: A \to A$. A finite-dimensional quantum isomorphism of quantum graphs~\cite[\S{}5.3]{Musto2018} is a quantum bijection intertwining the adjacency matrices. The construction of entanglement-symmetries detailed above holds for general CP maps as well as channels, and it follows from~\cite{Verdon2020,Musto2019} that the adjacency matrices of two quantum isomorphic quantum graphs are related by an entanglement-symmetry of covariant CP maps. (The compact quantum groups associated to the entanglement-symmetry are the quantum automorphism groups of the quantum graphs; c.f.~\cite{Brannan2019}.)

\paragraph{Commuting operator strategies for fully quantum nonlocal games.} One question arising from this work is: can we realise equivalences $\Chan(G) \to \Chan(G')$ corresponding to fibre functors whose associated Hopf-Galois object does not have a finite-dimensional $*$-representation? One would expect the corresponding notion of entanglement-symmetry to arise from infinite-dimensional $*$-representations of Hopf-Galois objects. 

There is a notion of quantum isomorphism of quantum graphs making use of an infinite-dimensional entangled resource, defined in terms of a commuting operator strategy for a nonlocal game satisfying a certain symmetry condition~\cite[\S{}5.4]{Atserias2019}. The connection between these commuting operator quantum isomorphisms and tracial states of Hopf-Galois objects was already made in~\cite[\S{}4.1]{Brannan2019}. In~\cite{Todorov2020,Brannan2020a} a notion of fully quantum nonlocal game was introduced, which can accept inputs and outputs in an arbitrary f.d. $C^*$-algebra. We suspect that entanglement-symmetries making use of an infinite-dimensional entangled resource will be closely connected with commuting operator strategies for fully quantum nonlocal games.

\paragraph{Categorical quantum mechanics.} This work makes extensive use of results obtained in the program of categorical quantum mechanics initiated in~\cite{Abramsky2004}, particularly those relating to the Frobenius-algebraic formulation of finite-dimensional $C^*$-algebras and completely positive maps between them~\cite{Vicary2011,Coecke2016,Heunen2019}. We have also made use of the more general Frobenius-algebraic characterisation of finite-dimensional $C^*$-algebras with a compact quantum group action in~\cite{Neshveyev2018}.  

\subsection{Acknowledgements}

We thank David Reutter and Jamie Vicary for useful discussions. We are grateful to Makoto Yamashita for comments and suggestions regarding invariant traces on f.d. $G$-$C^*$-algebras which led to the inclusion of Appendix~\ref{sec:app}. This work has been funded by the European Research Council (ERC) under the European Union's Horizon 2020 research and innovation programme (grant agreement No. 817581). The work has also been funded by EPSRC (grant number EP/R513179/1).

\ignore{
\section{Notation and conventions}
\label{sec:notation}
All Frobenius algebras considered in this work are dagger, so we omit the qualifier. When referring to Frobenius algebras in some rigid $C^*$-tensor category we write a square-bracketed tuple $[A,m,u]$ where $A$ is the carrier object of the algebra and $m: A \otimes A \to A$ and $u: \mathbbm{1} \to A$ are the multiplication and unit morphisms.

Let $[A_1,m_1,u_1]$, $[A_2,m_2,u_2]$ be separable Frobenius algebras in some rigid $C^*$-tensor category. We show that a morphism $f:A_1 \to A_2$ is CP by writing it as $f: [A_1,m_1,u_1] \to{} [A_2,m_2,u_2]$.

When referring to a unitary corepresentation of a compact quantum group algebra $A_G$ we write a tuple $(V,\rho)$, where $V$ is a Hilbert space and $\rho: V \to V \otimes A_G$ is the corepresentation map. For two unitary corepresentations $(V,\rho_V), (W,\rho_W)$ we show that a linear map $f: V \to W$ is an intertwiner by writing it as $f:(V,\rho_V) \to (W,\rho_W)$.

When referring to a finite-dimensional $G$-$C^*$-algebra for a compact quantum group $G$ of Kac type we write a tuple $([A,m,u],\rho)$ where $[A,m,u]$ is a separable symmetric Frobenius algebra in the category of finite-dimensional Hilbert spaces and linear maps and $\rho: A \to A \otimes A_G$ is an $A_G$-coaction.

Let $([A_1,m_1,u_1],\rho_1)$, $([A_2,m_2,u_2],\rho_2)$ be finite-dimensional $G$-$C^*$-algebras. We show that a channel $f:[A_1,m_1,u_1] \to{} [A_2,m_2,u_2]$ is covariant by writing it as $f:([A_1,m_1,u_1],\rho_1) \to{} ([A_2,m_2,u_2],\rho_2)$.
}

\section{Background}\label{sec:background}
\subsection{Diagrammatic calculus for rigid $C^*$-tensor categories}

A rigid $C^*$-tensor category is in particular a pivotal (in fact, a spherical) dagger category~\cite[\S{}7.3]{Selinger2010} in which the Hom-sets are Banach spaces. For the exact definition, see e.g.~\cite[\S{}2.1]{Neshveyev2013}. We will here only highlight some aspects. 

\paragraph{`Tensor'.}
The adjective `tensor' indicates that a rigid $C^*$-tensor category is monoidal. We use the well-known diagrammatic calculus for monoidal categories~\cite{Heunen2019,Selinger2010}. 

We read diagrams from bottom to top. Objects are drawn as wires, while morphisms are drawn as boxes whose type corresponds to their input and output wires. Composition of morphisms is represented by vertical juxtaposition, while monoidal product is represented by horizontal juxtaposition. For example, two morphisms $f:X\to Y $ and $g: Y \to Z$ can be composed as follows:
\begin{align*}
\includegraphics{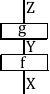}
&&
\includegraphics{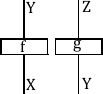} \\
g \circ f: X \to Z 
&&
f \otimes g: X \otimes Y \to Y \otimes Z
\end{align*}
The wire for the monoidal unit $\mathbbm{1}$, and the identity morphism $\id_X$ for any object $X$, are invisible in the diagram. Two diagrams which are planar isotopic represent the same morphism.

\paragraph{`Rigid'.} The adjective `rigid' indicates that a rigid $C^*$-tensor category has duals. Here and throughout when we talk about the duals of a rigid $C^*$-tensor category we mean the standard solutions to the conjugate equations~\cite[Def. 2.2.14]{Neshveyev2013}.
\begin{definition}\label{def:duals}
Let $X$ be an object in a monoidal category. A \emph{dual} $[X^*,\eta,\epsilon]$ for $X$ is:
\begin{itemize}
\item An object $X^*$ (the dual object). To represent dual objects in the graphical calculus, we draw an upward-facing arrow on the $X$-wire and a downward-facing arrow on the $X^*$-wire.
\item Morphisms $\eta_R: \mathbbm{1} \to X^* \otimes X$, $\eta_L: \mathbbm{1} \to X \otimes X^*$, $\epsilon_R: X \otimes X^* \to \mathbbm{1}$, $\epsilon_L: X^* \otimes X \to \mathbbm{1}$, which we draw as cups and caps:
\begin{align*}
\includegraphics[scale=1]{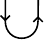}\qquad
&&
\includegraphics[scale=1]{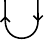}\qquad
&&
\includegraphics[scale=1]{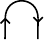}\qquad
&&
\includegraphics[scale=1]{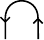}\qquad
\\
\eta_R:\mathbbm{1} \to X^* \otimes X
&&
\eta_L: \mathbbm{1} \to X \otimes X^*
&&
\epsilon_R: X \otimes X^* \to \mathbbm{1}
&&
\epsilon_L: X^* \otimes X \to \mathbbm{1}
\end{align*}
\item These morphisms must satisfy the following \emph{snake equations}: 
\begin{calign}\label{eq:snakes}
\includegraphics[valign=c]{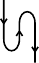}
~~=~~
\includegraphics[valign=c]{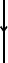}
~~~~~~~~
\includegraphics[valign=c]{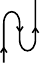}
~~=~~
\includegraphics[valign=c]{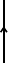}
&&
\includegraphics[valign=c]{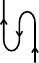}
~~=~~
\includegraphics[valign=c]{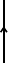}
~~~~~~~~
\includegraphics[valign=c]{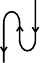}
~~=~~
\includegraphics[valign=c]{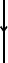}
\end{calign}
\end{itemize}
\end{definition}
\noindent
In a rigid $C^*$-tensor category, the duals allow us to define the dimension of an object and the trace of an endomorphism. Let $X$ be some object and let $f \in \End(X):=\Hom(X,X)$ be a morphism; then the trace of $f$ is defined as follows:
\begin{align}\label{eq:dim}
\includegraphics[valign=c]{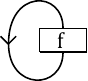}
~~=~~
\includegraphics[valign=c]{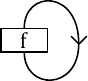}
\end{align}
The fact that the left and right trace are equal here is due to sphericality of a rigid $C^*$-tensor category. The dimension of an object $X$ is defined by $\dim(X):= \Tr(\id_X)$.

\paragraph{`C*'.} The adjective $`C^*{}-$' indicates that a rigid $C^*$-tensor category is a dagger category whose Hom-sets are Banach spaces. Moreover, for any morphism $f: X \to Y$ we have $||f^{\dagger} \circ f|| = ||f||^2$; in particular, for any object $X$ the set $\End(X)$ is a f.d. $C^*$-algebra. 

In the graphical calculus, we give the morphism boxes an offset edge, and represent the dagger of a morphism by reflection in a horizontal axis, preserving the direction of any arrows:
\begin{calign}\label{eq:graphcalcdagger}
\includegraphics[scale=1,valign=c]{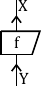}
~~:=~~
\includegraphics[scale=1,valign=c]{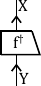}
\end{calign}
\begin{definition}\label{def:isomunitary}
Let $\mathcal{C}$ be a dagger category.  We say that a morphism $\alpha: X \to Y$ is an \emph{isometry} if $\alpha^{\dagger} \circ \alpha = \id_X$. We say that it is a \emph{coisometry} if $ \alpha \circ \alpha^{\dagger}  = \id_Y$. We say that it is \emph{unitary} if it is both an isometry and a coisometry.
\end{definition}
\noindent
In a rigid $C^*$-tensor category, the left cup and cap are the dagger of the right cup and cap (to see that the following equations type check, recall that the arrow directions remain fixed when the diagram is reflected):
\begin{calign}\label{eq:daggerdual}
\includegraphics[scale=1,valign=c]{Figures/svg/2cats/dagdualrcup.pdf}
~~=~~
\left(
\includegraphics[scale=1,valign=c]{Figures/svg/2cats/dagdualrcupflip.pdf}\right)^{\dagger}
&
\includegraphics[scale=1,valign=c]{Figures/svg/2cats/dagduallcap.pdf}
~~=~~
\left(\includegraphics[scale=1,valign=c]{Figures/svg/2cats/dagduallcapflip.pdf}\right)^{\dagger}
\end{calign}

\subsubsection{Example: the category $\Hilb$}
A simple example of a pivotal dagger category is the category $\Hilb$. The objects of the monoidal category $\Hilb$ are finite-dimensional Hilbert spaces, and the morphisms are linear maps between them; composition of morphisms is composition of linear maps. The monoidal product is given on objects by the tensor product of Hilbert spaces, and on morphisms by the tensor product of linear maps; the unit object is the 1-dimensional Hilbert space $\mathbb{C}$.

For any object $H$, its right dual object can be chosen to be $H$ itself; every object is \emph{self-dual}. For this reason we do not generally draw arrows on Hilbert space wires in this work. Indeed, any orthonormal basis $\{\ket{i}\}$ for $H$ defines a cup and cap:
\begin{calign}\label{eq:cupscapsHilb}
\begin{tz}[zx,xscale=-1]
\draw (0,0) to [out=up, in=up, looseness=2.5] (2,0);
\node[dimension, left] at (2.0,0) {$H$};
\node[dimension, right] at (0,0) {$H$};
\node[zxvertex=\zxwhite,zxdown] at (1,1.5){};
\end{tz}
&
\begin{tz}[zx,yscale=-1]
\draw (0,0) to [out=up, in=up, looseness=2.5] (2,0);
\node[dimension, right] at (2.05,0) {$H$};
\node[dimension, left] at (0,0) {$H$};
\node[zxvertex=\zxwhite,zxdown] at (1,1.4){};
\end{tz}\\\nonumber
\ket{i} \otimes \ket{j}\mapsto \delta_{ij}
&
~~1\mapsto \sum_i \ket{i} \otimes \ket{i}
\end{calign}
We draw the cup and cap on $H$ with a white vertex, as in the above diagram. In what follows, we assume that all Hilbert spaces we consider come equipped with some chosen orthonormal basis $\{\ket{i}\}$, defining a self-duality as above. The dagger structure is given by Hermitian adjunction of linear maps. Dimension and trace in the rigid $C^*$-tensor category $\Hilb$ reduce to the usual notions of dimension of a Hilbert space and trace of a linear map.

The category $\Hilb$ is additionally symmetric --- for any $V,W$ there is a \emph{swap map} $\sigma_{V,W}: V \otimes W \to W \otimes V$. We draw this as an intersection:
\begin{align*}
&\includegraphics[scale=1]{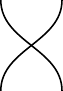} \\
\ket{v} \otimes \ket{w} &\mapsto \ket{w} \otimes \ket{v}
\end{align*}
The four-dimensional calculus allows us to untangle arbitrary diagrams and remove any twists, as exemplified by the following equations:
\begin{calign}
\label{eq:untangling}
\includegraphics[scale=1,valign=c]{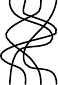}
~~=~~
\includegraphics[scale=1,valign=c]{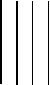}
&
\includegraphics[scale=1,valign=c]{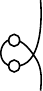}
~~=~~
\includegraphics[scale=1,valign=c]{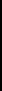}
~~=~~
\includegraphics[scale=1,valign=c]{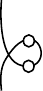}
&
\includegraphics[scale=1,valign=c]{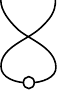}
~~=~~
\includegraphics[scale=1,valign=c]{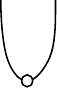}
&
\includegraphics[scale=1,valign=c]{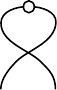}
~~=~~
\includegraphics[scale=1,valign=c]{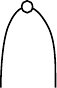}
\end{calign}

\subsection{Fibre functors and unitary pseudonatural transformations}

\subsubsection{Fibre functors}

\begin{definition}\label{def:fibre}
Let $\mathcal{C}$ be a rigid $C^*$-tensor category. We call a faithful unitary $\mathbb{C}$-linear strong monoidal functor $F: \mathcal{C} \to \Hilb$ a \emph{fibre functor}. (Recall that a unitary strong monoidal functor is dagger-preserving, and the multiplicator and unitors associated to the functor are also unitary morphisms in $\Hilb$.)
\end{definition}
\noindent
To depict fibre functors, we use a graphical calculus of \emph{functorial boxes}~\cite{Mellies2006}. In this calculus, we represent the effect of a fibre functor $F: \mathcal{C} \to \Hilb$ by drawing a shaded box around objects and morphisms in $\mathcal{C}$. For example, let $X,Y$ be objects and  $f: X \to Y$ a morphism in $\mathcal{C}$. Then the morphism $F(f): F(X) \to F(Y)$ in $\Hilb$ is represented as:
\begin{calign}\nonumber
\includegraphics[scale=0.8,valign=c]{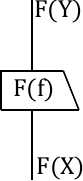}
~~=~~
\includegraphics[scale=0.8,valign=c]{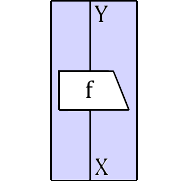}
\end{calign}
We represent the multiplicators $\mu_{X,Y}:F(X) \otimes F(Y) \to F(X \otimes Y)$ and their daggers as follows:
\begin{calign}\nonumber
\includegraphics[scale=.8,valign=c]{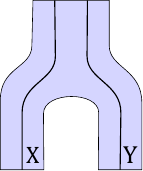}
&
\includegraphics[scale=.7,valign=c]{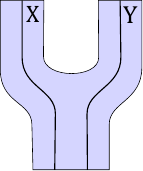}
\\\label{eq:multiplicator}
\mu_{X,Y}: F(X) \otimes  F(Y) \to F(X \otimes Y) & \mu_{X,Y}^{\dagger}: F(X \otimes Y) \to F(X) \otimes F(Y)
\end{calign}
We represent the unitor $\upsilon: \mathbb{C} \to F(\mathbbm{1}_C)$ and its dagger as follows (recall that the monoidal unit is invisible in the diagrammatic calculus):
\begin{calign}\nonumber
\includegraphics[scale=.8,valign=c]{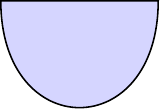}
&
\includegraphics[scale=.8,valign=c]{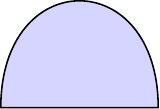} \\\label{eq:unitor}
\upsilon: \mathbb{C} \to F(\mathbbm{1}_C) & \upsilon^{\dagger}: F(\mathbbm{1}_C)  \to \mathbb{C}
\end{calign}

\subsubsection{Unitary pseudonatural transformations}

\begin{definition}\label{def:pntmon}
Let $\mathcal{C}$ be a rigid $C^*$-tensor category and let $F_1,F_2: \mathcal{C} \to \Hilb$ be fibre functors. (We colour the functorial boxes blue and red respectively.) A \emph{unitary pseudonatural transformation} $(\alpha,H): F_1 \to F_2$ is defined by the following data:
\begin{itemize}
\item An Hilbert space $H$ (drawn as a green wire).
\item For every object $X$ of $\mathcal{C}$, a unitary $\alpha_X: F_1(X) \otimes H \to H \otimes F_2(X)$ (drawn as a white vertex):
\begin{calign}\nonumber
\includegraphics[scale=1.2,valign=c]{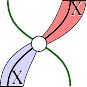}
\end{calign}
\end{itemize}
The unitaries $\{\alpha_X\}$ must satisfy the following conditions:
\begin{itemize}
\item \emph{Naturality.} For every morphism $f:X \to Y$ in $\mathcal{C}$:
\begin{calign}\label{eq:pntmonnat}
\includegraphics[scale=1,valign=c]{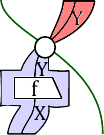}
~~=~~
\includegraphics[scale=1,valign=c]{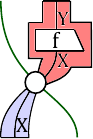}
\end{calign}
\item \emph{Monoidality.} 
\begin{itemize}
\item For every pair of objects  $X,Y$ of $\mathcal{C}$:
\begin{calign}\label{eq:pntmonmon}
\includegraphics[scale=1,valign=c]{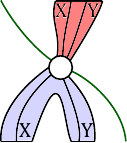}
~~=~~
\includegraphics[scale=1,valign=c]{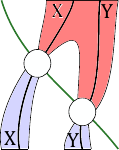}
\end{calign}
\item $\alpha_{\mathbbm{1}}$ is defined as follows:
\begin{calign}\label{eq:pntmonmonunit}
\includegraphics[scale=1,valign=c]{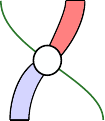}
~~=~~
\includegraphics[scale=1,valign=c]{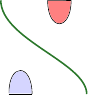}
\end{calign}
\end{itemize}
\end{itemize}
\end{definition}
\begin{remark}\label{rem:monnattransfisupt}
Unitary pseudonatural transformations generalise unitary monoidal natural isomorphisms, which we recover when $H \cong \mathbb{C}$.
\end{remark}
\begin{remark}\label{rem:cleftdimpres}
Since the components of the UPT are unitary, it follows that $\dim(F_1(X)) = \dim(F_2(X))$ for any object $X$ in $\mathcal{C}$.
\end{remark}

\ignore{
\begin{remark}
The diagrammatic calculus shows that pseudonatural transformation is a planar notion. The $H$-wire forms a boundary between two regions of the $\mathcal{D}$-plane, one in the image of $F$ and the other in the image of $G$. By pulling through the $H$-wire, morphisms from $\mathcal{C}$ can move between the two regions~\eqref{eq:pntmonnat}. 
\end{remark}
\noindent
UPTs $(\alpha,H): F_1 \to F_2$ and $(\beta,H'): F_2 \to F_3$ can be composed associatively to obtain a UPT $(\alpha \circ \beta,H \otimes H'): F_1 \to F_3$ whose components $(\alpha \circ \beta)_X$ are as follows (we colour the $H'$-wire orange, and the $F_3$-box brown):
\begin{calign}
\includegraphics[scale=1]{Figures/svg/uptrecap/comptransfs.png}
\end{calign}
}
\begin{definition}[{\cite[Lem. 3.3, Lem. 4.2] {Verdon2020a}}]
Let $(\alpha,H): F_1 \to F_2$ be a UPT. Then the \emph{dual} of $\alpha$ is a UPT $(\alpha^*,H): F_2 \to F_1$ whose components $(\alpha^*)_X$ are defined as follows:
\begin{calign}\label{eq:dualpnt}
\includegraphics[valign=c]{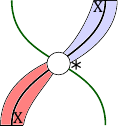}
:=
\includegraphics[scale=1.1,valign=c]{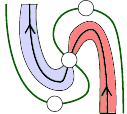}
=
\includegraphics[scale=1,valign=c]{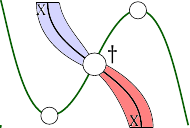}
\end{calign}
\end{definition}
\noindent
It can be shown~\cite[Thm. 4.4, Cor. 5.6]{Verdon2020a} that $\alpha$ and $\alpha^*$ obey the following pull-through equations:
\begin{calign}\nonumber
\includegraphics[valign=c]{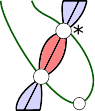}
~~=~~
\includegraphics[valign=c]{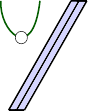}
&
\includegraphics[valign=c]{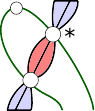}
~~=~~
\includegraphics[valign=c]{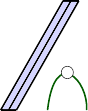}
\\\label{eq:cupcapmodsdualpnt}
\includegraphics[valign=c]{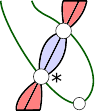}
~~=~~
\includegraphics[valign=c]{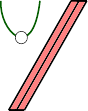}
&
\includegraphics[valign=c]{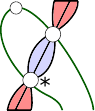}
~~=~~
\includegraphics[valign=c]{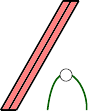}
\end{calign}
\begin{remark}UPTs from the canonical fibre functor on a rigid $C^*$-tensor category, as well as the fibre functors themselves, were classified in terms of Hopf-Galois theory in~\cite[Thm. 3.14]{Verdon2020}.
\end{remark}

\subsection{Compact quantum groups and Tannaka-Krein-Woronowicz reconstruction}

\subsubsection{Compact quantum groups}

A quite general example of a rigid $C^*$-tensor category is the category of representations of a \emph{compact quantum group}. 
\begin{definition}[{\cite[Definition 1.6.1]{Neshveyev2013}}]
A unital $*$-algebra $A$ equipped with a unital $*$-homomorphism $\Delta: A \to A \otimes A$ (the \emph{comultiplication}) is called a \emph{Hopf-$*$-algebra} if $(\Delta \otimes \id_A) \circ \Delta = (\id_A \otimes \Delta) \circ \Delta$ and there exist linear maps $\epsilon: A \to \mathbb{C}$ (the \emph{counit}) and $S: A \to A$ (the \emph{antipode}) such that 
\begin{calign}\nonumber
(\epsilon \otimes \id_A) \circ \Delta = \id_A = (\id_A \otimes \epsilon) \circ \Delta 
&&
m \circ (S \otimes \id_A) \circ \Delta = u \circ \epsilon = m \circ (\id_A \otimes S) \circ \Delta
\end{calign}
where $m: A \otimes A \to A$ is the multiplication and $u: \mathbb{C} \to A$ the unit of the algebra $A$.
\end{definition}
\ignore{
\begin{example}
Let $G$ be a compact group, and let $C(G)$ be the $\mathbb{C}^*$-algebra of continuous complex functions on $G$. For any finite-dimensional $*$-representation $\pi: G \to B(H)$, each matrix entry $\pi_{ij}: G \to \mathbbm{C}$ is an element of $C(G)$. Let $\mathbb{C}[G] \subset C(G)$ be the linear span of matrix coefficients of finite-dimensional representations of $G$. This naturally possesses the structure of a Hopf $*$-algebra (see~\cite{}).
\end{example}
}
\begin{definition}[{\cite[Definition 1.6.5]{Neshveyev2013}}]\label{def:unitarycorep}
A \emph{corepresentation} $(H,\delta)$ of a Hopf $*$-algebra $A$ on a vector space $H$ is defined by a linear map $\delta: H \to H \otimes A$ such that 
\begin{calign}\nonumber
(\delta \otimes \id_A) \circ \delta = (\id_H \otimes \Delta) \circ \delta 
&&
(\id_H \otimes \epsilon) \circ  \delta = \id_H
\end{calign}
Suppose that $H$ is a Hilbert space and let $\{\ket{i}\}$ be an orthonormal basis. We call the elements $\delta_{ij}:=(\bra{i} \otimes \id_A) \circ \delta (\ket{j})$ of $A$ the \emph{matrix coefficients} of $(H,\delta)$.
We then say that the corepresentation $(H,\delta)$ is \emph{unitary} if the matrix coefficients obey the following equation:
$$
\sum_{j} \delta_{ji}^* \delta_{jk} = 1_A = \sum_{j} \delta_{ij} \delta_{kj}^*
$$
For $(H_1,\delta_1), (H_2,\delta_2)$ corepresentations of a Hopf-$*$-algebra $A$, we say that a linear map $f: H_1 \to H_2$ is an \emph{intertwiner} $f:(H_1,\delta_1) \to (H_2,\delta_2)$ if $\delta_2 \circ f = (f \otimes \id_A) \circ \delta_1$.
\end{definition}
\begin{definition}[{c.f. \cite[Theorem 1.6.7]{Neshveyev2013}}]
We say that a Hopf-$*$-algebra is a \emph{compact quantum group algebra} if it is generated as an algebra by matrix coefficients of its finite-dimensional unitary corepresentations. 
\end{definition}
\noindent
The category $\Corep(A)$ whose objects are finite-dimensional unitary corepresentations of a compact quantum group algebra $A$ and whose morphisms are intertwiners is a rigid $C^*$-tensor category~\cite[\S{}1.6]{Neshveyev2013}.
\begin{example}\label{ex:compactgroupcoreps}
Let $G$ be a compact group. A continuous function $f:G \to \mathbb{C}$ is called \emph{representative} if there exists a continuous representation $\pi$ of $G$ on some finite-dimensional complex vector space $H$ and elements $v \in H$, $\phi \in H^*$ such that $f(x) = \phi(\pi(x)v)$ for all $x \in G$. The algebra of representative functions $R(G)$ has the structure of a Hopf-$*$-algebra where the comultiplication, counit and antipode are defined as follows:
\begin{align*}
\Delta(f)(x,y) = f(xy) 
&&
\epsilon(f) = f(e)
&&
(S(f))(x) = f(x^{-1})
\end{align*}
The Hopf-$*$-algebra $R(G)$ is generated by the coefficients of its finite-dimensional unitary corepresentations; in other words, it is a compact quantum group algebra. 

There is an isomorphism of categories between the category of continuous unitary representations of $G$ and the category of unitary corepresentations of $R(G)$ (see~\cite[Example 3.1.5]{Timmerman2008} for the bijection on objects; the extension to morphisms is obvious). Roughly, for any continuous unitary representation $\pi$ of $G$ on $H$ we may identify $\End(H)$ with $M_n(\mathbb{C})$ by some orthonormal basis $\{\ket{i}\}$ and consider $\pi$ as an $R(G)$-valued matrix $(\pi_{ij})$; this defines a unitary corepresentation $\pi^c: H \to H \otimes R(G)$ whose matrix coefficients are $(\bra{i} \otimes \id_{R(G)}) \pi^c \ket{j} = \pi_{ij}$.
\end{example}
\noindent
In the spirit of Example~\ref{ex:compactgroupcoreps}, a compact quantum group algebra $A$ is considered as the algebra of matrix coefficents of continuous unitary representations of some `compact quantum group' $G$, such that $\Rep(G) = \Corep(A)$. We will refer to compact quantum groups $G$, and write $\Rep(G)$, in order to emphasise the similarity with the representation theory of compact groups. However, the compact quantum group algebra is the concrete object, at least as far as this work is concerned. We write $A_G$ for the compact quantum group algebra corresponding to a compact quantum group $G$. 

\begin{definition}\label{def:quantumdimension}
For a f.d. unitary representation $X$ of a compact quantum group $G$ we call the dimension~\eqref{eq:dim} of $X$ as an object in the rigid $C^*$-tensor category $\Rep(G)$ its \emph{quantum dimension} $\dim_q(X)$. We call the dimension of the underlying Hilbert space of $X$ the \emph{ordinary dimension} $\dim_c(X)$. \ignore{ We say that a compact quantum group $G$ is \emph{of Kac type} if $\dim_q(X) = \dim_c(X)$ for every f.d. unitary representation $X$ of $G$. }
\end{definition}

\subsubsection{Tannaka-Krein-Woronowicz duality}
\label{sec:tannaka}

The category $\Rep(G)$ has a \emph{canonical} fibre functor which takes each corepresentation to its underlying Hilbert space and each intertwiner to its underlying linear map.
\begin{theorem}[{\cite[Theorem 2.3.2]{Neshveyev2013}}]\label{thm:tannaka}
Let $\mathcal{C}$ be a rigid $C^*$-tensor category and let $U: \mathcal{C} \to \Hilb$ be a fibre functor.
There exists a compact quantum group $G$ and a unitary monoidal equivalence $E: \mathcal{C} \to \Rep(G)$ such that the following diagram of functors commutes\footnote{In~\cite[Theorem 2.3.2]{Neshveyev2013} it is only stated that the diagram commutes up to unitary monoidal natural isomorphism. In fact for the construction given in~\cite[\S{}2.3]{Neshveyev2013}, it commutes on-the-nose.}, where $F: \Rep(G) \to \Hilb$ is the canonical fibre functor:
\begin{equation}\label{diag:tannaka}
\begin{tikzcd}
\mathcal{C} \arrow[r, "E"] \arrow[rd, "U"] & \Rep(G) \arrow[d,"F"] \\
& \Hilb
\end{tikzcd}
\end{equation}
\end{theorem}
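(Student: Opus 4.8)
The plan is to reconstruct the compact quantum group algebra $A_G$ directly from the fibre functor $U$ as its algebra of matrix coefficients, and then to identify $\mathcal{C}$ with $\Corep(A_G)$. Concretely, I would build $A_G$ as the coend $\int^{X} \overline{U(X)} \otimes U(X)$, presented as the span of formal matrix coefficients $u^{X}_{\eta,\xi}$ (one for each object $X$ of $\mathcal{C}$, each $\xi \in U(X)$ and each $\eta \in \overline{U(X)}$) modulo the dinaturality relations $u^{Y}_{\eta, U(f)\xi} = u^{X}_{\overline{U(f)}\eta, \xi}$ for morphisms $f : X \to Y$. Using semisimplicity of the rigid $C^*$-tensor category $\mathcal{C}$, this coend reduces to the vector-space direct sum $\bigoplus_{X \in \mathrm{Irr}(\mathcal{C})} \overline{U(X)} \otimes U(X)$ over representatives of the isomorphism classes of simple objects, which is the decomposition underlying the Peter--Weyl theory of $A_G$.

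Next I would equip $A_G$ with its Hopf-$*$-structure, transporting everything from $U$ and the rigid structure of $\mathcal{C}$. The multiplication $u^{X} \cdot u^{Y} = u^{X \otimes Y}$ is defined using the multiplicators $\mu_{X,Y}$, with unit coming from the unitor $\upsilon$; the comultiplication is $\Delta(u^{X}_{\eta,\xi}) = \sum_{k} u^{X}_{\eta, e_k} \otimes u^{X}_{e^{k}, \xi}$ for an orthonormal basis $\{e_k\}$ of $U(X)$, and the counit is evaluation $\epsilon(u^{X}_{\eta,\xi}) = \langle \eta, \xi \rangle$. The $*$-operation and the antipode are supplied by the conjugate object $\overline{X}$ together with the standard solutions to the conjugate equations, using unitarity of $U$. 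Verifying the Hopf-$*$-algebra axioms and the unitarity condition on corepresentations is then a diagrammatic bookkeeping exercise: coassociativity and counitality follow immediately from the definitions, the antipode and $*$-axioms from the snake equations~\eqref{eq:snakes}, and unitarity from the fact that $U$ preserves daggers.

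The equivalence $E : \mathcal{C} \to \Corep(A_G)$ is then defined on objects by $E(X) = (U(X), \delta_X)$ with $\delta_X(\xi) = \sum_k e_k \otimes u^{X}_{e^{k}, \xi}$, and on morphisms by $E(f) = U(f)$; monoidality and unitarity of $E$ are inherited from the multiplicators and the daggers of $U$. Because the canonical fibre functor $F$ sends a corepresentation to its underlying Hilbert space and an intertwiner to its underlying linear map, we get $F(E(X)) = U(X)$ and $F(E(f)) = U(f)$ literally, so that $F \circ E = U$ holds on the nose rather than merely up to unitary monoidal isomorphism; this is exactly the strengthening noted in the footnote.

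The main obstacle is proving that $E$ is an equivalence of categories. Full faithfulness reduces to showing that every intertwiner between the reconstructed corepresentations $E(X)$ and $E(Y)$ is of the form $U(f)$ for a unique $f : X \to Y$; here faithfulness of $U$ gives injectivity, while surjectivity uses the semisimple $C^*$-structure to express intertwiners in terms of morphisms of $\mathcal{C}$. Essential surjectivity --- that every finite-dimensional unitary corepresentation of $A_G$ is isomorphic to some $E(X)$ --- is the genuinely hard step: one must show that the simple objects $E(X)$, for $X$ simple, exhaust the irreducible corepresentations of $A_G$. This is where the block decomposition of $A_G$ above, together with Woronowicz's orthogonality relations for the Haar state, does the real work, guaranteeing that no new irreducibles appear beyond those coming from $\mathcal{C}$.
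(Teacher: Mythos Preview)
The paper does not actually prove this theorem: it is quoted verbatim from~\cite[Theorem~2.3.2]{Neshveyev2013} and no proof is given. What the paper does after the statement is not a proof but an extraction of the specific data it needs later --- namely the corepresentations $\rho_{U,V}$ on $U(V)$ and their naturality~\eqref{eq:rhonat} and monoidality~\eqref{eq:rhomon},~\eqref{eq:rhomonunit} properties, together with the observation that for the concrete construction in~\cite[\S{}2.3]{Neshveyev2013} the triangle~\eqref{diag:tannaka} commutes strictly.

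Your proposal is a correct outline of the standard coend reconstruction, and it is essentially the proof given in the cited reference: build $A_G$ as $\bigoplus_{X \in \mathrm{Irr}(\mathcal{C})} \overline{U(X)} \otimes U(X)$, transport the Hopf-$*$-structure from the rigid $C^*$-tensor data of $\mathcal{C}$ via $U$, define $E(X)=(U(X),\delta_X)$, and establish that $E$ is a unitary monoidal equivalence. Your identification of essential surjectivity as the substantive step, handled via the Peter--Weyl block decomposition and orthogonality relations, is accurate. Since $F\circ E = U$ holds by construction on both objects and morphisms, the strict commutativity claim in the footnote is indeed immediate from this approach. There is nothing to compare here beyond noting that you have reproduced the argument the paper is citing rather than one the paper itself supplies.
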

\noindent
In particular, for any object $V$ of $\mathcal{C}$ one can define  a unitary corepresentation $\rho_{U,V}: U(V) \to U(V) \otimes A_G$. These unitary corepresentations $\{\rho_{U,V}\}_{V \in \Obj(\mathcal{C})}$ have the following properties:
\begin{itemize}
\item \emph{Naturality.} For any $f: V \to W$ in $\mathcal{C}$, the morphism $U(f): (U(V),\rho_{U,V}) \to (U(W),\rho_{U,W})$ is an intertwiner, i.e.:
\begin{equation}\label{eq:rhonat}
\rho_{U,W} \circ U(f) = (U(f) \otimes \id_{A_G}) \circ \rho_{U,V}
\end{equation}
\item \emph{Monoidality.} We have the following equations, where the black vertex represents the multiplication and $u_{A_G}$ is the unit of the Hopf-$*$-algebra $A_G$:
\begin{itemize}
\item For any objects $V,W$ in $\mathcal{C}$:
\begin{calign}\label{eq:rhomon}
\includegraphics[scale=.6,valign=c]{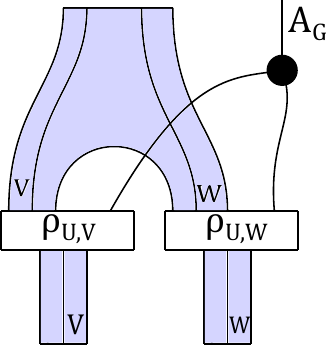}
~~=~~
\includegraphics[scale=.6,valign=c]{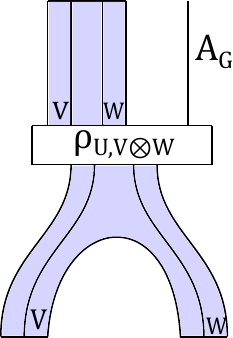}
\end{calign}
\item For the monoidal unit object $\mathbbm{1}$ in $\mathcal{C}$:
\begin{equation}\label{eq:rhomonunit}
\rho_{U,\mathbbm{1}} = \id_{U(\mathbbm{1})} \otimes u_{A_G}
\end{equation}
\ignore{
\begin{calign}
\includegraphics[scale=.6,valign=c]{Figures/svg/tannaka/tannakacomodmon21.png}
~~=~~
\includegraphics[scale=.6,valign=c]{Figures/svg/tannaka/tannakacomodmon22.png}
\end{calign}
}
\end{itemize}
\end{itemize}
The unitary monoidal equivalence $E: \mathcal{C} \to \Rep(G)$ is then defined on objects by $X \mapsto (U(X),\rho_{U,X})$ and on morphisms by $f \mapsto U(f)$. It is clear that the diagram~\eqref{diag:tannaka} commutes on-the-nose. 

\section{Frobenius algebras and f.d. $G$-$C^*$-algebras}\label{sec:algapproach}

In this section we will reformulate finite-dimensional $G$-$C^*$-algebra theory in terms of Frobenius algebras. Essentially everything in this section is well-known~\cite{Kock2003,Vicary2011,Bischoff2015,Coecke2016,Neshveyev2018,Heunen2019}; we have just brought it together in one place. To our knowledge the formulation of f.d. $G$-$C^*$-algebra theory in terms of Tannaka duality in Section~\ref{sec:gcstaralg} is new, but this is a very straightforward consequence of the results in~\cite[\S{}2.3]{Neshveyev2018}.

\subsection{Frobenius algebras}\label{sec:frobrecap}
First, we recall basic definitions regarding Frobenius algebras.  
\begin{definition}\label{def:Frobenius}
Let $\mathcal{C}$ be a rigid $C^*$-tensor category. An \emph{algebra} $[A,m,u]$ in $\mathcal{C}$ is an object $A$ with multiplication and unit morphisms, depicted as follows:%
\begin{calign}\nonumber
\begin{tz}[zx,master]
\coordinate (A) at (0,0);
\draw (0.75,1) to (0.75,2);
\mult{A}{1.5}{1}
\end{tz}
&
\begin{tz}[zx,slave]
\coordinate (A) at (0.75,2);
\unit{A}{1}
\end{tz}
\\[0pt]\nonumber
m:A\otimes A \to A& u: \mathbbm{1} \to A 
\end{calign}\hspace{-0.2cm}
These morphisms satisfy the following associativity and unitality equations:
\begin{calign}\label{eq:assocandunitality}
\includegraphics[scale=1,valign=c]{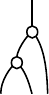}
~~=~~
\includegraphics[scale=1,valign=c]{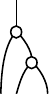}
&
\includegraphics[scale=1,valign=c]{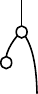}
~~=~~
\includegraphics[scale=1,valign=c]{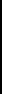}
~~=~~
\includegraphics[scale=1,valign=c]{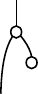}
\end{calign}
Analogously, a \textit{coalgebra} $[A,\delta,\epsilon]$ is an object $A$ with a  comultiplication $\delta: A \to A\otimes A$ and a counit $\epsilon:A\to \mathbb{C}$ obeying the following coassociativity and counitality equations:
\begin{calign}\label{eq:coassocandcounitality}
\includegraphics[scale=1,valign=c]{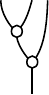}
~~=~~
\includegraphics[scale=1,valign=c]{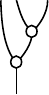}
&
\includegraphics[scale=1,valign=c]{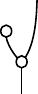}
~~=~~
\includegraphics[scale=1,valign=c]{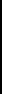}
~~=~~
\includegraphics[scale=1,valign=c]{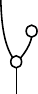}
\end{calign}
The dagger of an algebra $[A,m,u]$ is a coalgebra $[A,m^{\dagger},u^{\dagger}]$.  A algebra $[A,m,u]$ in $\mathcal{C}$ is called \textit{Frobenius} if the algebra and adjoint coalgebra structures are related by the following \emph{Frobenius equation}:
\begin{calign}\label{eq:Frobenius}
\includegraphics[scale=1,valign=c]{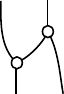}
~~=~~
\includegraphics[scale=1,valign=c]{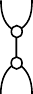}
~~=~~ 
\includegraphics[scale=1,valign=c]{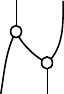}
\end{calign}
\end{definition}
\begin{remark}
Frobenius algebras are canonically self-dual. Indeed, it is easy to see that for any Frobenius algebra the following cup and cap fulfil the snake equations~\eqref{eq:snakes}:
\begin{calign}\label{eq:cupcapfrob}
\includegraphics[scale=1,valign=c]{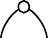}
~~:=~~
\includegraphics[scale=1,valign=c]{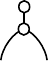}
&
\includegraphics[scale=1,valign=c]{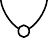}
~~:=~~
\includegraphics[scale=1,valign=c]{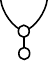}
\end{calign}
We say that the Frobenius algebra is \emph{standard} if, for any $f \in \End(A)$, we have the following equality:
\begin{align*}
\includegraphics[scale=1,valign=c]{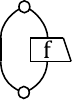}
~~=~~
\includegraphics[scale=1,valign=c]{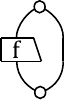}
\end{align*}
\end{remark}
\noindent
A Frobenius algebra is \emph{separable}\footnote{Here we have followed the definition of separability of a Frobenius algebra common in the categorical quantum mechanics literature under the name of \emph{specialness} (e.g.~\cite{Heunen2013,Heunen2019,Coecke2013,Coecke2016,Vicary2011}), namely $mm^{\dagger} = \id$.) 
} if the following equation is satisfied:
\begin{calign}\label{eq:frobspecial}
\includegraphics[scale=1,valign=c]{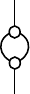}
~~=~~ 
\includegraphics[scale=1,valign=c]{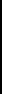}
\end{calign}
If $\mathcal{C}$ is additionally symmetric (for instance, $\mathcal{C}=\Hilb$) a Frobenius algebra is \emph{symmetric} or furthermore \emph{commutative} if the following equations are satisfied, where symmetry is the left equation and commutativity the right:
\begin{calign}\label{eq:frobsymm}
\includegraphics[scale=1,valign=c]{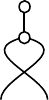}
~~=~~ 
\includegraphics[scale=1,valign=c]{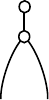}
&
\includegraphics[scale=1,valign=c]{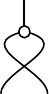}
~~=~~ 
\includegraphics[scale=1,valign=c]{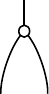}
\end{calign}
\begin{definition}\label{def:starhomcohom}
Let $[A,m_A,u_A],[B,m_B,u_B]$ be Frobenius algebras in a rigid $C^*$-tensor category. We say that a morphism $f: A \to B$ is a \emph{$*$-homomorphism} precisely when it satisfies the following equations:
\begin{calign}\label{eq:homo}
\includegraphics[scale=1,valign=c]{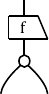}
~~=~~
\includegraphics[scale=1,valign=c]{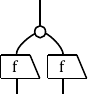}
&
\includegraphics[scale=1,valign=c]{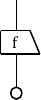}
~~=~~
\includegraphics[scale=1,valign=c]{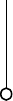}
&
\includegraphics[scale=1,valign=c]{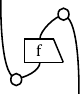}
~~=~~
\includegraphics[scale=1,valign=c]{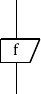}
\end{calign}
We say that a morphism $f: A \to B$ is a \emph{$*$-cohomomorphism} precisely when it satisfies the following equations:
\begin{calign}\label{eq:cohomo}
\includegraphics[scale=1,valign=c]{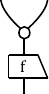}
~~=~~
\includegraphics[scale=1,valign=c]{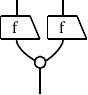}
&
\includegraphics[scale=1,valign=c]{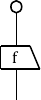}
~~=~~
\includegraphics[scale=1,valign=c]{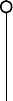}
&
\includegraphics[scale=1,valign=c]{Figures/svg/frobalgs/starhom31right.pdf}
~~=~~
\includegraphics[scale=1,valign=c]{Figures/svg/frobalgs/starhom32.pdf}
\end{calign}
A \emph{unitary $*$-isomorphism} $[A,m_A,u_A] \to{} [B,m_B,u_B]$ is a unitary $*$-homomorphism or $*$-cohomomorphism. (It is easy to check that unitarity of $f$ means that either of~\eqref{eq:homo} or~\eqref{eq:cohomo} implies the other.)

There is a weaker relevant notion of isomorphism: an  \emph{$*$-isomorphism} of Frobenius algebras is an invertible $*$-homomorphism. A $*$-isomorphism of Frobenius algebras can equivalently be defined as an invertible morphism $f: A \to B$ obeying the first two equations of~\eqref{eq:homo} and such that $f^{\dagger}\circ f$ is a left $A$-module map (see~\cite[Lem. 2.3]{Neshveyev2013}).

A $*$-isomorphism of Frobenius algebras is a unitary $*$-isomorphism precisely when it preserves the counit (the second equation of~\eqref{eq:cohomo}).
\end{definition}
\noindent
In a symmetric rigid $C^*$-tensor category such as $\Hilb$, we may define the tensor product of Frobenius algebras. 
\begin{definition}\label{def:tensorprodfrob}
Let $[A,m_A,u_A]$ and $[B,m_B,u_B]$ be Frobenius algebras in a symmetric monoidal dagger category. Then their \emph{tensor product} is a Frobenius algebra $[A \otimes B,m_{A \otimes B},u_{A \otimes B}]$, where $m_{A \otimes B}$ and $u_{A \otimes B}$ are defined as follows:
\begin{calign}
\includegraphics[scale=1,valign=c]{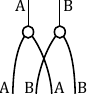}
&&
\includegraphics[scale=1,valign=c]{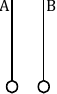}
\end{calign}
It is straightforward to show that the tensor product preserves separability and symmetry.
\end{definition}

\subsection{Finite-dimensional $C^*$-algebras and channels}

We will now recall that Frobenius algebras in the category $\Hilb$ correspond precisely to finite-dimensional $C^*$-algebras equipped with a faithful positive linear functional. For one direction, we observe that any finite-dimensional $C^*$-algebra $A$ equipped with a faithful positive linear functional $\phi: A \to \mathbb{C}$ may be given the structure of a Hilbert space using the inner product $\braket{x|y} = \phi(x^*y)$. We thereby obtain an algebra $[A,m,u]$ in $\Hilb$. In fact, this algebra is Frobenius. 
\begin{lemma}
The algebra $[A,m,u]$ in $\Hilb$ corresponding to a finite-dimensional $C^*$-algebra $A$ with faithful positive linear functional $\phi: A \to \mathbb{C}$ is Frobenius. Moreover, we have $\phi = u^{\dagger}$ and the involution may be expressed as follows:
\begin{calign}\label{eq:frobinvol}
\includegraphics[scale=.6,valign=c]{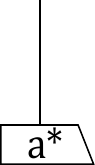}
~~:=~~
\includegraphics[scale=.6,valign=c]{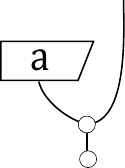}
\end{calign}
\end{lemma}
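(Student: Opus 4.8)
The plan is to establish the three assertions in the order in which they depend on one another: first that $\phi = u^{\dagger}$, then the Frobenius equation~\eqref{eq:Frobenius}, and finally the involution formula~\eqref{eq:frobinvol}. Throughout I work with the inner product $\braket{x|y} = \phi(x^*y)$ defining the Hilbert space structure, and the only nontrivial input is the elementary identity that left multiplication $L_p\colon z \mapsto pz$ has adjoint $L_p^{\dagger} = L_{p^*}$, which is immediate from $\braket{pz|y} = \phi(z^*p^*y) = \braket{z|p^*y}$. Everything else is bookkeeping with this identity.

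\emph{The counit.} Since $u(1) = 1_A$, for every $x \in A$ the adjoint relation gives $u^{\dagger}(x) = \braket{u(1)|x} = \phi(1_A^* x) = \phi(x)$, so $\phi = u^{\dagger}$. Consequently $[A,m^{\dagger},u^{\dagger}]$ is automatically a coalgebra, its coassociativity and counitality being the daggers of the algebra axioms~\eqref{eq:assocandunitality}, so only the Frobenius equation remains to be checked.

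\emph{The Frobenius equation.} The equation~\eqref{eq:Frobenius} asserts the two identities $(\id_A \otimes m)\circ(m^{\dagger}\otimes \id_A) = m^{\dagger}\circ m$ and $m^{\dagger}\circ m = (m \otimes \id_A)\circ(\id_A \otimes m^{\dagger})$; taking daggers interchanges them, so it suffices to prove the first. Writing $R_b\colon z \mapsto zb$ for right multiplication and pairing both sides against an arbitrary $p \otimes q$, the first identity reduces (via the defining relation $\sum \braket{p|a_{(1)}}\braket{q|a_{(2)}} = \braket{pq|a}$ of $m^{\dagger}$) to the single statement that each $R_b^{\dagger}$ is a left $A$-module map, i.e. $R_b^{\dagger}(pq) = p\,R_b^{\dagger}(q)$. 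This I would verify in two lines using $L_p^{\dagger} = L_{p^*}$:
\[
\braket{R_b^{\dagger}(pq)|y} = \braket{pq|yb} = \braket{q|p^*yb} = \braket{R_b^{\dagger}q|p^*y} = \braket{pR_b^{\dagger}q|y}.
\]
Note that no positivity beyond faithfulness is needed for this step; faithfulness enters only to guarantee that the bilinear form $(x,y)\mapsto\phi(xy)$ is non-degenerate (if $\phi(xy)=0$ for all $y$, take $y=x^*$ to force $x=0$), which is what will make the Frobenius pairing invertible below.

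\emph{The involution.} Finally I would read off~\eqref{eq:frobinvol} by comparing two expressions for the inner product. On one hand $\braket{x|y} = \phi(x^*y) = u^{\dagger}(m(x^*\otimes y))$ is the Frobenius pairing $\cap_F := u^{\dagger}\circ m$ applied to $x^*\otimes y$; on the other hand, writing $\overline{x}$ for the Hilbert-space conjugate of $x$ (the antilinear operation obtained by bending $x^{\dagger}$ around the white cup/cap of~\eqref{eq:cupscapsHilb}), we have $\braket{x|y} = \cap_H(\overline{x}\otimes y)$. Equating these for all $y$ and inverting the non-degenerate Frobenius pairing by the snake equations satisfied by the Frobenius cup and cap~\eqref{eq:cupcapfrob} isolates $x^*$ as the displayed composite of the white Hilbert vertex with the Frobenius (co)multiplication. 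The Frobenius equation is the only step with genuine content, and its difficulty lies entirely in tracking the involution when passing between $m$ and $m^{\dagger}$ in the possibly non-tracial case — a difficulty dissolved by $L_p^{\dagger}=L_{p^*}$. The one remaining place where care is needed is that the antilinearity of $x\mapsto x^*$ must enter through the white cup on the correct leg; this placement is the step I would verify most carefully.
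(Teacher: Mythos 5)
Your proof is correct, and it reaches the lemma by a genuinely different organization than the paper's. The paper proves the involution formula~\eqref{eq:frobinvol} \emph{first}, directly from $\braket{ab|1}=\phi(b^*a^*)=\braket{b|a^*}$ holding for all $b$, and then threads the involution through a diagrammatic proof of the Frobenius equation that inserts a resolution of the identity $\sum_k \ket{k}\bra{k}$ and invokes the graphical identities~\eqref{eq:frobpfsnake}--\eqref{eq:frobpfmult} (the middle one being exactly your $L_p^{\dagger}=L_{p^*}$); the check $\phi=u^{\dagger}$ comes last. You invert this order: Frobenius is proved first, basis-free, via the dagger-symmetry observation that only one of the two identities needs checking, plus the crisp reduction --- by pairing against $p\otimes q$ --- to the single statement that $R_b^{\dagger}$ is a left $A$-module map, which your four-step chain verifies correctly; only then do you derive~\eqref{eq:frobinvol} by comparing the Frobenius pairing $u^{\dagger}\circ m$ with the Hilb pairing and applying a snake equation. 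Both proofs run on the same engine, the adjoint identity $L_p^{\dagger}=L_{p^*}$. What your route buys: the Frobenius step is isolated as a purely $*$-algebraic bimodularity statement about $m^{\dagger}$, with no basis insertion and no prior involution formula. What the paper's route buys: its derivation of~\eqref{eq:frobinvol} is independent of the Frobenius property, whereas yours leans on the snake equations for the Frobenius cup and cap~\eqref{eq:cupcapfrob}, which are only available once Frobenius is established --- your ordering makes this dependence consistent, and you rightly flag the placement of the antilinear white cup as the delicate point. One small caveat on your aside about positivity: positivity of $\phi$ is still needed for $\braket{x|y}=\phi(x^*y)$ to be an inner product at all, so it is only \emph{within} the Frobenius manipulations, after the Hilbert space structure is fixed, that faithfulness alone suffices.
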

\begin{proof}
We first observe that the involution takes the form~\eqref{eq:frobinvol}. Indeed, we observe that for any $a,b \in A$ we have $\braket{ab | 1} = \phi((ab)^*) = \phi(b^*a^*)= \braket{b|a^*}$. Depicting this graphically, we obtain the following equation for any $a \in A$:
\begin{calign}\nonumber
\includegraphics[scale=.6,valign=c]{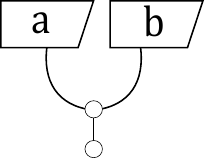}~~=~~\includegraphics[scale=.6]{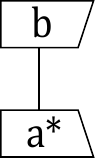}\qquad \qquad \qquad \forall~ b \in A
\end{calign} 
Since the equation holds for all $b \in A$, it follows that the involution takes the form~\eqref{eq:frobinvol}.

We now show that $A$ is a Frobenius algebra. We observe the following implication of the equality $\braket{1|b^*a}=\phi(b^*a)= \braket{b|a}$:
\begin{calign}\label{eq:frobpfsnake}
\includegraphics[scale=.6,valign=c]{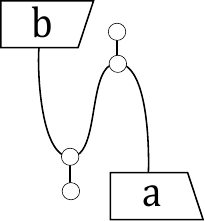}
~~=~~
\includegraphics[scale=.6,valign=c]{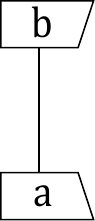}
\qquad \forall~a,b\in A
&&
\Rightarrow~~~
\includegraphics[scale=.6,valign=c]{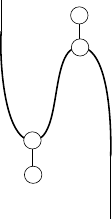}
~~=~~
\includegraphics[scale=.6,valign=c]{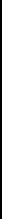}
\end{calign}
We also have the following equation, following from the equation $\braket{c|a^*b} = \phi(c^*a^*b) = \phi((ac)^*b) = \braket{ac|b}$:
\begin{calign}\label{eq:frobpfcomult}
\includegraphics[scale=.6,valign=c]{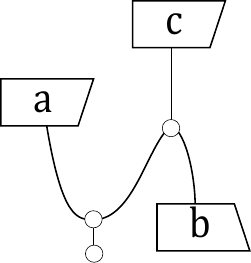}
~~=~~
\includegraphics[scale=.6,valign=c]{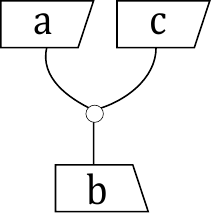}
\qquad \forall~a,b,c\in A
&&
\Rightarrow~~~
\includegraphics[scale=.6,valign=c]{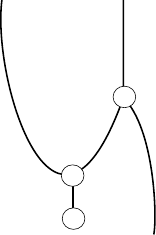}
~~=~~
\includegraphics[scale=.6,valign=c]{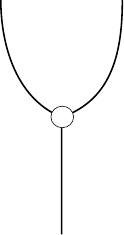}
\end{calign}
From this we obtain the following additional equation:
\begin{calign}\label{eq:frobpfmult}
\includegraphics[scale=.6,valign=c]{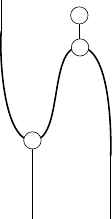}
~~=~~
\includegraphics[scale=.6,valign=c]{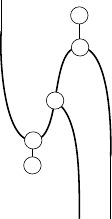}
~~=~~
\includegraphics[scale=.6,valign=c]{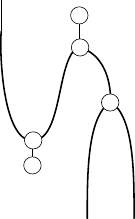}
~~=~~
\includegraphics[scale=.6,valign=c]{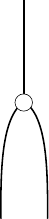}
\end{calign}
Then for any elements $\ket{a},\ket{b},\ket{c},\ket{d} \in A$ and some orthonormal basis $\{\ket{k}\}$ of $A$:
\begin{calign}\nonumber
\includegraphics[scale=.6,valign=c]{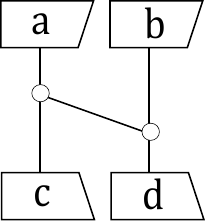}
~~=~~
\sum_{k}
\includegraphics[scale=.6,valign=c]{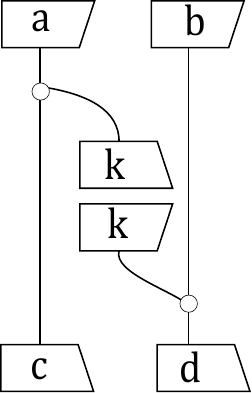}
~~=~~\sum_{k}
\includegraphics[scale=.6,valign=c]{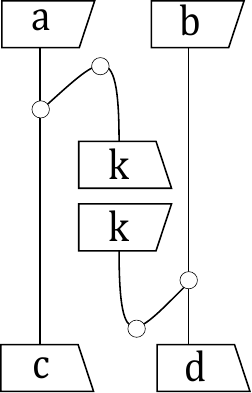}
~~=~~
\includegraphics[scale=.6,valign=c]{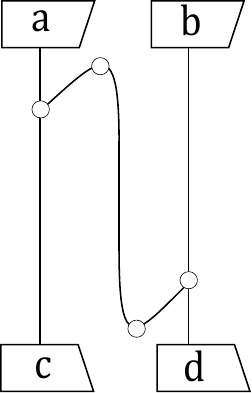}
~~=~~
\includegraphics[scale=.6,valign=c]{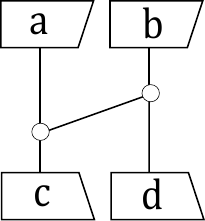}
\end{calign}
Here the first equality is by insertion of a resolution of the identity $\sum_k \ket{k}\bra{k}$; the second equality is by~\eqref{eq:frobpfcomult} and~\eqref{eq:frobpfmult}; the third equality is by removal of the resolution of the identity; and the fourth equality is by the dagger of~\eqref{eq:frobpfsnake}. The Frobenius equation is therefore shown.

Finally, we must show that $\phi = u^{\dagger}$. For this we need only observe that for all $a \in A$, $\braket{1|a} = \phi(1^*a) = \phi(a)$.
\end{proof}
\noindent
The converse also holds.
\begin{proposition}[{\cite[Lem. 2.2]{Neshveyev2018}}]\label{prop:frobtocstar}
Let $[A,m,u]$ be a Frobenius algebra in $\Hilb$. Then there is a unique involution on $A$, defined as in~\eqref{eq:frobinvol}, such that $A$ becomes a f.d. $C^*$-algebra with faithful positive linear functional $\phi=u^{\dagger}: A \to \mathbb{C}$ such that $\braket{x|y} = \phi(x^{*}y)$.
\end{proposition}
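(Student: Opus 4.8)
The plan is to take the involution defined by~\eqref{eq:frobinvol} as given and verify, in turn, that it makes $A$ a $*$-algebra, that the stated inner-product formula holds, and finally that this $*$-algebra is a $C^*$-algebra via its left regular representation. First I would record the defining property of the involution. Reading off~\eqref{eq:frobinvol} --- which combines the Frobenius cup and cap of~\eqref{eq:cupcapfrob} with the complex conjugation supplied by the self-duality of $\Hilb$ --- one obtains directly that $\braket{x|y} = \phi(x^*y)$, where $\phi := u^{\dagger}$; here I use that $\phi(z) = \braket{u(1)|z}$, which is immediate from the definition of the adjoint $u^{\dagger}$. Next I would check the $*$-algebra axioms. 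Antilinearity of $*$ is immediate from the conjugation in $\Hilb$, while the identities $(x^*)^* = x$, $u(1)^* = u(1)$ and $(xy)^* = y^*x^*$ are diagrammatic manipulations: involutivity uses the snake equations~\eqref{eq:snakes} for the Frobenius self-duality (a double bend, with the conjugation applied twice, is the identity), and antimultiplicativity uses the Frobenius equation~\eqref{eq:Frobenius} together with associativity and unitality~\eqref{eq:assocandunitality}. These verifications are routine but must be done carefully; antimultiplicativity is where the full Frobenius relation --- rather than mere associativity --- is essential, as it is what lets one slide a multiplication past the bending wires appearing in~\eqref{eq:frobinvol}.

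The conceptual heart of the proof is the passage to a $C^*$-algebra. I would introduce the left regular representation $L \colon A \to \End(A)$, sending $a$ to left multiplication $L_a \colon x \mapsto m(a \otimes x)$; this is a unital algebra homomorphism by associativity. Combining the previous two steps shows it is a $*$-representation for the Hilbert-space adjoint on $\End(A)$:
\[
\braket{L_a x \,|\, y} = \phi\bigl((ax)^* y\bigr) = \phi\bigl(x^* a^* y\bigr) = \braket{x \,|\, L_{a^*} y},
\]
so that $(L_a)^{\dagger} = L_{a^*}$. Moreover $L$ is injective, since $L_a(u(1)) = a$; hence $L$ realizes the $*$-algebra $A$ as a $*$-subalgebra of the $C^*$-algebra $\End(A) = B(A)$. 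As $A$ is finite-dimensional this image is norm-closed, hence a $C^*$-algebra, and transporting the operator norm $\|a\| := \|L_a\|$ along the isomorphism $A \cong L(A)$ makes $A$ a $C^*$-algebra. Positivity and faithfulness of $\phi$ then drop out of the inner-product formula: $\phi(x^*x) = \braket{x|x} \geq 0$, with equality if and only if $x = 0$.

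Finally, for uniqueness, suppose $\star$ were a second involution with $\braket{x|y} = \phi(x^{\star} y)$ for all $x,y$. Subtracting the two formulas gives $\phi\bigl((x^* - x^{\star})y\bigr) = 0$ for all $y$. But the pairing $(a,b) \mapsto \phi(ab) = u^{\dagger} \circ m\,(a \otimes b)$ is precisely the Frobenius cap of~\eqref{eq:cupcapfrob}, which is nondegenerate since it is one leg of a duality satisfying the snake equations~\eqref{eq:snakes}; hence $x^* - x^{\star} = 0$ and the involution is unique. I expect the main obstacle to be the diagrammatic verification of $(xy)^* = y^*x^*$ and $(x^*)^* = x$, where the Frobenius relation and snake equations must be deployed with care; once $A$ is established as a $*$-algebra satisfying $\braket{x|y} = \phi(x^*y)$, the remaining passage to a $C^*$-algebra through the left regular representation is entirely standard.
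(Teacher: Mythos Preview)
The paper does not give its own proof of this proposition; it simply cites \cite[Lem.~2.2]{Neshveyev2018}. Your argument is therefore not competing with anything in the paper, and on its own terms it is correct and complete: defining the involution by~\eqref{eq:frobinvol}, verifying the $*$-algebra axioms diagrammatically, recovering the GNS-type identity $\braket{x|y}=\phi(x^*y)$, and then embedding $A$ into $B(A)$ via the left regular representation is the standard route (and is essentially what the cited reference does as well). The uniqueness argument via nondegeneracy of the Frobenius pairing is also fine.

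One small remark: your claim that $\braket{x|y}=\phi(x^*y)$ can be ``read off directly'' from~\eqref{eq:frobinvol} deserves one more sentence. What makes it immediate is that~\eqref{eq:frobinvol} is precisely the composite of the Frobenius cap $u^\dagger\circ m$ with the Hilbert-space cup~\eqref{eq:cupscapsHilb}, and the latter is by construction the antilinear map $\ket{x}\mapsto\bra{x}$ bent upright; unwinding this gives $u^\dagger m(x^*\otimes y)=\braket{x|y}$ in one step. This is exactly the content of the snake manipulation~\eqref{eq:frobpfsnake} in the preceding lemma, run in reverse. Making that explicit would remove the only point where a reader might pause.
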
 
\noindent
We have therefore identified Frobenius algebras in $\Hilb$ with f.d. $C^*$-algebras equipped with a faithful positive linear functional. The maps between Frobenius algebras we have already considered in Definition~\ref{def:starhomcohom} correspond to their equivalent notions for finite-dimensional $C^*$-algebras. 
\begin{proposition}\label{prop:starhomscstaralg}
In $\Hilb$:
\begin{itemize} 
\item A $*$-homomorphism of Frobenius algebras is a $*$-homomorphism of f.d. $C^*$-algebras. 
\item A $*$-isomorphism of Frobenius algebras is a $*$-isomorphism of f.d. $C^*$-algebras.
\item A unitary $*$-isomorphism of Frobenius algebras is a $*$-isomorphism of f.d. $C^*$-algebras preserving the faithful positive linear functional $u^{\dagger}$.
\end{itemize}
\end{proposition}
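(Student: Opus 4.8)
The plan is to read each diagrammatic condition in Definition~\ref{def:starhomcohom} as an elementwise statement about the underlying $C^*$-algebras, using Proposition~\ref{prop:frobtocstar} to identify $m$ with the multiplication, $u$ with the unit, $u^\dagger$ with the faithful positive functional $\phi$, and the involution with the map \eqref{eq:frobinvol}. I would treat the three bullets in turn, in each case also checking the reverse implication so that the two notions genuinely coincide.

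For the first bullet, suppose $f\colon A \to B$ satisfies the three equations \eqref{eq:homo}. The first equation is $f \circ m_A = m_B \circ (f \otimes f)$, i.e.\ $f(ab) = f(a)f(b)$, and the second is $f \circ u_A = u_B$, i.e.\ $f(1_A) = 1_B$; together these say $f$ is a unital algebra homomorphism. It then remains to match the third equation of \eqref{eq:homo} with the involution-preservation condition $f(a^*) = f(a)^*$, where on each side the involution is the intrinsic one \eqref{eq:frobinvol} built from the Frobenius cup and cap \eqref{eq:cupcapfrob}. I expect this to be the only delicate point: the involution is constructed from the Frobenius cup, which secretly contains the comultiplication $m^\dagger$, and $f$ is \emph{not} assumed to intertwine $m_A^\dagger$ with $m_B^\dagger$, so one cannot simply slide $f$ through the cup. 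I would therefore argue at the level of the inner product, as in the proof of the lemma preceding Proposition~\ref{prop:frobtocstar}: the involution on $A$ is characterised by $\braket{ax|y}_A = \braket{x|a^*y}_A$ for all $x,y\in A$, and similarly by $\braket{f(a)y|z}_B = \braket{y|f(a)^*z}_B$ in $B$. Writing out the third equation of \eqref{eq:homo} with the involutions given by \eqref{eq:frobinvol}, and comparing against these characterisations (invoking the multiplicativity and unitality of $f$ just established where needed), reduces the third equation to the identity $f(a^*) = f(a)^*$. Reading all three equivalences backwards shows conversely that every $*$-homomorphism of $C^*$-algebras satisfies \eqref{eq:homo}.

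The second bullet is then a formality. A $*$-isomorphism of Frobenius algebras is by definition an invertible $*$-homomorphism, so by the first bullet it is a bijective $*$-homomorphism of $C^*$-algebras; since the inverse of a bijective $*$-homomorphism of $C^*$-algebras is again a $*$-homomorphism, $f$ is a $*$-isomorphism of $C^*$-algebras, and the converse is identical.

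For the third bullet I would invoke the characterisation already recorded in Definition~\ref{def:starhomcohom}, namely that a unitary $*$-isomorphism is precisely a $*$-isomorphism preserving the counit. Since the counit is $u^\dagger = \phi$ by Proposition~\ref{prop:frobtocstar}, preservation of the counit, $u_B^\dagger \circ f = u_A^\dagger$, is exactly preservation of the faithful positive functional, $\phi_B \circ f = \phi_A$, so the claim follows from the second bullet. One can also see this directly: unitarity $f^\dagger = f^{-1}$ gives $\braket{f(x)|f(y)}_B = \braket{x|y}_A$, that is $\phi_B(f(x)^* f(y)) = \phi_A(x^* y)$; putting $x = 1_A$ and using $f(1_A) = 1_B$ yields $\phi_B \circ f = \phi_A$, while conversely any $*$-isomorphism preserving $\phi$ preserves the inner product and is hence unitary. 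The main obstacle throughout is the single step flagged in the first bullet; everything else is a direct transcription of diagrams into the language of $C^*$-algebras.
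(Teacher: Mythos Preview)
The paper states Proposition~\ref{prop:starhomscstaralg} without proof, so there is nothing to compare against; your outline is essentially what any proof of this proposition must do, and it is correct.

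One remark on the ``delicate'' step. You are right that $f$ is not assumed to intertwine $m^\dagger$, so one cannot naively slide $f$ through the Frobenius cup. But the detour through the inner-product characterisation of the involution is not really necessary either: the third equation of~\eqref{eq:homo} is shared with~\eqref{eq:cohomo} and is precisely the statement that bending $f$ with the Frobenius cup of $A$ and cap of $B$ yields $f^\dagger$ (equivalently, that the ``conjugate'' of $f$ built from the Frobenius self-dualities equals $f$). Composing both sides of that equation with the involution~\eqref{eq:frobinvol} of $A$ on the input and unpacking via the snake equations gives $f(a^*) = f(a)^*$ directly, using only the first two equations of~\eqref{eq:homo} you have already established. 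So the step you flag as the main obstacle is in fact a short diagrammatic computation once one writes down what the third equation says; your inner-product argument also works, it is just a little more roundabout.

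The second and third bullets are exactly as you say: the second is immediate from the first, and for the third the cleanest route is the one you give first, quoting the last sentence of Definition~\ref{def:starhomcohom} (a $*$-isomorphism is unitary iff it preserves the counit) together with $u^\dagger = \phi$ from Proposition~\ref{prop:frobtocstar}. Your direct unitarity argument is also fine and makes the converse transparent.
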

\noindent
We have shown a correspondence between Frobenius algebras in $\Hilb$ and pairs of an f.d. $C^*$-algebras and a faithful positive linear functional. We now consider a Frobenius-algebraic formulation of channels between these f.d. $C^*$-algebras. We will consider a channel to be a completely positive linear map preserving the chosen functional. Of course, it is common in physics to restrict to the case where the functional is a trace (see Remark~\ref{rem:trace}).
\begin{definition}
Let $A,B$ be Frobenius algebras in a rigid $C^*$-tensor category $\mathcal{C}$ and let $f: A \to B$ be a morphism. We say that $f$ satisfies the \emph{CP condition}, or is a \emph{CP morphism}, when there exists an object $X$ and a morphism $g: A \otimes B \to X$ such that the following equation holds:
\begin{calign}\label{eq:cpcond}
\includegraphics[scale=.8,valign=c]{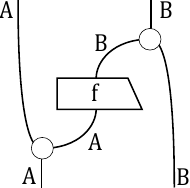}
~~=~~
\includegraphics[scale=.8,valign=c]{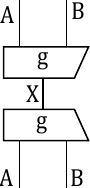}
\end{calign}
In other words, the morphism on the LHS of~\eqref{eq:cpcond} is a positive element of the f.d. $C^*$-algebra $\End(A \otimes B)$.
\end{definition}
\begin{theorem}[{\cite[Thm. 7.18]{Heunen2019}}]\label{thm:cpmap}
Let $A,B$ be Frobenius algebras in $\Hilb$. A morphism $N: A \to B$ is completely positive as a map between f.d. $C^*$-algebras iff it satisfies the CP condition. 
\end{theorem}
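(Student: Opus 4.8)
The plan is to reduce the stated biconditional to the classical theorem of Choi, using the dictionary between Frobenius algebras and $C^*$-algebras from Proposition~\ref{prop:frobtocstar}. The starting observation is that the CP condition is, by construction, nothing but the assertion that the left-hand side of~\eqref{eq:cpcond} is a \emph{positive} element of the $C^*$-algebra $\End(A \otimes B)$: in any $C^*$-algebra an element is positive precisely when it has the form $g^{\dagger} \circ g$, and here $g$ ranges over morphisms $A \otimes B \to X$ for arbitrary objects $X$. (The set $\End(A \otimes B)$ is a $C^*$-algebra because $\Hilb$ is a rigid $C^*$-tensor category.) Thus the whole theorem amounts to proving that this distinguished endomorphism --- which I will call the \emph{Choi element} of $N$ --- is positive if and only if $N$ is completely positive.

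The second step is to identify the Choi element concretely. Unfolding the left-hand side of~\eqref{eq:cpcond} using the Frobenius (co)multiplication on $A$ and the self-duality cup~\eqref{eq:cupcapfrob}, one sees that it is the abstract Choi matrix of $N$: the endomorphism of $A \otimes B$ whose blocks record the values $N$ takes on a resolution of the identity of $A$, bent into $A \otimes B$ via the cup. The only wrinkle is that the cup and cap~\eqref{eq:cupcapfrob} are built from $m^{\dagger}$ and $u$, and hence depend on the faithful positive functionals $\phi_A = u_A^{\dagger}$ and $\phi_B = u_B^{\dagger}$ that the Frobenius structures encode (Proposition~\ref{prop:frobtocstar}). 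Consequently the Choi element is not the textbook Choi matrix but a version of it weighted by the densities of $\phi_A$ and $\phi_B$.

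The third step is to invoke Choi's theorem: the textbook Choi matrix $\sum_{ij} e_{ij} \otimes N(e_{ij})$ is positive if and only if $N$ is completely positive. To bridge between the weighted and unweighted versions I would use that $\phi_A$ and $\phi_B$, being faithful and positive, have positive invertible densities $\rho_A, \rho_B$; passing from the textbook Choi matrix to the Frobenius-weighted Choi element amounts to multiplying by (square roots of) $\rho_A$ and $\rho_B$ on the appropriate tensor legs. Since the map $Y \mapsto S^{\dagger} Y S$ for an invertible $S$ preserves positivity in a $C^*$-algebra, the weighted element is positive exactly when the unweighted one is. Chaining the equivalences --- CP condition $\Leftrightarrow$ Choi element positive $\Leftrightarrow$ textbook Choi matrix positive $\Leftrightarrow$ $N$ completely positive --- yields the theorem.

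The main obstacle is the bookkeeping of the third step: making precise, in the diagrammatic language and for general (non-tracial) functionals, exactly which density appears on which leg of $A \otimes B$, and verifying that it enters as a two-sided multiplication by a positive invertible element rather than in some positivity-destroying way. Concretely I would discharge this by decomposing $A \cong \bigoplus_i M_{n_i}(\mathbb{C})$ and $B \cong \bigoplus_j M_{m_j}(\mathbb{C})$ via Proposition~\ref{prop:frobtocstar}, expressing the involution through~\eqref{eq:frobinvol}, and computing the cup~\eqref{eq:cupcapfrob} block by block; faithfulness of $\phi_A, \phi_B$ guarantees the densities are invertible on every block. Everything else --- the factorization characterization of positive elements and Choi's theorem --- is standard.
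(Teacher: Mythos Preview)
The paper does not actually supply a proof of this theorem: it is stated with a citation to~\cite[Thm.~7.18]{Heunen2019} and left at that. So there is no in-paper argument to compare against.

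Your outline is sound and is, in fact, essentially the argument given in the cited reference. The three moves you identify --- (i) recognising the left-hand side of~\eqref{eq:cpcond} as a positivity assertion in the $C^*$-algebra $\End(A\otimes B)$, (ii) unfolding that element as a density-weighted Choi matrix via the Frobenius cup~\eqref{eq:cupcapfrob}, and (iii) reducing to Choi's theorem after conjugating away the invertible positive densities of $\phi_A,\phi_B$ --- are exactly the standard route. Your caveat about the bookkeeping in step (iii) is well placed: one does need to check that the densities appear as a two-sided conjugation $Y\mapsto S^\dagger Y S$ by an invertible $S$ (so that positivity is preserved in both directions), and the block decomposition $A\cong\bigoplus_i M_{n_i}(\mathbb{C})$ is the clean way to verify this. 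One small sharpening: Choi's theorem in its textbook form is stated for maps $M_n(\mathbb{C})\to M_m(\mathbb{C})$, so the reduction to matrix blocks is doing real work, not just simplifying the density computation --- you should make explicit that complete positivity of $N$ is equivalent to complete positivity of each block component $N_{ji}:M_{n_i}(\mathbb{C})\to M_{m_j}(\mathbb{C})$, and that the Choi element decomposes accordingly.
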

\noindent
We therefore obtain the following definition of a channel. 
\begin{definition}\label{def:chan}
Let $A, B$ be Frobenius algebras and $f: A \to B$ a morphism in a rigid $C^*$-tensor category. We say that $f$ is a \emph{channel} when it satisfies the CP condition~\eqref{eq:cpcond} and preserves the counit:
\begin{calign}\label{eq:tpcond}
\includegraphics[scale=.8,valign=c]{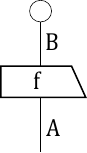}
~~=~~
\includegraphics[scale=.8,valign=c]{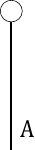}
\end{calign}
\end{definition}
\begin{corollary}\label{cor:chan}
Let $A,B$ be Frobenius algebras in $\Hilb$. A linear map $N: A \to B$ is a completely positive functional-preserving map between f.d. $C^*$-algebras precisely when it is a channel in the sense of Definition~\ref{def:chan}.
\end{corollary}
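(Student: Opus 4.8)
The plan is to decompose the stated biconditional into its two constituent conditions---complete positivity and functional-preservation---and dispatch each by assembling results already in hand. The complete positivity half is immediate: by Theorem~\ref{thm:cpmap}, a morphism $N: A \to B$ in $\Hilb$ is completely positive as a map between the associated $C^*$-algebras if and only if it satisfies the CP condition~\eqref{eq:cpcond}. So nothing further is needed there.

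It then remains only to identify functional-preservation with counit-preservation. By Proposition~\ref{prop:frobtocstar}, the faithful positive linear functional determined by the Frobenius algebra $[A,m_A,u_A]$ (resp. $[B,m_B,u_B]$) is precisely $\phi_A = u_A^{\dagger}$ (resp. $\phi_B = u_B^{\dagger}$). But $u^{\dagger}$ is by definition the counit of the Frobenius algebra (the counit of the adjoint coalgebra $[A,m^{\dagger},u^{\dagger}]$). Hence the counit-preservation equation~\eqref{eq:tpcond}, which reads $u_B^{\dagger} \circ N = u_A^{\dagger}$, is literally the equation $\phi_B \circ N = \phi_A$ asserting that $N$ preserves the functional.

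Combining the two observations, $N$ is a completely positive functional-preserving map between the associated $C^*$-algebras if and only if it satisfies the CP condition~\eqref{eq:cpcond} and preserves the counit~\eqref{eq:tpcond}, which is exactly the definition of a channel (Definition~\ref{def:chan}). The two directions of the biconditional hold simultaneously because each of the two defining conditions has been matched to its $C^*$-algebraic counterpart as an equivalence.

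I do not anticipate any genuine obstacle: the corollary is essentially a bookkeeping assembly of Theorem~\ref{thm:cpmap} and Proposition~\ref{prop:frobtocstar}. The only point demanding a moment's care is the clean identification of the Frobenius counit $u^{\dagger}$ with the functional $\phi$ on each side, but this is precisely the content already recorded in Proposition~\ref{prop:frobtocstar}, so it may simply be cited.
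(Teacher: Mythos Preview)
Your proposal is correct and matches the paper's approach: the corollary is stated without proof in the paper, being an immediate consequence of Theorem~\ref{thm:cpmap} together with the identification $\phi = u^{\dagger}$ from Proposition~\ref{prop:frobtocstar}, exactly as you have spelled out.
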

\noindent
Using this description, we can quickly show that $*$-cohomomorphisms between separable Frobenius algebras are channels. 
\begin{lemma}\label{lem:starcohom}
Let $A,B$ be separable Frobenius algebras in $\Hilb$. Any $*$-cohomomorphism $f: A \to B$ is a channel.
\end{lemma}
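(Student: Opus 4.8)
The plan is to verify the two defining conditions of a channel (Definition~\ref{def:chan}) separately: the counit-preservation condition~\eqref{eq:tpcond} and the CP condition~\eqref{eq:cpcond}. The first is immediate, since counit-preservation is literally one of the three defining equations of a $*$-cohomomorphism in~\eqref{eq:cohomo} (the middle equation, $u_B^\dagger \circ f = u_A^\dagger$); no work is required there. All the content lies in producing, for the CP condition, an object $X$ and a morphism $g: A \otimes B \to X$ such that the left-hand side of~\eqref{eq:cpcond} equals $g^\dagger \circ g$.

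To build $g$, I would start from the left-hand side of~\eqref{eq:cpcond} — the candidate positive element of $\End(A \otimes B)$ assembled from $f$ and the Frobenius (co)multiplications — and apply the comultiplication-preservation equation of~\eqref{eq:cohomo}, i.e.\ $m_B^\dagger \circ f = (f \otimes f) \circ m_A^\dagger$, to \emph{split} the single occurrence of $f$ into two copies meeting at a comultiplication of $A$. The resulting diagram then has $f$ on one side and a horizontally-reflected copy of $f$ on the other; using the involution-preservation equation of~\eqref{eq:cohomo} (shared with~\eqref{eq:homo}) together with the Frobenius self-dual cups and caps~\eqref{eq:cupcapfrob}, I would recognise the reflected copy as $f^\dagger$ up to bending. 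Reading the portion of the diagram above the central comultiplication as a single morphism $g: A \otimes B \to X$ (with $X$ to be read off once the bends are fixed, e.g.\ $X = A$), the whole expression becomes $g^\dagger \circ g$, which is manifestly a positive element of $\End(A \otimes B)$; this establishes~\eqref{eq:cpcond}, and together with counit-preservation it shows that $f$ is a channel in the sense of Definition~\ref{def:chan}.

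The main obstacle is the bookkeeping of the self-duality: one must check that the reflected copy of $f$ produced by the split is \emph{exactly} the dagger $g^\dagger$ of the upper half, rather than a twisted or transposed variant. This is precisely where the hypotheses enter — standardness ensures the involution~\eqref{eq:frobinvol} is compatible with the Frobenius self-dual structure~\eqref{eq:cupcapfrob}, so that horizontal reflection of the diagram really does implement the dagger, while speciality~\eqref{eq:frobspecial} is needed to collapse the loop created when a comultiplication on $A$ is re-multiplied, forcing the two halves of the diagram to match on the nose. Once these are invoked to straighten the wires, reading off $g$ and verifying $g^\dagger \circ g$ is routine. As a conceptual sanity check, the same conclusion follows by noting that taking daggers of the three equations in~\eqref{eq:cohomo} turns $f^\dagger: B \to A$ into a $*$-homomorphism in the sense of Definition~\ref{def:starhomcohom}, hence a completely positive map, so that $f = (f^\dagger)^\dagger$ is CP as the adjoint of a CP map; but the direct graphical construction of $g$ above is self-contained within the tools of this section.
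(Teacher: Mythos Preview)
Your approach is essentially the same as the paper's: counit-preservation is immediate from~\eqref{eq:cohomo}, and for the CP condition you rewrite the left-hand side of~\eqref{eq:cpcond} using the comultiplication-preservation and involution-preservation equations of~\eqref{eq:cohomo}, then collapse a loop via speciality to exhibit the $g^{\dagger}g$ form. The paper carries this out in exactly that order (with an initial application of the Frobenius law and counitality to put the diagram in a shape where $m_B^{\dagger}\circ f$ appears).

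One correction: you invoke \emph{standardness} as a hypothesis ``ensuring the involution is compatible with the Frobenius self-dual structure, so that horizontal reflection really implements the dagger.'' Standardness is neither assumed in the lemma nor used in the proof. In a dagger category the graphical dagger is reflection by convention~\eqref{eq:graphcalcdagger}; what does the work at that step is the third $*$-cohomomorphism equation in~\eqref{eq:cohomo}, which is a condition on $f$, not on the algebras. The only structural hypothesis on $A$ and $B$ actually used is speciality, exactly where you say: to collapse the $m\circ m^{\dagger}$ loop. So drop the appeal to standardness and you match the paper on the nose.
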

\begin{proof}
The functional-preservation condition~\eqref{eq:tpcond} is immediately satisfied, by~\eqref{eq:cohomo}. We show the CP condition:
\begin{calign}\nonumber
\includegraphics[scale=.8,valign=c]{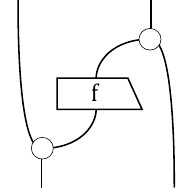}
~~=~~
\includegraphics[scale=.8,valign=c]{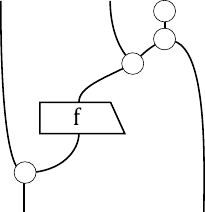}
~~=~~
\includegraphics[scale=.8,valign=c]{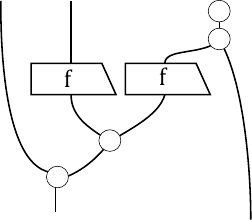}
~~=~~
\includegraphics[scale=.8,valign=c]{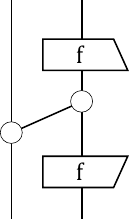}
~~=~~
\includegraphics[scale=.8,valign=c]{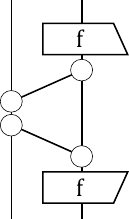}
\end{calign}
Here the first equality is by the Frobenius equation~
\eqref{eq:Frobenius} and counitality; the second equality is by the first $*$-cohomomorphism condition~\eqref{eq:cohomo}; the third equality is by the third $*$-cohomomorphism condition~\eqref{eq:cohomo}; and the final equality is by separability of the Frobenius algebra and~\eqref{eq:Frobenius}.
\end{proof}

\begin{example}[The algebra $B(H)$ and the maximally entangled state]\label{ex:bh}
Let $H$ be a Hilbert space of dimension $d$. 
We define a separable Frobenius algebra on the Hilbert space $H \otimes H$ with the following multiplication $m: (H \otimes H) \otimes (H \otimes H) \to H \otimes H$ and unit $u: \mathbb{C} \to H \otimes H$:
\begin{align*}
\frac{1}{\sqrt{d}}~\includegraphics[scale=1,valign=c]{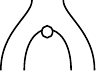}
&&
\sqrt{d}~\includegraphics[scale=1,valign=c]{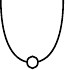}
\end{align*}
By Proposition~\ref{prop:frobtocstar}, this is a finite-dimensional $C^*$-algebra with involution~\eqref{eq:frobinvol} and faithful trace $u^{\dagger}$. We will now show that there is a trace-preserving $*$-isomorphism between this f.d. $C^*$-algebra and the algebra $B(H)$ equipped with the \emph{separable trace} $d \Tr: B(H) \to \mathbb{C}$, where $\Tr$ is the matrix trace. 

We claim that the following linear map is a $*$-isomorphism:
\begin{align}\nonumber
f: B(H) &\to H \otimes H \\\label{eq:endhiso}
X &\mapsto \sqrt{d}(X \otimes \mathbbm{1})(\sum_i \ket{i} \otimes \ket{i}) 
\end{align}
Multiplicativity is seen as follows:
\begin{calign}\nonumber
\sqrt{d}~~
\includegraphics[scale=.8,valign=c]{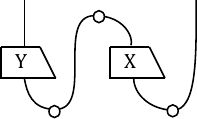}
~~=~~
\sqrt{d}~~
\includegraphics[scale=.8,valign=c]{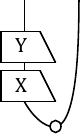}
\end{calign}
That the unit is preserved is clear from the definition. To see that the involution is preserved:
\begin{calign}\nonumber
\sqrt{d}~~
\includegraphics[scale=.8,valign=c]{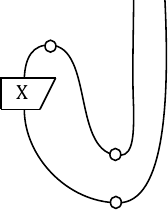}
~~=~~
\sqrt{d}~~
\includegraphics[scale=.8,valign=c]{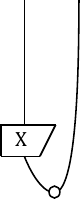}
\end{calign}
To see that it is an injection, observe that any element of $B(H)$ can be recovered from its image:
\begin{calign}\nonumber
\sqrt{d}~~
\includegraphics[scale=.8,valign=c]{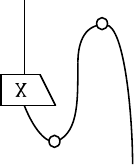}
~~=~~
\sqrt{d}~~
\includegraphics[scale=.8,valign=c]{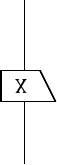}
\end{calign}
To see that it is a surjection, observe that any element of $H \otimes H$ can be written as $\sum_{ij} \lambda_{ij} \ket{i} \otimes \ket{j}$. This is the image of the element $\frac{1}{\sqrt{d}}\sum_{ij}\lambda_{ij} \ket{i}\bra{j}$ of $B(H)$.
To see that the trace is preserved:
\begin{calign}\nonumber
d~~\includegraphics[scale=.8,valign=c]{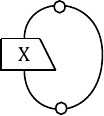}
~~=~~
d \Tr(X)
\end{calign}
We now consider the representation of the maximally entangled state of $H \otimes H$. Let us consider the f.d. $C^*$-algebra $B(H \otimes H)$ with the separable trace, defined as above as a Frobenius algebra on $(H \otimes H) \otimes (H \otimes H)$. Since the maximally entangled state of $H \otimes H$ is just a normalised version of the cup of the self-duality of $H$, the maximally entangled state of $B(H \otimes H)$ has the following form as an element of the Frobenius algebra:
\begin{align*}
\frac{1}{d^2}~
\includegraphics[scale=.8,valign=c]{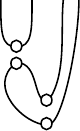}
~~=~~
\frac{1}{d^2}~
\includegraphics[scale=.8,valign=c]{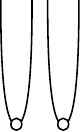}
\end{align*}
Using the canonical unitary $*$-isomorphism $B(H \otimes H) \cong B(H) \otimes B(H)$, we obtain the following expression for the maximally entangled state as an element of the Frobenius algebra $B(H) \otimes B(H)$:
\begin{align}
\label{eq:maxentstate}
\frac{1}{d^2}~~\includegraphics[scale=.8,valign=c]{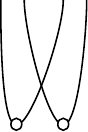}
\end{align}
\end{example}

\subsection{Finite-dimensional $G$-$C^*$-algebras and covariant channels}
\label{sec:gcstaralg}
We now generalise to the case where operations are covariant for the action of a compact quantum group $G$. 
We first recall from~\cite[\S{}2.3]{Neshveyev2018} the following completely general characterisation of a finite-dimensional $G$-$C^*$-algebra $A$ equipped with a $G$-invariant positive linear functional.
\begin{definition}\label{def:generalgcstaralg}
Let $[A,m,u]$ be a Frobenius algebra in $\Hilb$ and let $A_G$ be a compact quantum group algebra. We say that a linear map $\rho: A \to A \otimes A_G$ is a \emph{coaction} of $A_G$ on $A$ if the pair $(A,\rho)$ is a unitary corepresentation (Definition~\ref{def:unitarycorep}) and $m:A \otimes A \to A, u: \mathbb{C} \to A$ are intertwiners, i.e.:
\begin{calign}
\includegraphics[scale=.8,valign=c]{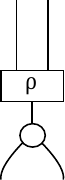}
~~=~~
\includegraphics[scale=.8,valign=c]{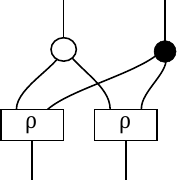}
&
\includegraphics[scale=.8,valign=c]{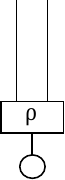}
~~=~~
\includegraphics[scale=.8,valign=c]{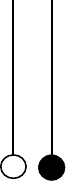}
\end{calign}
Here the black vertex represents the multiplication and unit of the Hopf $*$-algebra $A_G$. We call $([A,m,u],\rho)$ a \emph{f.d. $G$-$C^*$-algebra with functional}. (Here the $G$-invariant faithful positive linear functional is the counit $u^{\dagger}: A \to \mathbb{C}$.)

We say that a linear map between f.d. $G$-$C^*$-algebras with functional is \emph{covariant} if it is an intertwiner of corepresentations (Definition~\ref{def:unitarycorep}).
\end{definition}
\noindent
Definition~\ref{def:generalgcstaralg} is just a concrete way of saying that a f.d. $G$-$C^*$-algebra equipped with a $G$-invariant faithful positive linear functional is a Frobenius algebra in $\Rep(G)$. Following Proposition~\ref{prop:starhomscstaralg} we can identify $*$-homomorphisms,  $*$-isomorphisms, and unitary $*$-isomorphisms of Frobenius algebras in $\Rep(G)$ with covariant $*$-homomorphisms, covariant $*$-isomorphisms, and functional-preserving covariant $*$-isomorphisms of f.d. $G$-$C^*$-algebras respectively. 

A f.d. $G$-$C^*$-algebra therefore corresponds to a $*$-isomorphism class of Frobenius algebras in $\Rep(G)$. In what follows we will use a result from~\cite{Neshveyev2018} to uniquely fix a functional for each f.d. $G$-$C^*$-algebra.
\begin{lemma}\label{lem:uniquespecialfunctional}
Every f.d. $G$-$C^*$-algebra admits a unique $G$-invariant faithful positive linear functional such that the corresponding Frobenius algebra in $\Rep(G)$ is separable, standard and satisfies $u^{\dagger} u = \dim_q(A)$. 
\end{lemma}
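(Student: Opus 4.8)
The plan is to regard the finite-dimensional $C^*$-algebra $A$, its multiplication $m$ and unit $u$, and the coaction---equivalently, the object $A$ together with the morphisms $m,u$ of $\Rep(G)$---as fixed, and to treat the $G$-invariant faithful positive functional $\phi=u^{\dagger}$ as the sole free datum; note that $\phi$ must be $G$-invariant for $[A,m,u]$ to be a Frobenius algebra in $\Rep(G)$ at all. Since the inner product $\braket{x|y}=\phi(x^{*}y)$, and hence the dagger, depends on $\phi$, the comultiplication $m^{\dagger}$ and the Frobenius cup $m^{\dagger}u$ and cap $\phi\circ m$ all vary with $\phi$, so the three requirements are genuine constraints. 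First I would record their behaviour under a rescaling $\phi\mapsto t\phi$ with $t>0$: the cup scales by $t^{-1}$ and the cap by $t$ (because the adjoint $m^{\dagger}$ scales by $t^{-1}$ while $\phi$ scales by $t$), so speciality $mm^{\dagger}=\id$ and the normalisation $u^{\dagger}u=\phi(1_A)$ each fix the scale $t$, whereas standardness is scale-invariant. This separates the roles of the conditions: standardness fixes the ``shape'' of $\phi$ while speciality fixes its scale. It also shows the third property is a consequence of the first two, since $u^{\dagger}u=u^{\dagger}mm^{\dagger}u$ (by speciality) equals the Frobenius dimension of $A$, which by standardness is the categorical dimension $\dim_q(A)$.

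For existence I would exhibit the functional explicitly: let $L_x\colon A\to A$ denote left multiplication by $x$, and set $\phi(x)$ to be the trace of $L_x$ taken with respect to the standard solution to the conjugate equations for the object $A$ in $\Rep(G)$; that is, $\phi$ is the character of the left-regular representation of $A$ computed via the standard solution. Because $\phi$ is a composite of the intertwiner $m$ with the standard solution and its adjoint, it is itself an intertwiner $A\to\mathbbm{1}$, hence $G$-invariant; it is faithful and positive because the standard solution induces a faithful positive trace and the regular representation is faithful; and the normalisation is immediate from $L_{1_A}=\id_A$, giving $\phi(1_A)=\dim_q(A)$. It then remains to check that this $\phi$ is special and standard. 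Using the block decomposition $A\cong\bigoplus_k M_{n_k}(\mathbb{C})$, with $m$ block-diagonal so that both conditions split over blocks, this reduces to verifying that on each block the regular character is the modular-twisted trace with exactly the weight that makes the block special; this is essentially the construction of the standard special structure in \cite{Neshveyev2018}.

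For uniqueness I would take any admissible $\phi'$ and compare it with the $\phi$ above block by block: standardness forces $\phi'$ to restrict on each block to the same modular-twisted trace up to one positive scalar, and speciality fixes that scalar, so $\phi'=\phi$; conceptually this is the uniqueness, up to unitary, of standard solutions to the conjugate equations. The main obstacle is the genuinely quantum (non-Kac) case, where the categorical trace differs from the matrix trace by the modular (Woronowicz) element of $G$: there standardness demands a modular-twisted functional on each block rather than the plain trace, and the delicate point is that this twist is exactly compatible with the Hilbert-space adjoint condition $mm^{\dagger}=\id$---equivalently, that the regular character is simultaneously special and standard. I expect the cleanest way through this is to identify, block by block, the Frobenius cup and cap determined by $\phi$ with a standard solution to the conjugate equations for $A$ in $\Rep(G)$, after which speciality and the normalisation $u^{\dagger}u=\dim_q(A)$ become two expressions of the same scale-fixing condition.
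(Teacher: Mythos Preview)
Your approach is correct in outline but takes a substantially longer route than the paper. The paper's proof is a two-line reduction: it invokes \cite[Thm.~2.11]{Neshveyev2018}, which already furnishes a unique $G$-invariant state making the Frobenius algebra standard with $u^{\dagger}u=1$ and $mm^{\dagger}=\dim_q(A)\,\id_A$, and then applies the scalar $*$-isomorphism $u\mapsto\sqrt{\dim_q(A)}\,u$, $m\mapsto\tfrac{1}{\sqrt{\dim_q(A)}}\,m$ to trade the normalisation for speciality. No block decomposition, no explicit construction of the regular character, no modular analysis is needed --- all of that is already inside the cited result. What you are doing is essentially reproving \cite[Thm.~2.11]{Neshveyev2018} from scratch; you even acknowledge this when you write that your existence argument ``is essentially the construction of the standard special structure in~\cite{Neshveyev2018}''.

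Your proposal does contain one observation the paper leaves implicit: that the condition $u^{\dagger}u=\dim_q(A)$ is redundant once speciality and standardness are imposed, since $u^{\dagger}u=u^{\dagger}mm^{\dagger}u$ computes the Frobenius dimension, which equals $\dim_q(A)$ by standardness. This is correct and clarifies why the rescaling in the paper's proof hits all three conditions at once. On the other hand, several of your steps are left as promissory notes (``It then remains to check\dots'', ``I expect the cleanest way through this is\dots''); if you were to fill these in honestly you would find yourself redoing the work of \cite{Neshveyev2018}, whereas the paper's strategy is simply to cite that work and adjust the normalisation.
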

\begin{proof}
This is just~\cite[Thm. 2.11]{Neshveyev2018} with a different normalisation. There it was shown that there is a unique $G$-invariant state on any f.d. $G$-$C^*$-algebra such that the corresponding Frobenius algebra $A$ is standard and satisfies $u^{\dagger} u= 1$ and  $mm^{\dagger}=\dim_q(A)\id_A$. Using a scalar $*$-isomorphism of Frobenius algebras ($u \mapsto \sqrt{\dim_q(A)} u$ and $m \mapsto \frac{1}{\sqrt{\dim_q(A)}} m$) we obtain the desired unique $G$-invariant faithful positive linear functional.
\end{proof}
\noindent 
We call the unique $G$-invariant faithful positive linear functional of Lemma~\ref{lem:uniquespecialfunctional} the \emph{canonical $G$-invariant functional} on $A$. This is just a different normalisation of the canonical invariant state of~\cite[Thm. 2.11]{Neshveyev2018}. 
\begin{remark}\label{rem:trace}
It is common in physics to assume that the functional is a trace. In the case of a general compact quantum group action there may be no $G$-invariant trace on a f.d. $G$-$C^*$-algebra. We discuss this in Appendix~\ref{sec:app}. In particular, we show that, whenever a $G$-invariant trace exists, the canonical invariant functional of Lemma~\ref{lem:uniquespecialfunctional} is tracial. We also provide a necessary and sufficient condition for the existence of a $G$-invariant trace; namely, that the quantum dimension $\dim_q(A)$ should be equal to the ordinary dimension $\dim(A)$. In this case we show in that the canonical functional is precisely the \emph{separable trace} on the underlying f.d. $C^*$-algebra~\cite[Thm. 4.6]{Vicary2011}. In particular, this implies that the canonical invariant functional is always tracial whenever the compact quantum group is of \emph{Kac type}~\cite[Prop. 1.7.9]{Neshveyev2013} (this includes all ordinary compact groups). 
\end{remark}
\noindent
Let $\mathcal{C}$ be a rigid $C^*$-tensor category. Recall that, by Tannaka duality (Theorem~\ref{thm:tannaka}), when $U: \mathcal{C} \to \Hilb$ is a fibre functor there is a compact quantum group $G$ associated to the pair $(\mathcal{C},U)$ such that $\mathcal{C} \simeq \Rep(G)$. One therefore expects that f.d. $G$-$C^*$-algebras can be identified with separable standard Frobenius algebras ($\F$s) in $\mathcal{C}$; in what follows we make this precise.
\begin{proposition}\label{prop:imisgcalg}
Let $\mathcal{C}$ be a rigid $C^*$-tensor category and $U: \mathcal{C} \to \Hilb$ be a fibre functor. Let $[A,m,u]$ be an $\F$ in $\mathcal{C}$. Then $[U(A),U(m) \circ \mu_{A,A},U(u) \circ \upsilon]$ is a separable Frobenius algebra in $\Hilb$:
\begin{calign}\nonumber
\includegraphics[scale=.8,valign=c]{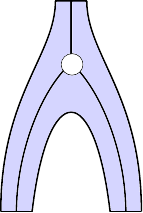}
&
\includegraphics[scale=.8,valign=c]{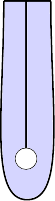}
\\ \nonumber
U(m) \circ \mu_{A,A} 
&
U(u) \circ \upsilon
\end{calign}
Let $G$ be the compact quantum group obtained by Tannaka reconstruction from the pair $(\mathcal{C},U)$, and let $\rho_{U,A}: U(A) \to U(A) \otimes A_G$ be the corresponding corepresentation of $A_G$ on $U(A)$. Then $([U(A),U(m) \circ \mu_{A,A},U(u) \circ \upsilon],\rho_{U,A})$ is a f.d. $G$-$C^*$-algebra with functional.
\end{proposition}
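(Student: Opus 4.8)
The plan is to prove the two assertions in turn: first that $[U(A),\tilde m,\tilde u]$, with $\tilde m := U(m)\circ\mu_{A,A}$ and $\tilde u := U(u)\circ\upsilon$, is a special Frobenius algebra in $\Hilb$, and then that $\rho_{U,A}$ makes it a coaction in the sense of Definition~\ref{def:generalgcstaralg}. For the first assertion I would invoke the standard fact that a unitary (hence strong, dagger) monoidal functor preserves Frobenius structure, filling in only the mechanism. Associativity and unitality of $(\tilde m,\tilde u)$ follow by applying $U$ to the algebra equations~\eqref{eq:assocandunitality} for $[A,m,u]$ and using naturality of $\mu$ together with the coherence of the multiplicators relating $\mu_{A\otimes A,A}\circ(\mu_{A,A}\otimes\id)$ and $\mu_{A,A\otimes A}\circ(\id\otimes\mu_{A,A})$; I would not grind through this routine coherence bookkeeping. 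Since $U$ is a dagger functor and $\mu,\upsilon$ are unitary, the adjoint coalgebra structure is $\tilde m^{\dagger}=\mu_{A,A}^{\dagger}\circ U(m^{\dagger})$, and applying $U$ to the Frobenius equation~\eqref{eq:Frobenius} while inserting the identity $\mu_{A,A}\circ\mu_{A,A}^{\dagger}=\id$ yields the Frobenius equation for $(\tilde m,\tilde m^{\dagger})$. Speciality is the quickest step: by unitarity of $\mu$, functoriality, and speciality~\eqref{eq:frobspecial} of $A$ in $\mathcal{C}$,
\[
\tilde m\circ\tilde m^{\dagger}=U(m)\circ\mu_{A,A}\circ\mu_{A,A}^{\dagger}\circ U(m^{\dagger})=U(m\circ m^{\dagger})=U(\id_A)=\id_{U(A)}.
\]

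For the second assertion, the key observation is that the concrete coaction conditions of Definition~\ref{def:generalgcstaralg} say exactly that $\tilde m$ and $\tilde u$ are intertwiners of unitary corepresentations (Definition~\ref{def:unitarycorep}). The corepresentation $\rho_{U,A}$ supplied by Tannaka reconstruction (Theorem~\ref{thm:tannaka}) is unitary, so $(U(A),\rho_{U,A})$ is a unitary corepresentation and the coaction requirement that $(A,\rho)$ be a unitary corepresentation is automatic. Writing $\delta$ for the tensor-product corepresentation on $U(A)\otimes U(A)$ (the structure drawn with the black multiplication vertex in Definition~\ref{def:generalgcstaralg}), the monoidality property~\eqref{eq:rhomon} of the reconstruction states precisely that $\mu_{A,A}$ intertwines $\delta$ with $\rho_{U,A\otimes A}$, while naturality~\eqref{eq:rhonat} applied to the morphism $m$ of $\mathcal{C}$ states that $U(m)$ intertwines $\rho_{U,A\otimes A}$ with $\rho_{U,A}$.

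Composing these two intertwiners gives the multiplication pull-through condition:
\begin{align*}
\rho_{U,A}\circ\tilde m &= \rho_{U,A}\circ U(m)\circ\mu_{A,A}=(U(m)\otimes\id_{A_G})\circ\rho_{U,A\otimes A}\circ\mu_{A,A}\\
&=(U(m)\otimes\id_{A_G})\circ(\mu_{A,A}\otimes\id_{A_G})\circ\delta=(\tilde m\otimes\id_{A_G})\circ\delta.
\end{align*}
For the unit, the property $\rho_{U,\mathbbm{1}}=\id_{U(\mathbbm{1})}\otimes u_{A_G}$ recorded after~\eqref{eq:rhomon} says that the unitor $\upsilon$ intertwines the trivial corepresentation on $\mathbb{C}$ with $\rho_{U,\mathbbm{1}}$, and naturality~\eqref{eq:rhonat} applied to $u$ says $U(u)$ intertwines $\rho_{U,\mathbbm{1}}$ with $\rho_{U,A}$; their composite $\tilde u=U(u)\circ\upsilon$ is therefore an intertwiner, giving the unit pull-through condition. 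Together with the first assertion, this shows $([U(A),\tilde m,\tilde u],\rho_{U,A})$ is a $G$-$C^*$-algebra with functional.

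I do not expect a genuine obstacle: everything reduces to functoriality, monoidality and unitarity of $U$ combined with the recorded properties of Tannaka reconstruction. The only points requiring care are matching the diagrammatic coaction conditions of Definition~\ref{def:generalgcstaralg} to the abstract intertwiner equations—in particular identifying the tensor-product corepresentation $\delta$ on $U(A)\otimes U(A)$ with the structure appearing under the black vertex—and the monoidal-coherence manipulations behind associativity in the first part; both are routine bookkeeping rather than substantive difficulties.
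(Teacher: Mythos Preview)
Your proposal is correct and follows essentially the same approach as the paper: both parts reduce to the preservation of Frobenius structure under a unitary monoidal functor (with speciality shown via $\mu_{A,A}\mu_{A,A}^{\dagger}=\id$ exactly as you wrote), and then the intertwiner conditions for $\tilde m$ and $\tilde u$ are obtained by combining naturality~\eqref{eq:rhonat} with monoidality~\eqref{eq:rhomon},~\eqref{eq:rhomonunit} of the Tannaka corepresentations. Your explicit equation chain for the multiplication pull-through matches the paper's diagrammatic argument step for step.
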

\begin{proof}
For the first statement, to show that $[U(A),U(m) \circ \mu_{A,A},U(u) \circ \upsilon]$ is a \F{} it is easy to check that the equations of a separable Frobenius algebra are preserved under a unitary monoidal functor. For example, separability follows from unitarity of the fibre functor (Definition~\ref{def:fibre}) and separability of $[A,m,u]$:
\begin{calign}\nonumber
\includegraphics[scale=.8,valign=c]{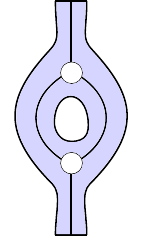}
~~=~~
\includegraphics[scale=.8,valign=c]{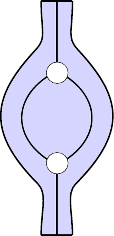}
~~=~~
\includegraphics[scale=.8,valign=c]{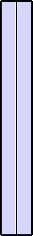}
\end{calign}
For the second statement, we already know that $\rho_{U,A}: U(A) \to U(A) \otimes A_G$ is a unitary corepresentation, so we need only show that $U(m)\circ \mu_{A,A}$ and $U(u) \circ \upsilon$ are intertwiners. This follows from the properties of the representations $\{\rho_{U,X}\}_{X \in \Obj(\mathcal{C})}$ listed in Section~\ref{sec:tannaka}. To see that $U(m) \circ \mu_{A,A}$ is an intertwiner:
\begin{calign}\nonumber
\includegraphics[scale=.9,valign=c]{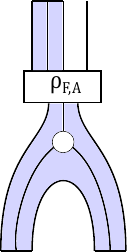}
~~=~~
\includegraphics[scale=.9,valign=c]{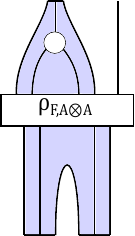}
~~=~~
\includegraphics[scale=.9,valign=c]{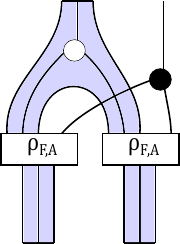}
\end{calign}
Here the first equality is by naturality~\eqref{eq:rhonat} of $\{\rho_{U,X}\}$ and the second is by monoidality~\eqref{eq:rhomon} of $\{\rho_{U,X}\}$.

To see that $U(u) \circ \upsilon$ is an intertwiner:
\begin{calign}\nonumber
\includegraphics[scale=.9,valign=c]{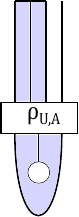}
~~=~~
\includegraphics[scale=.9,valign=c]{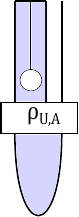}
~~=~~
\includegraphics[scale=.9,valign=c]{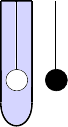}
\end{calign}
Here the first equality is by naturality~\eqref{eq:rhonat} and the second equality is by monoidality~\eqref{eq:rhomonunit}.
\end{proof}
\begin{notation}
To be concise, for any $\F$ $[A,m,u]$ in $\mathcal{C}$ we write $U([A,m,u])$ for the f.d. $G$-$C^*$-algebra $([U(A),U(m) \circ \mu_{A,A},U(u) \circ \upsilon],\rho_{U,A})$. 
\end{notation}
\noindent
We now show that, up to covariant $*$-isomorphism, all f.d. $G$-$C^*$-algebras can be obtained in this way.
\begin{proposition}
Let $\mathcal{C}$ be a rigid $C^*$-tensor category, let $U: \mathcal{C} \to \Hilb$ be a fibre functor, and let $G$ be the compact quantum group obtained by Tannaka reconstruction for the pair $(\mathcal{C},U)$. 

Let $([A,m,u],\rho)$ be a f.d. $G$-$C^*$-algebra with functional. Then there exists a $\F$ $[\bar{A},\bar{m},\bar{u}]$ in $\mathcal{C}$ such that $U([\bar{A},\bar{m},\bar{u}])$ is covariantly $*$-isomorphic to $([A,m,u],\rho)$.
\end{proposition}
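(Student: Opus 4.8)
The plan is to recognise the given datum as a Frobenius algebra in $\Rep(G)$, normalise its functional to the canonical one so that it becomes special and standard, and then transport it to $\mathcal{C}$ along a quasi-inverse of the Tannaka equivalence $E: \mathcal{C} \to \Rep(G)$ of Theorem~\ref{thm:tannaka}.

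First I would observe, as in the discussion following Definition~\ref{def:generalgcstaralg}, that $([A,m,u],\rho)$ is precisely a Frobenius algebra $\mathbf{A}$ in $\Rep(G)$: the coaction $\rho$ makes $A$ an object of $\Rep(G)$, and the intertwiner conditions say that $m$ and $u$ are morphisms there. However, $\mathbf{A}$ need not be special or standard, since $u^{\dagger}$ is an arbitrary $G$-invariant faithful positive linear functional. To remedy this I would invoke Lemma~\ref{lem:uniquespecialfunctional}: the underlying $G$-$C^*$-algebra carries a canonical $G$-invariant functional for which the associated Frobenius algebra $\tilde{\mathbf{A}}$ in $\Rep(G)$ is special, standard and satisfies $u^{\dagger}u = \dim_q(A)$. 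Since $\mathbf{A}$ and $\tilde{\mathbf{A}}$ present the same underlying $C^*$-algebra with the same coaction (only the functional, hence the inner product, differs), the identity map on the underlying space is an intertwiner realising a covariant $*$-isomorphism $\tilde{\mathbf{A}} \to \mathbf{A}$; here I would appeal to Proposition~\ref{prop:starhomscstaralg} to match the Frobenius-algebraic and $C^*$-algebraic notions of $*$-isomorphism.

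Next I would transport $\tilde{\mathbf{A}}$ back to $\mathcal{C}$. By Theorem~\ref{thm:tannaka} there is a unitary monoidal equivalence $E: \mathcal{C} \to \Rep(G)$ with $F \circ E = U$ on the nose, where $F$ is the canonical fibre functor; letting $E^{-1}$ be a unitary monoidal quasi-inverse, I would set $[\bar{A},\bar{m},\bar{u}] := E^{-1}(\tilde{\mathbf{A}})$. As in Proposition~\ref{prop:imisgcalg}, the defining equations of a special standard Frobenius algebra are all expressed through the monoidal and dagger structure and are therefore preserved by the unitary monoidal functor $E^{-1}$, so $[\bar{A},\bar{m},\bar{u}]$ is an $\F$ in $\mathcal{C}$. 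Finally, since $U(-) = F(E(-))$ packages exactly the composite appearing in the notation $U([\bar{A},\bar{m},\bar{u}])$, the counit of the equivalence supplies a unitary monoidal natural isomorphism $E \circ E^{-1} \cong \id_{\Rep(G)}$ whose component at $\tilde{\mathbf{A}}$ is a unitary $*$-isomorphism $E(\bar{A}) \cong \tilde{\mathbf{A}}$ of Frobenius algebras; applying $F$ turns this into a covariant unitary $*$-isomorphism $U([\bar{A},\bar{m},\bar{u}]) \cong \tilde{\mathbf{A}}$ of $G$-$C^*$-algebras. Composing with the covariant $*$-isomorphism $\tilde{\mathbf{A}} \cong ([A,m,u],\rho)$ from the previous step yields the claim.

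The main obstacle I anticipate is the bookkeeping in the middle step: verifying carefully that replacing the functional by the canonical one yields a \emph{covariantly} $*$-isomorphic Frobenius algebra, i.e. that the abstract $G$-$C^*$-algebra is well defined independently of the functional up to covariant $*$-isomorphism, and that the Frobenius-algebra notion of $*$-isomorphism in $\Rep(G)$ coincides with the $C^*$-algebraic one. The transport along $E^{-1}$ and the identification $U = F\circ E$ are then essentially formal, relying only on the structure-preservation already exploited in Proposition~\ref{prop:imisgcalg}.
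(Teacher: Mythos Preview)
Your proposal is correct and follows essentially the same strategy as the paper: identify the datum as a Frobenius algebra in $\Rep(G)$, transport along the Tannaka equivalence, and apply Lemma~\ref{lem:uniquespecialfunctional} to land on a special standard Frobenius algebra. The only differences are cosmetic: the paper normalises \emph{after} transporting to $\mathcal{C}$ (first pulling back the algebra structure along a unitary intertwiner to an object in the image of $E$, then using fullness of $E$ to find $\tilde{m},\tilde{u}$, then applying the lemma in $\mathcal{C}$), whereas you normalise first in $\Rep(G)$ and then invoke a quasi-inverse $E^{-1}$; both orderings work.
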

\begin{proof}
Recall the commuting diagram of functors~\eqref{diag:tannaka}. We first observe that, since $(A,\rho)$ is a unitary corepresentation of $G$ and $E$ is a unitary equivalence, there exists an object $\tilde{A}$ of $\mathcal{C}$ and a unitary intertwiner $f: (U(\tilde{A}),\rho_{U,\tilde{A}}) \to (A,\rho)$. We pull back the algebra structure of $[A,m,u]$ along $f$ to obtain a f.d. $G$-$C^*$-algebra $([U(\tilde{A}),f^{\dagger} \circ m \circ (f \otimes f),f^{\dagger} \circ u],\rho_{U,\tilde{A}})$, so that $f$ is a covariant $*$-isomorphism. It is easy to show by fullness and unitarity of the monoidal equivalence $E$ that there exist morphisms $\tilde{m}: \tilde{A} \otimes \tilde{A} \to \tilde{A}$ and $\tilde{u}: \mathbbm{1} \to \tilde{A}$ in $\mathcal{C}$ such that $([U(\tilde{A}),f^{\dagger} \circ m \circ (f \otimes f),f^{\dagger} \circ u],\rho_{U,\tilde{A}}) = U([\tilde{A},\tilde{m},\tilde{u}])$. Finally, by Lemma~\ref{lem:uniquespecialfunctional} there is a standard separable Frobenius algebra $[\bar{A},\bar{m},\bar{u}]$ in $\mathcal{C}$ which is covariantly $*$-isomorphic to $[\tilde{A},\tilde{m},\tilde{u}]$.
\end{proof}
\noindent
Having identified f.d. $G$-$C^*$-algebras with $\F$s in $\mathcal{C}$, we now turn to covariant channels. 
\begin{definition}
Let $([A_1,m_1,u_1],\rho_1)$, $([A_2,m_2,u_2],\rho_2)$ be f.d. $G$-$C^*$-algebras with functional. We say that a CP morphism $f: [A_1,m_1,u_1] \to{} [A_2,m_2,u_2]$ is \emph{covariant} if it is also an intertwiner $(A_1,\rho_1) \to (A_2,\rho_2)$.
\end{definition}
\noindent
We first show that a CP morphism between $\F$s in $\mathcal{C}$ induces a covariant CP morphism between the corresponding f.d. $G$-$C^*$-algebras.
\begin{proposition}\label{prop:imiscovchan}
Let $\mathcal{C}$ be a rigid $C^*$-tensor category, and let $U: \mathcal{C} \to \Hilb$ be a fibre functor. Let $[A_1,m_1,u_1]$, $[A_2,m_2,u_2]$ be $\F$s in $\mathcal{C}$ and let $f: [A_1,m_1,u_1] \to{} [A_2,m_2,u_2]$ be a CP morphism in $\mathcal{C}$. Then $$U(f): U([A_1,m_1,u_1]) \to{} U([A_2,m_2,u_2])$$ is a covariant CP morphism.
If $f$ is trace-preserving then $U(f)$ is a covariant channel.
\end{proposition}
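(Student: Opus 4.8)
The plan is to lean entirely on the fact that $U$ is a faithful unitary monoidal functor, so that it preserves composition, tensor product, daggers, and the full Frobenius structure; the three claims (covariance, the CP condition, and counit-preservation) can then be verified one at a time.

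Covariance is immediate. Since $f\colon A_1\to A_2$ is a morphism of $\mathcal{C}$, the naturality property~\eqref{eq:rhonat} of the Tannaka corepresentations $\{\rho_{U,X}\}$ from Theorem~\ref{thm:tannaka} states precisely that $U(f)$ is an intertwiner $(U(A_1),\rho_{U,A_1})\to(U(A_2),\rho_{U,A_2})$. This is the same observation used for $U(m)$ and $U(u)$ in the proof of Proposition~\ref{prop:imisgcalg}, so $U(f)$ is covariant with no further work.

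For the CP condition, the hypothesis gives an object $X$ and a witness $g\colon A_1\otimes A_2\to X$ satisfying~\eqref{eq:cpcond} in $\mathcal{C}$. I would apply $U$ to this equation. Every ingredient of~\eqref{eq:cpcond}---the multiplications and units, their daggers (the comultiplications and counits), and the Frobenius cups and caps~\eqref{eq:cupcapfrob}---is carried by $U$ to the corresponding ingredient of $U([A_1,m_1,u_1])$ and $U([A_2,m_2,u_2])$: this is clear for the algebra structure by Proposition~\ref{prop:imisgcalg}, follows for the daggered pieces from unitarity of $U$, and hence holds for the Frobenius self-dualities, which are built from these by~\eqref{eq:cupcapfrob}. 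Inserting the multiplicators $\mu$ and unitors $\upsilon$ to identify $U(A_1\otimes A_2)$ with $U(A_1)\otimes U(A_2)$, the image equation is exactly the CP condition for $U(f)$, witnessed by $U(g)\circ\mu_{A_1,A_2}$.

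Finally, suppose $f$ is trace-preserving, i.e. satisfies~\eqref{eq:tpcond} in $\mathcal{C}$. The counit of $U([A_i,m_i,u_i])$ is $(U(u_i)\circ\upsilon)^\dagger=\upsilon^\dagger\circ U(u_i^\dagger)$ by unitarity of $U$. Composing $U(f)$ with the counit of $U([A_2,m_2,u_2])$ yields $\upsilon^\dagger\circ U(u_2^\dagger\circ f)=\upsilon^\dagger\circ U(u_1^\dagger)$, which is the counit of $U([A_1,m_1,u_1])$; so $U(f)$ preserves the counit and is a channel by Definition~\ref{def:chan}. I do not expect any genuine obstacle here: the one point deserving care is checking that $U$ sends the Frobenius cups and caps of each $A_i$ to those of its image, since these implement the ``doubling'' in the CP condition---but this is forced by preservation of the algebra structure together with unitarity, and amounts to the same multiplicator bookkeeping already carried out in Proposition~\ref{prop:imisgcalg}.
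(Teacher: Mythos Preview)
Your proposal is correct and follows essentially the same approach as the paper: covariance via naturality~\eqref{eq:rhonat}, preservation of the CP condition~\eqref{eq:cpcond} by pushing the witness through the unitary monoidal functor (your explicit witness $U(g)\circ\mu_{A_1,A_2}$ is exactly what the paper's diagrammatic argument encodes), and preservation of the counit condition~\eqref{eq:tpcond} in the same way. The paper presents the latter two steps as functorial-box diagrams rather than symbolic multiplicator bookkeeping, but the content is identical.
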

\begin{proof}
We know that $U(f)$ is an intertwiner by naturality~\eqref{eq:rhonat} of $\{\rho_{U,X}\}_{X \in \Obj(\mathcal{C})}$.
We show that $U(f)$ is a CP map. Indeed, the CP condition~\eqref{eq:cpcond} is preserved under the unitary monoidal functor $U$: 
\begin{calign}\nonumber
\includegraphics[scale=.9,valign=c]{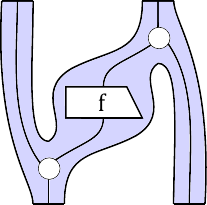}
~~=~~
\includegraphics[scale=.9,valign=c]{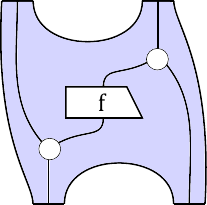}
~~=~~
\includegraphics[scale=.9,valign=c]{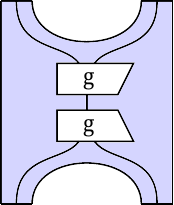}
\end{calign}
The map $U(f)$ is therefore covariant CP.
Finally, the trace-preservation condition~\eqref{eq:tpcond} is preserved under the functor $U$:
\begin{calign}\nonumber
\includegraphics[scale=.9,valign=c]{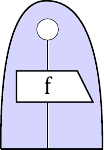}
~~=~~
\includegraphics[scale=.9,valign=c]{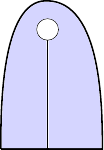}
\end{calign}
\end{proof}
\noindent
We now show that all covariant CP maps and channels are obtained as the image of CP morphisms in $\mathcal{C}$.
\ignore{
\begin{lemma}\label{lem:pos*hom}
Let $A,B$ be f.d. $C^*$-algebras and let $\pi:A \to B$ be a $*$-homomorphism. Then a element $\pi(a)$ is self-adjoint and positive in $B$ iff $a$ is self-adjoint and positive in $A$.
\end{lemma}
\begin{proof}
The self-adjoint part is clear. The positivity part follows from the characterisation of positivity of a self-adjoint element in  terms of positivity of the spectrum~\cite[Thm. 1.6.5]{Putnam}. In one direction, if $a$ is positive in $A$ --- i.e. $a = b b^*$ for some $b \in A$ --- then $f(a) = f(b) f(b)^*$ and so $f(a)$ is positive in $B$. In the other direction, if $f(a)$ is positive in $B$, then $\lambda >0$ for all $\lambda$ s.t. $\lambda1_B-f(a)$ is invertible in $B$. Consider $\lambda 1_A - a \in A$. If this is invertible then $f(\lambda 1_A-a)=\lambda 1_B - f(a)$ is invertible in $B$: thus $\lambda>0$. 
\end{proof}}

\begin{proposition}
Let $\mathcal{C}$ be a rigid $C^*$-tensor category, and let $U: \mathcal{C} \to \Hilb$ be a fibre functor. 
Let $[A_1,m_1,u_1]$,$[A_2,m_2,u_2]$ be $\F$s in $\mathcal{C}$.
Then for every covariant CP map $$f: U([A_1,m_1,u_1]) \to U([A_2,m_2,u_2])$$ there is a unique CP morphism $\tilde{f}: [A_1,m_1,u_1] \to{} [A_2,m_2,u_2]$ in $\mathcal{C}$ such that $U(\tilde{f}) = f$.

If $f$ is trace-preserving then $\tilde{f}$ is trace-preserving also.
\end{proposition}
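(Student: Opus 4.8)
The plan is to extract $\tilde{f}$ from the full faithfulness of the Tannaka equivalence and then to transport the two defining conditions (CP and trace-preservation) across $U$ using only its faithfulness and unitarity.

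First I would produce $\tilde{f}$ as a morphism. By Theorem~\ref{thm:tannaka} the functor $E: \mathcal{C} \to \Rep(G)$, acting by $X \mapsto (U(X),\rho_{U,X})$ on objects and by $h \mapsto U(h)$ on morphisms, is a fully faithful unitary monoidal equivalence. Covariance of $f$ says exactly that $f$ is an intertwiner $(U(A_1),\rho_{U,A_1}) \to (U(A_2),\rho_{U,A_2})$, i.e. a morphism $E([A_1,m_1,u_1]) \to E([A_2,m_2,u_2])$ in $\Rep(G)$. Full faithfulness then yields a unique $\tilde{f}: A_1 \to A_2$ in $\mathcal{C}$ with $U(\tilde{f}) = E(\tilde{f}) = f$; uniqueness is immediate from faithfulness of $U$.

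Second, I would show $\tilde{f}$ is a CP morphism. Since $U$ is a unital $\mathbb{C}$-linear functor preserving daggers, its restriction $U: \End_{\mathcal{C}}(A_1 \otimes A_2) \to \End_{\Hilb}(U(A_1 \otimes A_2))$ is a unital $*$-homomorphism of $C^*$-algebras, and it is injective because $U$ is faithful. Let $C_{\tilde{f}} \in \End_{\mathcal{C}}(A_1 \otimes A_2)$ be the endomorphism appearing on the left-hand side of the CP condition~\eqref{eq:cpcond}, built from $\tilde{f}$ and the Frobenius structure $(m_i,u_i)$. Because $U$ is monoidal and unitary, applying $U$ sends $C_{\tilde{f}}$, up to conjugation by the unitary multiplicator $\mu_{A_1,A_2}$ and unitor $\upsilon$, to the corresponding endomorphism $C_{f}$ built from $f$ and the image Frobenius structure on the $U([A_i,m_i,u_i])$. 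By Theorem~\ref{thm:cpmap}, the hypothesis that $f$ is CP means precisely that $C_{f}$ is positive. An injective $*$-homomorphism of $C^*$-algebras reflects positivity (a self-adjoint element whose image is positive is itself positive, since such maps are isometric and preserve the spectrum), so $C_{\tilde{f}}$ is positive in $\End_{\mathcal{C}}(A_1 \otimes A_2)$. Taking $g := C_{\tilde{f}}^{1/2}$, which lies in $\mathcal{C}$ by functional calculus within this endomorphism $C^*$-algebra, together with $X := A_1 \otimes A_2$, exhibits $\tilde{f}$ as satisfying~\eqref{eq:cpcond}. I expect this positivity-reflection step to be the crux: the rest is formal, but here one must genuinely use the $C^*$-structure of the Hom-algebras together with injectivity of $U$.

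Finally, for trace-preservation, the counit of $U([A_i,m_i,u_i])$ is $\upsilon^{\dagger} \circ U(u_i^{\dagger})$, so the condition~\eqref{eq:tpcond} for $f$ reads $\upsilon^{\dagger} \circ U(u_2^{\dagger}) \circ U(\tilde{f}) = \upsilon^{\dagger} \circ U(u_1^{\dagger})$. Since the unitor $\upsilon$ is invertible I would cancel it and rewrite this as $U(u_2^{\dagger} \circ \tilde{f}) = U(u_1^{\dagger})$; faithfulness of $U$ then gives $u_2^{\dagger} \circ \tilde{f} = u_1^{\dagger}$, which is exactly~\eqref{eq:tpcond} for $\tilde{f}$ in $\mathcal{C}$.
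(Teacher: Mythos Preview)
Your proposal is correct and follows essentially the same approach as the paper: obtain $\tilde{f}$ from full faithfulness of the Tannaka equivalence, reflect positivity of the Choi-type endomorphism using that $U$ induces an injective $*$-homomorphism of endomorphism $C^*$-algebras (the paper phrases this as ``spectral permanence''), and deduce trace-preservation by cancelling the unitor and invoking faithfulness. Your explicit mention of $g = C_{\tilde{f}}^{1/2}$ via functional calculus is a helpful elaboration the paper leaves implicit.
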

\begin{proof}
Recall the commuting triangle~\eqref{diag:tannaka}. By covariance of $f$, there is a unique morphism $\hat{f}$ in $\Rep(G)$ such that $F(\hat{f}) = f$, where $F$ is the canonical fibre functor. Now since $U=F \circ E$ and $E$ is an equivalence, there exists a unique morphism $\tilde{f}$ in $\mathcal{C}$ such that $U(\tilde{f}) = f$.
 
We now need to show that $\tilde{f}$ obeys the CP condition in $\mathcal{C}$. We know that $f$ satisfies the CP condition~\eqref{eq:cpcond} in $\Hilb$; this is precisely to say that the morphism
\begin{calign}\nonumber
\includegraphics[scale=.9,valign=c]{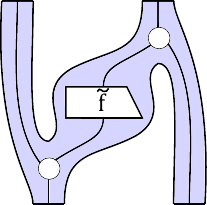}
\end{calign} 
is a positive element in the f.d. $C^*$-algebra $\End_{\Hilb}(U(A) \otimes U(B))$. 
This in turn implies (by pre- and post-composition with $\mu_{A,B}^{\dagger}$ and $\mu_{A,B}$ respectively) that the following morphism is a positive element in $\End_{\Hilb}U(A \otimes B)$:
\begin{calign}\nonumber
\includegraphics[scale=.9,valign=c]{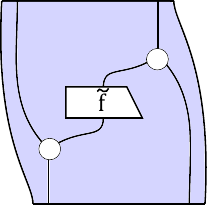}
\end{calign}
But the fibre functor $U$ is faithful and unitary; in particular, the induced map $\End_{\mathcal{C}}(A \otimes B) \to \End_{\Hilb}(U(A \otimes B))$ is an injective $*$-homomorphism. It follows from spectral permanence that the following is a positive element of $\End_{\mathcal{C}}(A \otimes B)$:
\begin{calign}\nonumber
\includegraphics[scale=.9,valign=c]{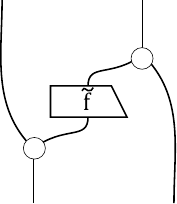}
\end{calign}
This is precisely to say that $\tilde{f}$ is a CP morphism in $\mathcal{C}$.

Finally, we consider the trace-preservation condition. If the covariant CP map $f$ is trace-preserving, then we have the following sequence of implications:
\begin{calign}\nonumber
\includegraphics[scale=.9,valign=c]{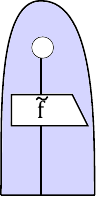}
~~=~~
\includegraphics[scale=.9,valign=c]{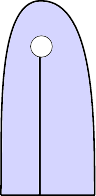}
~~~~\Rightarrow~~~~
\includegraphics[scale=.9,valign=c]{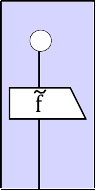}
~~=~~
\includegraphics[scale=.9,valign=c]{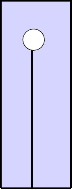}
~~~~\Rightarrow~~~~
\includegraphics[scale=.9,valign=c]{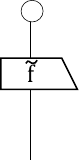}
~~=~~
\includegraphics[scale=.9,valign=c]{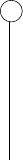}
\end{calign}
Here the first implication is by postcomposition with $\upsilon$, and the second is by faithfulness of the functor $U$.
The last equality precisely states that $\tilde{f}: [A_1,m_1,u_1] \to{} [A_2,m_2,u_2]$ is trace-preserving.
\end{proof}
\noindent
Altogether, what we have proven is an equivalence of categories. We formulate this equivalence now. Let $\mathcal{C}$ be a rigid $C^*$-tensor category, let $U: \mathcal{C} \to \Hilb$ be a fibre functor, and let $G$ be the compact quantum group recovered from the pair $(\mathcal{C},U)$ by Tannaka duality. Let $\CP(\mathcal{C})$ be the category whose objects are standard separable Frobenius algebras in $\mathcal{C}$ and whose morphisms are CP morphisms between them. Let $\Chan(\mathcal{C})$ be the subcategory whose morphisms additionally preserve the counit. Let $\CP(G)$ be the category whose objects are f.d. $G$-$C^*$-algebras and whose morphisms are covariant CP maps. Let $\Chan(G)$ be the subcategory whose morphisms additionally preserve the canonical $G$-invariant functional.
\begin{theorem}\label{thm:equivalence}
The functors $\CP(\mathcal{C}) \to \CP(G)$ and $\Chan(\mathcal{C}) \to \Chan(G)$ defined on objects by $[A,m,u] \mapsto U([A,m,u])$ and on morphisms by $f \mapsto U(f)$ are equivalences of categories. 
\end{theorem}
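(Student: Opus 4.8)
The plan is to apply the standard criterion that a $\mathbb{C}$-linear functor is an equivalence of categories exactly when it is fully faithful and essentially surjective, and to assemble the propositions proved above into a verification of each condition; for the $\CP$ statement this is essentially bookkeeping, and the $\Chan$ statement requires only a few extra compatibility checks. First I would confirm that the assignments really define functors into the stated codomains. Well-definedness on objects is Proposition~\ref{prop:imisgcalg}, which shows $U([A,m,u])$ is a $G$-$C^*$-algebra with functional; well-definedness on morphisms is Proposition~\ref{prop:imiscovchan}, and the functoriality identities $U(\id)=\id$ and $U(g\circ f)=U(g)\circ U(f)$ are inherited from $U$ preserving composition, since composition in both $\CP(\mathcal{C})$ and $\CP(G)$ is ordinary composition of morphisms. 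For the $\Chan$ variant there is one additional point: that the functional carried by $U([A,m,u])$, namely the counit $(U(u)\circ\upsilon)^{\dagger}$, is the \emph{canonical} $G$-invariant functional, so that counit-preservation on the left matches functional-preservation on the right. Writing $U=F\circ E$ with $E$ the unitary monoidal equivalence of~\eqref{diag:tannaka}, the object $E([A,m,u])$ is a standard special Frobenius algebra in $\Rep(G)$; since any such algebra satisfies $u^{\dagger}u=\dim_q(A)$ (compute $\dim_q$ with the Frobenius self-duality~\eqref{eq:cupcapfrob} and apply speciality), Lemma~\ref{lem:uniquespecialfunctional} identifies its counit with the canonical functional, so the object functor indeed lands in $\Chan(G)$ and Proposition~\ref{prop:imiscovchan} places the morphism functor there too.

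Next I would establish that both functors are fully faithful. Faithfulness is immediate from faithfulness of the fibre functor $U$: if $U(f_1)=U(f_2)$ then $f_1=f_2$. Fullness is exactly the content of the proposition lifting covariant CP maps, which produces, for every covariant CP map $f\colon U([A_1,m_1,u_1])\to U([A_2,m_2,u_2])$, a unique CP morphism $\tilde f$ in $\mathcal{C}$ with $U(\tilde f)=f$. This gives fullness of $\CP(\mathcal{C})\to\CP(G)$, and the trace-preservation clause of that proposition (if $f$ is trace-preserving then so is $\tilde f$) upgrades it to fullness of $\Chan(\mathcal{C})\to\Chan(G)$.

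For essential surjectivity I would invoke the proposition showing that every $G$-$C^*$-algebra with functional is covariantly $*$-isomorphic to $U([\bar A,\bar m,\bar u])$ for some standard special Frobenius algebra $[\bar A,\bar m,\bar u]$ in $\mathcal{C}$; equipping an arbitrary object of $\CP(G)$ with its canonical functional puts us in the hypotheses of that proposition, and since $[\bar A,\bar m,\bar u]$ lies in the image of the object assignment this yields essential surjectivity for $\CP(\mathcal{C})\to\CP(G)$, a covariant $*$-isomorphism being an isomorphism in $\CP(G)$. For $\Chan(\mathcal{C})\to\Chan(G)$ I must further check that such an isomorphism $\phi$ is functional-preserving, hence an isomorphism in $\Chan(G)$: $\phi$ transports the canonical functional of its source to a $G$-invariant faithful positive functional on its target whose associated Frobenius algebra is again standard, special and normalized with $u^{\dagger}u=\dim_q$, so by the uniqueness in Lemma~\ref{lem:uniquespecialfunctional} it must coincide with the canonical functional of the target, and the same applies to $\phi^{-1}$.

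I expect the only genuinely delicate point to be this compatibility of functionals in the $\Chan$ setting: the $\CP$ statement is formal once the four propositions are in hand, whereas for $\Chan$ one must repeatedly confirm that the counit/canonical-functional bookkeeping lines up — both in showing the object functor lands in $\Chan(G)$ and in showing the isomorphisms witnessing essential surjectivity preserve the canonical functional, each time leaning on the uniqueness in Lemma~\ref{lem:uniquespecialfunctional}. With these checks in place, fully faithful together with essentially surjective yields both equivalences.
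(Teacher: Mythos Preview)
Your proposal is correct and follows the same approach as the paper: the paper presents this theorem without a separate proof, introducing it with ``Altogether, what we have proven is an equivalence of categories,'' so your assembly of Propositions~\ref{prop:imisgcalg}, \ref{prop:imiscovchan}, and the two unlabelled lifting/essential-surjectivity propositions into the standard fully-faithful-and-essentially-surjective criterion is exactly what the paper intends the reader to do. Your explicit verification that the counit of $U([A,m,u])$ coincides with the canonical $G$-invariant functional (via $u^{\dagger}u=\dim_q(A)$ for standard special Frobenius algebras and the uniqueness in Lemma~\ref{lem:uniquespecialfunctional}) is a point the paper leaves implicit, so your treatment of the $\Chan$ case is in fact more careful than the paper's.
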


\section{Entanglement-symmetries of covariant channels}
\label{sec:thm}
We can finally prove the construction of entanglement-symmetries discussed in the introduction. 

For the remainder of this section we fix a compact quantum group $G$ and let $F: \Rep(G) \to \Hilb$ be the canonical fibre functor. Let $F': \Rep(G) \to \Hilb$ be some other fibre functor , and let $G'$ be the compact quantum group associated to it by Tannaka duality. Let $(\alpha,H_e): F \to F'$ be a UPT, and let $d:=\dim(H_e)$.

Recall the definition of a $*$-cohomomorphism (Definition~\ref{def:starhomcohom}) between Frobenius algebras, and that a $*$-cohomomorphism between separable Frobenius algebras in $\Hilb$ is in particular a channel (Lemma~\ref{lem:starcohom}). Recall also the Frobenius algebra structure of $B(H_e) \cong H \otimes H$ (Example~\ref{ex:bh}).
\begin{theorem}\label{thm:starcohoms}
Let $[A,m,u]$ be an $\F$ in $\Rep(G)$. Then we obtain:
\begin{itemize}
\item A $*$-cohomomorphism 
$$u_{[A,m,u]}: F([A,m,u]) \otimes B(H_e) \to{} F'([A,m,u])$$
defined as follows:
\begin{calign}\label{eq:udef}
\includegraphics[scale=0.9,valign=c]{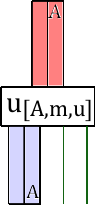}
~~:=~~
\sqrt{d}
\includegraphics[scale=0.9,valign=c]{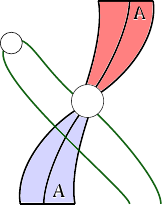}
\end{calign}
\item A $*$-cohomomorphism 
$$v_{[A,m,u]}: F'([A,m,u]) \otimes B(H_e) \to{} F([A,m,u])$$
defined as follows:
\begin{calign}\label{eq:vdef}
\includegraphics[scale=0.9,valign=c]{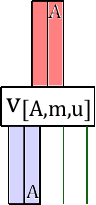}
~~:=~~
\sqrt{d}
\includegraphics[scale=0.9,valign=c]{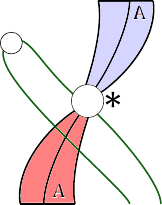}
\end{calign}
\end{itemize}
These $*$-cohomomorphisms obey the following \emph{entanglement-invertibility} equations with respect to the maximally entangled state $\Psi: \mathbb{C} \to B(H) \otimes B(H)$:
\begin{calign}\label{eq:entinvertibility}
\includegraphics[scale=0.9,valign=c]{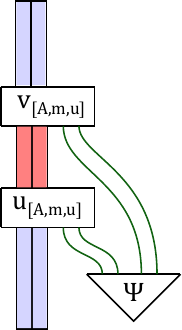}
~~=~~
\includegraphics[scale=0.9,valign=c]{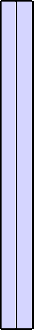}
&&
\includegraphics[scale=0.9,valign=c]{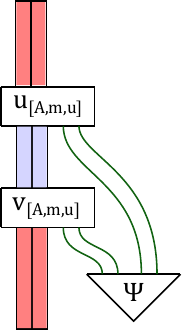}
~~=~~
\includegraphics[scale=0.9,valign=c]{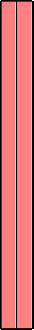}
\end{calign}
These $*$-cohomomorphisms are also \emph{natural} for CP morphisms $f: [A_1,m_1,u_1] \to{} [A_2,m_2,u_2]$:
\begin{calign}\label{eq:starcohomsnat}
\includegraphics[scale=0.9,valign=c]{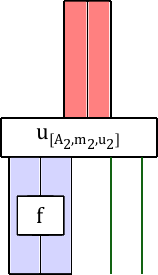}
~~=~~
\includegraphics[scale=0.9,valign=c]{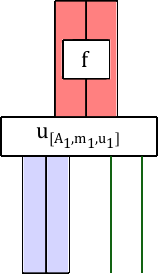}
&&
\includegraphics[scale=0.9,valign=c]{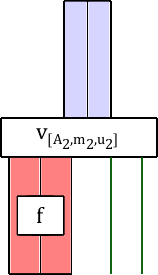}
~~=~~
\includegraphics[scale=0.9,valign=c]{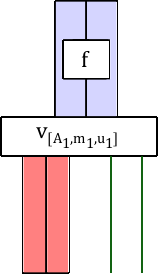}
\end{calign}
\end{theorem}
\begin{proof}
We begin by showing that the maps $\{u_{[A,m,u]}\}$ and $\{v_{[A,m,u]}\}$ are $*$-cohomomorphisms. Recall the definition of the multiplication and unit of $B(H)$ (Example~\ref{ex:bh}) and the tensor product of $\F$s (Definition~\ref{def:tensorprodfrob}). We show the proof for $\{u_{[A,m,u]}\}$; the proof for $\{v_{[A,m,u]}\}$ is similar. For the first equation of~\eqref{eq:cohomo}:
\begin{calign}\nonumber
\sqrt{d}
\includegraphics[scale=0.9,valign=c]{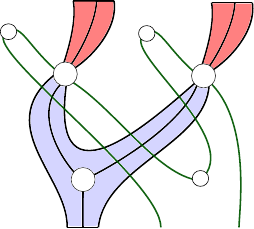}
~~=~~
\sqrt{d}~
\includegraphics[scale=0.9,valign=c]{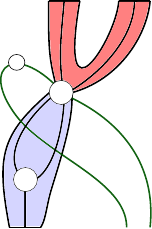}
~~=~~
\sqrt{d}~~
\includegraphics[scale=0.9,valign=c]{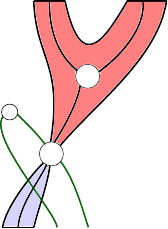}
\end{calign}
Here the first equality is by a snake equation and monoidality of the UPT $\alpha$~\eqref{eq:pntmonmon}; the second equation is by naturality of $\alpha$~\eqref{eq:pntmonnat}. For the second equation of~\eqref{eq:cohomo}:
\begin{calign}\nonumber
\sqrt{d}
\includegraphics[scale=0.9,valign=c]{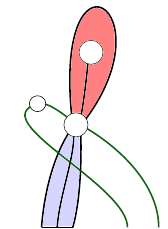}
~~=~~
\sqrt{d}~~
\includegraphics[scale=0.9,valign=c]{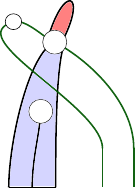}
~~=~~
\sqrt{d}~~
\includegraphics[scale=0.9,valign=c]{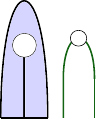}
\end{calign}
Here the first equality is by naturality of the UPT $\alpha$~\eqref{eq:pntmonnat} and the second equality is by monoidality of $\alpha$~\eqref{eq:pntmonmonunit} and unitarity of the functor $F'$.
For the third equation of~\eqref{eq:cohomo}:
\begin{calign}\nonumber
\sqrt{d}~~
\includegraphics[scale=0.9,valign=c]{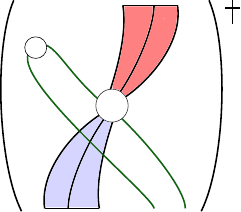}
~~=~~
\sqrt{d}
\includegraphics[scale=0.9,valign=c]{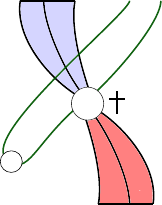}
~~=~~
\sqrt{d}
\includegraphics[scale=0.9,valign=c]{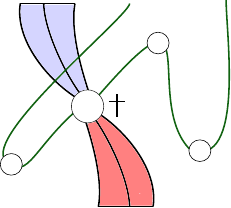}
\\\nonumber~~=~~
\sqrt{d}~
\includegraphics[scale=0.9,valign=c]{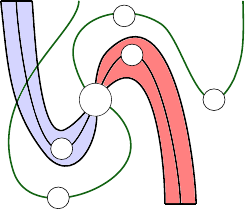}
~~=~~
\sqrt{d}~~
\includegraphics[scale=0.9,valign=c]{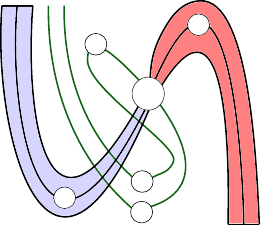}
\end{calign}
Here the first equality is by the `horizontal flip' calculus of the dagger; the second equality is by a snake equation; the third equality is by the definition of the dual UPT~\eqref{eq:dualpnt};\footnote{We have simplified the proof by assuming that the chosen dual of $A$ in the pivotal dagger category $\Rep(G)$ is $A$ itself under the duality corresponding to the Frobenius algebra $[A,m,u]$. The proof can easily be made fully general using uniqueness of standard duals up to unitary isomorphism.} and the fourth equality is by~\eqref{eq:untangling}. In the final diagram we recognise the cup~\eqref{eq:cupcapfrob} of $F([A,m,u]) \otimes B(H_e)$ and the cap~\eqref{eq:cupcapfrob} of $F'([A,m,u])$.

We now show the entanglement-invertibility equations~\eqref{eq:entinvertibility}. Recall the expression~\eqref{eq:maxentstate} for the channel  $\Psi: \mathbb{C} \to B(H) \otimes B(H)$ initialising the maximally entangled state. We prove the first equation of~\eqref{eq:entinvertibility}:
\begin{calign}\nonumber
\frac{1}{d}~~
\includegraphics[scale=0.9,valign=c]{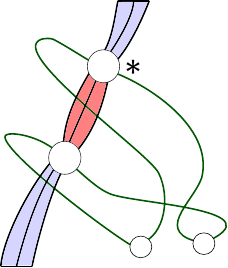}
~~=~~
\frac{1}{d}~~
\includegraphics[scale=0.9,valign=c]{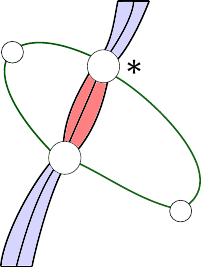}
~~=~~
\frac{1}{d}~~
\includegraphics[scale=0.9,valign=c]{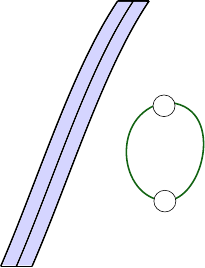}
~~=~~
\includegraphics[scale=0.9,valign=c]{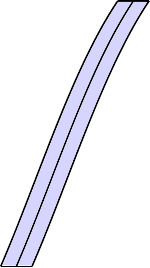}
\end{calign}
Here the first equality is by~\eqref{eq:untangling}; the second equality is by the pull-through equations for the UPT and its dual~\eqref{eq:cupcapmodsdualpnt}; and the third equality is by evaluation of $\dim(H_e)=d$.
The second equation of~\eqref{eq:entinvertibility} is proven similarly. 

The naturality equations~\eqref{eq:starcohomsnat} follow immediately from naturality of the UPTs $\alpha$ and $\alpha^*$~\eqref{eq:pntmonnat}.
\end{proof}
\begin{corollary}\label{cor:entequivs}
Let $f: [A_1,m_1,u_1] \to{} [A_2,m_2,u_2]$ be a channel between standard separable Frobenius algebras in $\Rep(G)$, and let 
\begin{align*}
F(f): F([A_1,m_1,u_1]) &\to F([A_2,m_2,u_2])\\
F'(f): F'([A_1,m_1,u_1]) &\to F'([A_2,m_2,u_2])
\end{align*}
be the corresponding channels in $\Chan(G)$ and $\Chan(G')$ respectively. Then: 
\begin{itemize}
\item $(u_{[A_1,m_1,u_1]},v_{[A_2,m_2,u_2]},H_e)$ is an entanglement-assisted channel coding scheme for $F(f)$ from $F'(f)$.
\item $(v_{[A_1,m_1,u_1]},u_{[A_2,m_2,u_2]},H_e)$ is an entanglement-assisted channel coding scheme for $F'(f)$ from $F(f)$.
\end{itemize}
That is, the following equations hold:
\begin{calign}\label{eq:starcohomentsymms}
\includegraphics[scale=0.9,valign=c]{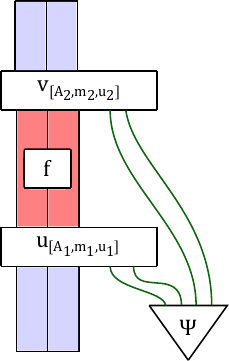}
~~=~~
\includegraphics[scale=0.9,valign=c]{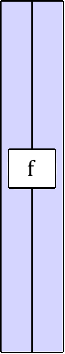}
~~~&&~~~
\includegraphics[scale=0.9,valign=c]{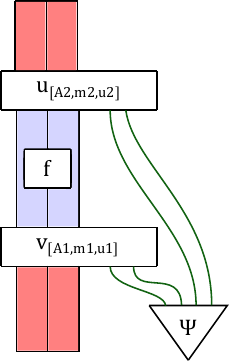}
~~=~~
\includegraphics[scale=0.9,valign=c]{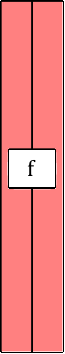}
\end{calign}
\end{corollary}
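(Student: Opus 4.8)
The plan is to read off each equation in~\eqref{eq:starcohomentsymms} as an immediate consequence of the two families of identities already established in Theorem~\ref{thm:starcohoms}: the naturality equations~\eqref{eq:starcohomsnat}, which let $f$ be slid through the $u$- and $v$-boxes (interchanging $F(f)$ and $F'(f)$), and the entanglement-invertibility equations~\eqref{eq:entinvertibility}, which collapse the composite of a $u$-box and a $v$-box against the entangled state $\Psi$ to an identity. No further computation involving the UPT $\alpha$ is needed; everything reduces to string-diagram rewriting using results proved above. Note also that the two displayed equations are, by definition, exactly the statements that $(u_{[A_1,m_1,u_1]}, v_{[A_2,m_2,u_2]}, H_e)$ and $(v_{[A_1,m_1,u_1]}, u_{[A_2,m_2,u_2]}, H_e)$ are entanglement-assisted channel coding schemes, so proving~\eqref{eq:starcohomentsymms} establishes the whole corollary.

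For the first equation I would start from its left-hand side, the coding scheme with encoder $u_{[A_1,m_1,u_1]}$, shared channel $F'(f)$ and decoder $v_{[A_2,m_2,u_2]}$, the auxiliary $B(H_e)$-wire being threaded from $\Psi$ through the encoder to the decoder. First I apply the $u$-half of the naturality equation~\eqref{eq:starcohomsnat}, namely $F'(f) \circ u_{[A_1,m_1,u_1]} = u_{[A_2,m_2,u_2]} \circ (F(f) \otimes \id_{B(H_e)})$, to slide the shared channel through the encoder and convert $F'(f)$ into $F(f)$. Since $F(f)$ now acts only on the principal wire while $\Psi$ acts only on the two auxiliary wires, the interchange law pulls $F(f)$ all the way to the bottom, past $\Psi$. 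What remains above $F(f)$ is precisely the left-hand side of the first entanglement-invertibility equation~\eqref{eq:entinvertibility} for $[A_2,m_2,u_2]$, which equals the identity on $F([A_2,m_2,u_2])$; the diagram therefore collapses to $F(f)$, as required.

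The second equation is handled identically with the roles of $F$ and $F'$ (and of $u$ and $v$) exchanged: I slide $F(f)$ through $v_{[A_1,m_1,u_1]}$ using the $v$-half of~\eqref{eq:starcohomsnat}, commute the resulting $F'(f)$ past $\Psi$ by the interchange law, and then apply the second entanglement-invertibility equation of~\eqref{eq:entinvertibility} to cancel the $v$- and $u$-boxes, leaving $F'(f)$. I expect the only point demanding care to be the bookkeeping of the three tensor factors --- the principal wire together with the two halves of the entangled resource --- when commuting $f$ past $\Psi$; once the diagrams are drawn with the auxiliary $B(H_e)$-wires tracked explicitly this is routine, and there is no genuine analytic or algebraic obstacle beyond it.
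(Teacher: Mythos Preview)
Your proposal is correct and follows essentially the same approach as the paper: apply naturality~\eqref{eq:starcohomsnat} to slide $f$ through the $u$/$v$ box, then apply entanglement-invertibility~\eqref{eq:entinvertibility} to collapse the remaining composite to the identity. The interchange-law step you mention is implicit in the paper's diagrammatic argument, and whether one slides $f$ through the encoder (as you do) or the decoder leads to the same two-step proof.
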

\begin{proof}
This follows straightforwardly from Theorem~\ref{thm:starcohoms}. For the first equation of~\eqref{eq:starcohomentsymms}:
\begin{calign}\nonumber
\includegraphics[scale=0.9,valign=c]{Figures/svg/entanglementsymms/starcohomentsymmeqs11.pdf}
~~=~~
\includegraphics[scale=0.9,valign=c]{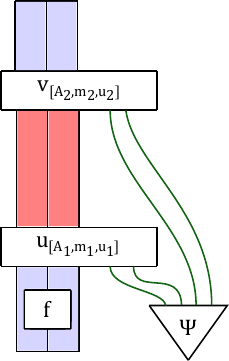}
~~=~~
\includegraphics[scale=0.9,valign=c]{Figures/svg/entanglementsymms/starcohomentsymmeqs12.pdf}
\end{calign}
Here the first equality is by naturality~\eqref{eq:starcohomsnat} and the second equality is by entanglement-invertibility~\eqref{eq:entinvertibility}. The second equation of~\eqref{eq:starcohomentsymms} is proven similarly.
\end{proof}
\noindent
Taken together, these entanglement-symmetries give rise to an equivalence of categories.
\begin{definition}\label{def:equivalence}
We define a map from objects and morphisms of $\CP(G)$ (resp. $\Chan(G)$) to objects and morphisms of $\CP(G')$ (resp. $\Chan(G')$) as follows. 

For each f.d. $G$-$C^*$-algebra $([A,m,u],\rho)$, choose a covariant unitary $*$-isomorphism $\iota_{([A,m,u],\rho)}: ([A,m,u],\rho) \to F([\tilde{A},\tilde{m},\tilde{u}])$ for some $[\tilde{A},\tilde{m},\tilde{u}]$ in $\Rep(G)$. (Such a $*$-isomorphism exists by Theorem~\ref{thm:equivalence}.)

The mapping is defined as follows:
\begin{itemize}
\item \emph{On objects}.
$
([A,m,u],\rho) \mapsto F'([\tilde{A},\tilde{m},\tilde{u}]).
$
\item \emph{On morphisms.} We map any covariant CP map (resp. channel) $f: ([A_1,m_1,u_1],\rho_1) \to ([A_2,m_2,u_2],\rho_2)$ to the following morphism $F'([\tilde{A}_1,\tilde{m}_1,\tilde{u}_1]) \to F'([\tilde{A}_2,\tilde{m}_2,\tilde{u}_2])$:
\begin{calign}\nonumber
\includegraphics[scale=1,valign=c]{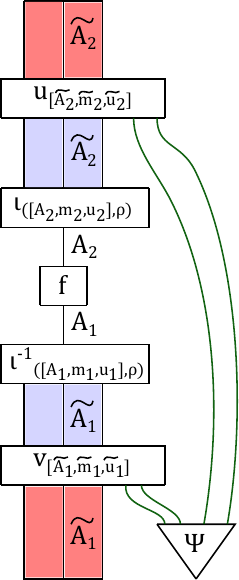}
\end{calign}
\end{itemize}
\end{definition}
\begin{theorem}\label{thm:starcohomequiv}
The map of Definition~\ref{def:equivalence} is an equivalence of categories $\CP(G) \cong \CP(G')$ (resp. $\Chan(G) \cong \Chan(G')$).
\end{theorem}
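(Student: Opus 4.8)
The plan is to exhibit the map of Definition~\ref{def:equivalence} --- call it $\Phi$ --- as a composite of two instances of the equivalence of Theorem~\ref{thm:equivalence}, so that the equivalence property comes essentially for free. The starting point is that $F'$ is itself a fibre functor on the rigid $C^*$-tensor category $\Rep(G)$, and that $G'$ is by definition the compact quantum group reconstructed from the pair $(\Rep(G),F')$. Hence Theorem~\ref{thm:equivalence}, applied once to $(\Rep(G),F)$ and once to $(\Rep(G),F')$, furnishes two equivalences
\[
F_*\colon \CP(\Rep(G)) \xrightarrow{\ \sim\ } \CP(G), \qquad F'_*\colon \CP(\Rep(G)) \xrightarrow{\ \sim\ } \CP(G'),
\]
acting by $F$ (resp.\ $F'$) on both objects and morphisms, and similarly on the counit-preserving subcategories $\Chan$.

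First I would repackage the choices made in Definition~\ref{def:equivalence} as a quasi-inverse $R\colon \CP(G) \to \CP(\Rep(G))$ of $F_*$. On objects, the chosen covariant unitary $*$-isomorphism $\iota_X\colon X \to F([\tilde A,\tilde m,\tilde u])$ sets $R(X):=[\tilde A,\tilde m,\tilde u]$. For a covariant CP map $f\colon X_1 \to X_2$ the conjugate $\iota_{X_2}\circ f\circ \iota_{X_1}^{-1}$ is a covariant CP map $F(R(X_1)) \to F(R(X_2))$, so by full faithfulness of $F_*$ there is a unique morphism $R(f):=\tilde f$ in $\Rep(G)$ with $F(\tilde f)=\iota_{X_2}\circ f\circ \iota_{X_1}^{-1}$. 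Functoriality of $R$ is then automatic from faithfulness of $F$, since $F(R(\id_X))=\id$ and $F(R(g\circ f))=F(R(g))\circ F(R(f))$ force $R(\id_X)=\id$ and $R(g\circ f)=R(g)\circ R(f)$; and $R$ is an equivalence because $F_*$ is and each $\iota_X$ is invertible.

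The heart of the argument is the identification $\Phi = F'_* \circ R$. On objects this holds by inspection, both sides sending $X$ to $F'([\tilde A,\tilde m,\tilde u])$. On morphisms I would read off the string diagram defining $\Phi(f)$: after conjugating by $\iota_{X_1},\iota_{X_2}$ its inner morphism is exactly $F(\tilde f)$, and the surrounding $*$-cohomomorphisms $v_{R(X_1)}$, $u_{R(X_2)}$ together with the maximally entangled state $\Psi$ reproduce verbatim the left-hand side of the second entanglement-symmetry equation~\eqref{eq:starcohomentsymms} of Corollary~\ref{cor:entequivs}; that corollary identifies this with $F'(\tilde f) = (F'_*\circ R)(f)$. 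This is the step I expect to be the main obstacle: one must line up the concrete diagram of Definition~\ref{def:equivalence} with the entanglement-assisted coding scheme of Corollary~\ref{cor:entequivs}, checking that the encoding/decoding roles of $v_{R(X_1)}$ and $u_{R(X_2)}$ and the placement of $\Psi$ are precisely those witnessing $F'(\tilde f)$ from $F(\tilde f)$.

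With this identification the conclusion is formal: $R$ is a quasi-inverse of the equivalence $F_*$, hence itself an equivalence, and $F'_*$ is an equivalence by Theorem~\ref{thm:equivalence}, so the composite $\Phi = F'_*\circ R$ is an equivalence $\CP(G)\simeq \CP(G')$. The same argument restricts to the subcategories of counit-preserving morphisms, using that $u$ and $v$ are channels and that $F_*,F'_*$ preserve and reflect the trace-preservation condition (Theorem~\ref{thm:equivalence}, Theorem~\ref{thm:starcohoms}), to yield $\Chan(G)\simeq \Chan(G')$.
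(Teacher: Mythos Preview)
Your proposal is correct and follows the same approach as the paper: identify $\Phi$ as the composite $F'_* \circ R$, where $R$ is a weak inverse of the equivalence $F_*\colon \CP(\Rep(G)) \to \CP(G)$ of Theorem~\ref{thm:equivalence}, and $F'_*\colon \CP(\Rep(G)) \to \CP(G')$ is the second instance of that equivalence. The paper's proof states this identification tersely and leaves the morphism-level verification implicit; your explicit appeal to the second equation of Corollary~\ref{cor:entequivs} to match the diagram of Definition~\ref{def:equivalence} with $F'(\tilde f)$ is exactly the detail the reader is expected to supply.
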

\begin{proof}
This follows straightforwardly from Theorem~\ref{thm:equivalence}; indeed, the map is precisely the composition of a weak inverse of the equivalence $\CP(\Rep(G)) \to \CP(G)$ (resp. $\Chan(\Rep(G)) \to \Chan(G)$) with the equivalence $\CP(\Rep(G)) \to \CP(G')$ (resp. $\Chan(\Rep(G)) \to \Chan(G')$).
\end{proof}

\section{Example and application}\label{sec:ex}
We now present a brief example and application of our construction in the case where the $G$-action corresponds to a finite group grading. 

\subsection{Example: finite group grading}

Because in a rigid $C^*$-tensor category all objects are  finite direct sums of simple objects, in order to specify such a category it is sufficient to specify only the isomorphism classes of simple objects and their fusion rules.
\begin{definition}
Let $G$ be a finite group. We define a rigid $C^*$-tensor category $\Hilb(G)$ as follows. 
\begin{itemize}
\item Simple objects $[g],[h],\dots$: indexed by group elements $g,h,\dots \in [g]$.
\item Tensor product: $[g] \otimes [h] := [gh]$, i.e. given by the product in the group.
\item Tensor unit: $[e]$, where $e \in G$ is the identity.
\item Dual objects: the dual of $[g]$ is $[g^{-1}]$, where $g^{-1} \in G$ is the inverse element in the group. 
\item Canonical fibre functor: maps each simple object to the 1-dimensional Hilbert space $\mathbb{C}$.
\end{itemize}
\end{definition}
\noindent 
The compact quantum group associated to $\Hilb(G)$ is the Hopf $*$-algebra $C(G)$ of complex functions on the group $G$. We now recall the classification of fibre functors on $\Hilb(G)$.  First we define an equivalence relation on pairs $(L,[\psi])$ of a subgroup $L<G$ and a 2-cohomology class $[\psi] \in H^2(L, U(1))$, as follows:
\begin{align}\nonumber
(L_1,[\psi_1]) \sim (L_2,[\psi_2]) \qquad \Leftrightarrow \qquad L_2 = g L_1 g^{-1} &\textrm{~and~} \psi_1 \textrm{~is cohomologous to~} 
\\\label{eq:ctequivrel}
&\psi_2^g(x,y):= \psi_2(g x g^{-1}, gyg^{-1}) \textrm{~for some~}g \in G
\end{align}
An equivalence class of rank-one module categories over $\Hilb(G)$ is determined by a 2-cohomology class $[\psi] \in H^2(G,U(1))$, up to the equivalence relation~\eqref{eq:ctequivrel}~\cite[Ex. 7.4.10]{Etingof2016}. To be explicit, a rank-one module category has a single simple object $X$; the action of the simple objects of $\Hilb[G]$ on $X$ simply preserves $X$, i.e. $[g] \overline{\otimes} X = X$ for all $g \in G$, while the associativity constraint is  specified by the 2-cocycle $\psi$:
\begin{align*}
m_{g,h}:= \psi(g,h) \id_{X}: (g \otimes h) \overline{\otimes} X \to g \overline{\otimes} (h \overline{\otimes} X) 
\end{align*}
A rank-one module category is precisely a fibre functor, since an action of a tensor category $\mathcal{T}$ on a category $\mathcal{M}$ is the same thing as a tensor functor $\mathcal{T} \to \End(M)$~\cite[Prop. 7.1.3]{Etingof2016}. We thereby obtain the following characterisation of the fibre functors $F_{\psi}$ on $\Hilb(G)$.
\begin{lemma}
Let $\psi \in Z^2(G,U(1))$ be some 2-cocycle. Then $F_{\psi} \cong F \circ E_{\psi}$, where $F: \Hilb(G) \to \Hilb$ is the canonical fibre functor and $E_{\psi}$ is an autoequivalence of $\Hilb(G)$, specifically the identity functor equipped with the following generally nontrivial multiplicator:
\begin{align*}
m_{[g_1],[g_2]} = \overline{\psi(g_1,g_2)} ~\id_{[g_1g_2]}: E_{\psi}([g_1]) \otimes E_{\psi}([g_2]) \to E_{\psi}([g_1 g_2])
\end{align*}
\end{lemma}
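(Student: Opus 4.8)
The plan is to reduce everything to a comparison of multiplicator 2-cocycles, exploiting the fact that on the pointed category $\Hilb(G)$ fibre functors are very rigid. First I would observe that each simple object $[g]$ is invertible, with inverse $[g^{-1}]$, so for any fibre functor $F'\colon \Hilb(G)\to\Hilb$ the relation $F'([g])\otimes F'([g^{-1}])\cong F'([e])\cong\mathbb{C}$ forces $F'([g])$ to be one-dimensional; hence the underlying functor of $F'$ is unitarily naturally isomorphic to $F$. Consequently $F'$ is pinned down, up to monoidal natural isomorphism, by a single datum: the scalar $\omega_{F'}(g,h)\in U(1)$ by which its multiplicator $\mu^{F'}_{[g],[h]}\colon F'([g])\otimes F'([h])\to F'([gh])$ differs from the canonical identification $\mathbb{C}\otimes\mathbb{C}\cong\mathbb{C}$. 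The associativity coherence of a monoidal functor says exactly that $\omega_{F'}\in Z^2(G,U(1))$, and two fibre functors are monoidally naturally isomorphic precisely when their cocycles are cohomologous. So it is enough to compute the cocycles of $F_\psi$ and of $F\circ E_\psi$ and check they agree.

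For the composite $F\circ E_\psi$ this is immediate. Its multiplicator is $F(m^{E_\psi}_{[g],[h]})\circ\mu^F_{E_\psi[g],E_\psi[h]}$; since the canonical fibre functor $F$ has trivial multiplicator and $E_\psi$ is the identity on objects and morphisms with $m^{E_\psi}_{[g],[h]}=\overline{\psi(g,h)}\,\id_{[gh]}$, one reads off $\omega_{F\circ E_\psi}(g,h)=\overline{\psi(g,h)}$. (That $E_\psi$ is a legitimate monoidal functor at all is the cocycle condition for $\overline\psi$, which holds since $\psi\in Z^2$.)

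The real work is the cocycle of $F_\psi$ itself. I would use the standard description of the fibre functor attached to the rank-one module category $\mathcal{M}_\psi$: writing $M$ for its unique simple object, the action of $\Hilb(G)$ on $\Vec\simeq\mathcal{M}_\psi$ is the same data as a monoidal functor $\Hilb(G)\to\End(\mathcal{M}_\psi)\cong\Vec$, $X\mapsto(X\act -)$, and this functor \emph{is} $F_\psi$ after the identification $F_\psi(X)=X\act M$. The crucial point is that under $\End(\mathcal{M}_\psi)\cong\Vec$ the composition of endofunctors corresponds to the tensor product, so the multiplicator of $F_\psi$ is carried not to the module associativity constraint $m_{[g],[h],M}\colon([g]\otimes[h])\act M\to[g]\act([h]\act M)$ but to its \emph{inverse}; since $m_{[g],[h],M}=\psi(g,h)$ by construction of $\mathcal{M}_\psi$, we get $\omega_{F_\psi}(g,h)=\psi(g,h)^{-1}=\overline{\psi(g,h)}$, matching the previous paragraph and yielding $F_\psi\cong F\circ E_\psi$.

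The main obstacle — and precisely the reason the conjugate $\overline\psi$ rather than $\psi$ appears in the statement — is this single inversion: one must track the direction of the module associator and its passage through $\End(\mathcal{M}_\psi)\cong\Vec$ so that the multiplicator of $F_\psi$ emerges as $\psi^{-1}$, using $\overline\psi=\psi^{-1}$ for $U(1)$-valued $\psi$. Everything else (one-dimensionality of $F'([g])$, the cocycle condition from associativity coherence, and the triviality of the multiplicator of $F$) is routine and I would not spell it out in detail.
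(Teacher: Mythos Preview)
The paper states this lemma without proof, so there is nothing to compare against directly. Your argument is correct and supplies exactly the missing justification: on the pointed category $\Hilb(G)$ every fibre functor has underlying functor unitarily isomorphic to the canonical one, so the monoidal isomorphism class is recorded by a cocycle in $Z^2(G,U(1))$; you then correctly read off $\overline{\psi}$ for both $F\circ E_\psi$ (immediate) and $F_\psi$ (via the module-category description, where the multiplicator of the induced functor $\Hilb(G)\to\End(\mathcal{M}_\psi)\cong\Vec$ is the \emph{inverse} of the module associator). Your explicit flagging of that single inversion---the reason $\overline{\psi}$ rather than $\psi$ appears---is the one genuinely convention-sensitive step, and you have it right relative to the paper's convention that the module associativity constraint is ``specified in the obvious way by the 2-cocycle $\psi$''.
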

\noindent
We will now compute the autoequivalence of $\CP(C(G))$ associated to the fibre functor $F_{\psi}$. We first recall the classification of $\F$s in $\Hilb(G)$. Every $\F$ is a direct sum of indecomposable $\F$s~\cite[Lem. 3.11]{Verdon2021}, so we need only classify the indecomposable $\F$s. 
\begin{definition}
For any subgroup $L<G$ and 2-cocycle $\phi \in Z^2(L, U(1))$, let $A(L,\phi)$ be the following $\F$ in $\Hilb(G)$:
\begin{itemize}
\item Underlying object: $\bigoplus_{g \in L} [g]$.
\item Multiplication: specified on the factors by $[g] \otimes [h] \overset{\phi(g,h) \id_{[gh]}}{\to} [gh] $.
\item Unit: $\begin{pmatrix} \id_{[e]} & 0 & \dots & 0 \end{pmatrix}^T : [e] \to \bigoplus_{g \in L} [g]$.
\end{itemize}
We call $A(L,\phi)$ the \emph{$\phi$-twisted group algebra for $L$}.
\end{definition}
\noindent
By~\cite[Prop. 3.1]{Ostrik2003}, indecomposable $\F$s in $\Hilb(G)$ are all of the form $V \otimes A(H,\phi) \otimes V^*$ for some pair $(L,\phi)$ and some object $V$ of $\Hilb(G)$, where the multiplication and unit of the $\F$ are defined using the multiplication and unit of $A(H,\phi)$ together with the cup and cap of the duality on $V$:
\begin{align*}
\includegraphics[scale=1]{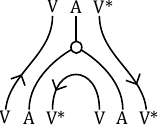}
\end{align*}
\begin{proposition}\label{prop:autoequivindecomp}
Let $\psi \in Z^2(G,U(1))$ be a 2-cocycle. Up to unitary natural isomorphism, the autoequivalence of $\CP(C(G))$ associated to the fibre functor $F_{\psi}$ is defined as follows on indecomposable $\F$s:
\begin{itemize}
\item On objects: the indecomposable $\F$ $V \otimes A(L,\phi) \otimes V^*$ is taken to the indecomposable $\F$ $V \otimes A(L,\overline{\psi} \phi) \otimes V^*$.
\item On morphisms: a CP morphism $f: V_1 \otimes A(L_1,\phi_1) \otimes V_1^* \to V_2 \otimes A(L_2,\phi_2) \otimes V_2^*$ is taken to $v_2 \circ F_{\psi}(f) \circ v_1^{\dagger}$, where \begin{align*}
v_1: F_{\psi}(V_1 \otimes A(L_1,\phi_1) \otimes V_1^*) &\to V_1 \otimes A(L_1,\overline{\psi}\phi_1) \otimes V_1^* \\
v_2: F_{\psi}(V_2 \otimes A(L_2,\phi_2) \otimes V_2^*) &\to V_2 \otimes A(L_2,\overline{\psi}\phi_2) \otimes V_2^*
\end{align*}
are canonical unitary isomorphisms of $\F$s defined in~\eqref{eq:unitaryFisos}.
\end{itemize}
\end{proposition}
\begin{proof}
We observe that the multiplication of the algebra $F_{\psi}(V \otimes A(L,\phi) \otimes V^*)$ is defined as follows:
\begin{align*}
\includegraphics[scale=1,valign=c]{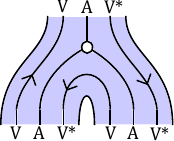}
~~=~~
\includegraphics[scale=1,valign=c]{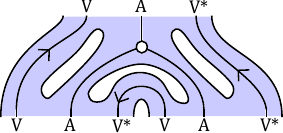}
\end{align*}
It is therefore unitarily isomorphic to the $\F$ on $F_{\psi}(V) \otimes F_{\psi}(A(L,\phi)) \otimes F_{\psi}(V^*)$ with the following multiplication and unit:
\begin{align}\label{eq:equivFdef}
\includegraphics[scale=1,valign=c]{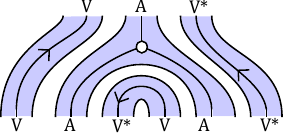}
&&
\includegraphics[scale=1,valign=c]{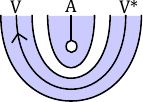}
\end{align}
Here we have used unitarity of the fibre functor $F_{\psi}$ (Definition~\ref{def:fibre}).
Now we observe that the following cup and cap appearing in~\eqref{eq:equivFdef} are standard, so they are are related to the original cup and cap on $V$ by unitary isomorphism~\cite[Prop. 2.2.15]{Neshveyev2013}:
\begin{align*}
\includegraphics[scale=1,valign=c]{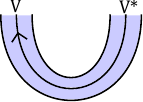}
~~=~~
\includegraphics[scale=1,valign=c]{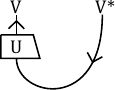}
&&
\includegraphics[scale=1,valign=c]{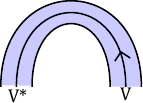}
~~=~~
\includegraphics[scale=1,valign=c]{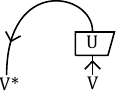}
\end{align*}
It follows that there is a unitary isomorphism of $\F$s $F_{\psi}(V \otimes A(L,\phi) \otimes V^*) \cong V \otimes A(L,\overline{\psi} \phi) \otimes V^*$, as claimed.  The effect of the equivalence on a CP morphism is to conjugate it by these unitary isomorphisms:
\begin{align}\label{eq:unitaryFisos}
\includegraphics[scale=1,valign=c]{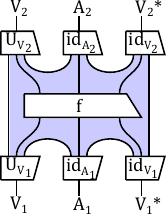}
\end{align}
\end{proof}
\begin{example}[Abelian groups of central type: from classical to quantum channels]\label{ex:gradedchans}
Let $G$ be an finite abelian group. Consider the (un)twisted group algebra $A(G,1)$, where $1$ is the trivial 2-cocycle on $G$. 

Under the canonical fibre functor on $\Hilb(G)$, this can be considered as a $G$-graded commutative f.d. $C^*$-algebra, equipped with an inner product such that the direct sum decomposition into one-dimensional homogeneously graded subspaces is orthogonal. There are two orthonormal bases of interest for this algebra. The first is the graded basis $\{\ket{g}~|~g \in G\}$, which has one element for each component of the grading. The second is the factor basis; that is,  the basis realising the isomorphism of f.d. $C^*$-algebras $A(G,1) \cong \oplus_{i=1}^{|G|}\mathbb{C}$. In the factor basis, a channel $f$ on $A(G,1)$ is precisely an $|G| \times |G|$ stochastic matrix $(f_{i,j})_{ij}$, i.e. a matrix with entries $f_{i,j} \in [0,1]$ satisfying $\sum_{i} f_{i,j} = 1$ for all $j \in \{1,\dots,|G|\}$.

Explicitly, the factor basis is the basis of characters $\{\ket{\chi}~|~ \chi \in G^*\}$, where 
\begin{align}\label{eq:factorbasis}
\ket{\chi}:= \frac{1}{\sqrt{|G|}} \sum_{g \in G} \chi(g) \ket{g}.
\end{align} There is a representation  of the group of characters $G^*$ on the algebra $A(G,1)$ given in the factor basis by $\chi_1 \cdot \ket{\chi_2} = \ket{\chi_1 \chi_2}$, for all $\chi_1,\chi_2 \in G^*$. A channel $f$ on $A(G,1)$ preserving the grading is precisely a stochastic matrix which is invariant under this action, i.e. satisfying:
\begin{align}\label{eq:gradedchan}
f_{\chi_i,\chi_j} = f_{\chi \chi_i,\chi \chi_j} ~~~~~\forall~ \chi,\chi_i,\chi_j \in G^*
\end{align}
Let us suppose that $G$ is an abelian group of \emph{central type}; that is, it possesses a 2-cocycle $\psi$ such that $A(G,\psi)$ is a graded matrix algebra (i.e. a $G$-graded algebra $B(H)$, where $\dim(H) = \sqrt{|G|}$). (For example, let $G = \mathbb{Z}_2 \times \mathbb{Z}_2$, and let $\psi$ be the 2-cocycle corresponding to multiplication of the Pauli matrices.) Then the autoequivalence associated to the functor $F_{\overline{\psi}}$ takes the graded f.d. $C^*$-algebra $A(G,1)$ to the graded f.d. $C^*$-algebra $A(G,\psi) \cong B(H)$, which has the same graded basis, but where the elements now multiply as $\ket{g} \cdot \ket{h} = \psi(g,h) \ket{gh}$. These elements of $B(H)$ form a \emph{nice unitary error basis} for $H$~\cite{Knill1996}, which we will write as $\{u_g~|~g \in G\}$.

Let us compute the image of a classical channel $f$ preserving the grading under this autoequivalence. Observe that the map $\rho(g,h):= \psi(g,h)\overline{\psi(h,g)}: G \times G \to U(1)$ is a bicharacter; in particular, the map 
\begin{align*}
\rho_R: G &\to G^*\\
g &\mapsto \rho(g,-)
\end{align*}
is an isomorphism of groups. 
Now the change of basis matrix $W$ from the factor basis to the graded basis in $A(L,1)$ is the Fourier transform:
$$
W_{g,\rho_R(h)}:= \frac{1}{\sqrt{|G|}}\rho_R(h)(g) = \frac{1}{\sqrt{|G|}} \rho(h,g)
$$
In the graded basis, the grading-preserving channel $f$ has the following entries:
\begin{align*}
f_{g,h} = (W f W^{\dagger})_{g,h} &= \frac{1}{|G|} \sum_{k,l \in G} \rho(k,g) f_{\rho_R(k),\rho_R(l)} \overline{\rho(l,h)} 
\\
&= \frac{1}{|G|} \sum_{k,l \in G} \rho(k,g) f_{1,\rho_R(k^{-1}l)} \rho(h,l) 
\\
&= \frac{1}{|G|} \sum_{k,x \in G} \rho(k,g) f_{1,\rho_R(x)} \rho(h,kx) 
\\
&= \frac{1}{|G|} \sum_{k,x \in G} \overline{\rho(g,k)}\rho(h,k) f_{1,\rho_R(x)} \rho(h,x) 
\\
&= \delta_{g,h} \sum_{x \in G} f_{1,\rho_R(x)} \rho(g,x) 
\end{align*}
Here for the second equality we used~\eqref{eq:gradedchan} and the definition of the bicharacter $\rho$;  for the fourth equality we used that $\rho$ is a bicharacter; and for the fifth equality we used orthogonality of characters for $\rho(g,-)$ and $\rho(h,-)$. By Proposition~\ref{prop:autoequivindecomp}, the autoequivalence corresponding to $F_{\psi}$ takes $f$ to a channel $f'$ with identical coefficients in the graded basis (since $V$ is trivial, so the unitary isomorphism in that proposition is trivial). 
It follows that the quantum channel $f'$ obtained by the entanglement-symmetry multiplies the unitary error basis elements $u_g$ by a scalar factor:
\begin{align}\label{eq:fdashquantum}
f'(u_g) = \sum_{x \in G} f_{1,\rho_R(x)} \rho(g,x) u_g
\end{align}
\end{example}
\begin{example}[From classical-to-quantum channels to quantum-to-classical channels]
In this example we will show that entanglement-symmetries can relate quantum-to-classical channels (channels from a matrix algebra to a commutative f.d. $C^*$-algebra) with classical-to-quantum channels (channels from a commutative f.d. $C^*$-algebra to a matrix algebra). From the perspective of entanglement-assisted channel coding, this is to say that a channel which transmits classical information based on measurement of a quantum state is an equivalent communication resource, in the entanglement-assisted setting, to a channel which transmits quantum information based on a classical input.

We will be somewhat more concrete in this example and work with the group $G:=\mathbb{Z}_2 \times \mathbb{Z}_2$. This is an abelian group of central type; the nondegenerate 2-cocycle $\psi$ arises from the projective representation of $\mathbb{Z}_2 \times \mathbb{Z}_2$ as Pauli matrices:
\begin{align}\nonumber
(0,0) \mapsto u_{00}:= \begin{pmatrix} 1 & 0 \\ 0 & 1 \end{pmatrix} && (1,0) \mapsto u_{10}:= \begin{pmatrix} 0 & 1 \\ 1 & 0 \end{pmatrix} 
\\\label{eq:paulicocycle}
(0,1) \mapsto u_{01}:=  \begin{pmatrix} 0 & -i \\ i & 0 \end{pmatrix} && (1,1) \mapsto u_{11}:=  \begin{pmatrix} 1 & 0 \\ 0 & -1 \end{pmatrix}
\end{align}
We observe that $\psi^2$ is cohomologous to the trivial 2-cocycle $1$. In terms of algebras, this corresponds to the isomorphism of $\F$s
\begin{align}\nonumber
A(G,\psi^2) &\to A(G,1) \\\label{eq:omegaisom}
\ket{g} &\mapsto \omega(g) \ket{g}
\end{align}
where $\omega((0,0)) = 1$ and $\omega(g) = -1$ for all $g \neq (0,0) \in G$.

By Proposition~\ref{prop:autoequivindecomp}, a covariant channel $f: A(G,1) \to A(G,\psi)$ will therefore be taken to a covariant channel $f': A(G,\psi) \to A(G,1)$ by the entanglement symmetry corresponding to the fibre functor $F_{\overline{\psi}}$. 

We will first describe the grading-preserving channels $f: A(G,1) \to A(G,\psi)$. Such a channel is defined by a choice of quantum state --- i.e a trace-1 positive element in $A(G,\psi)$ --- for each element of the factor basis of $A(G,1)$.  If the channel is to be grading-preserving, it must satisfy $\ket{g} \mapsto a_g u_g$ for every $g \in G$, where $\{a_g\}_{g \in G}$ are some complex numbers. Given the description~\eqref{eq:factorbasis} of the factor basis, we see that the channel is defined by
$$
\ket{\chi} \mapsto \frac{1}{2} \sum_{g \in G} a_g \chi(g) u_g.
$$
For these elements to be trace-1, we need $a_{00} = 1$. By explicit computation we find that the condition for positivity of these elements is
\begin{align*}
a_{10},a_{01},a_{11} \in \mathbb{R} && 1 - \sqrt{(a_{10})^2+(a_{01})^2+(a_{11})^2} \geq 0
\end{align*}
The coefficients describing such channels therefore form a unit ball in $\mathbb{R}^3$.

We will now calculate the quantum-to-classical channel $f'$ to which such a channel $f$ will be mapped by the entanglement-symmetry. By Proposition~\ref{prop:autoequivindecomp}, in the graded basis it will have exactly the same coefficients; that is, it will (after composition with the isomorphism~\eqref{eq:omegaisom}) map the element $u_{g} \in A(G, \psi)$ to the element $\omega(g) a_g \ket{g} \in A(G, 1)$. The channel is described by positive operators (POVM elements) $M_{\chi} \in A(G,\psi)$ for $\chi \in G^*$, such that 
$$f'(x) = \sum_{\chi \in G^*}2 \Tr(M_{\chi} x) \ket{\chi}.$$ 
(Here the scalar factor $2$ arises from our consideration of the separable trace.)

Since $\{u_g\}_{g \in G}$ is a basis for $A(G,\psi)$, we can write 
$$M_{\chi} = \sum_{g \in G} c_{\chi,g} u_g^{\dagger}$$
for some complex scalars $\{c_{\chi,g}\}_{\chi \in G^*, g \in G}$. By~\eqref{eq:factorbasis} we have $$\ket{g} = \frac{1}{2} \sum_{h\in G} \rho(g,h) \ket{\rho_R(h)},$$ and therefore $u_g \mapsto \omega(g)a_g\ket{g}$ implies
$$
\frac{\omega(g) a_g}{2} \rho(g,h) = 2\Tr(M_{\rho_R(h)} u_g) = 4 c_{\rho_R(h),g}~~~~\Rightarrow~~~~ c_{\rho_R(h),g} = \frac{\omega(g) a_g}{8}\rho(g,h)
$$
Here for the second equality we used orthogonality of $\{u_g\}_{g \in G}$ under the trace inner product. To check that this is indeed a POVM we observe that 
$$
2 \sum_{h \in G} M_{\rho_R(h)} = 2\sum_{h \in G} \frac{\omega(g)a_g}{8} \rho(g,h) u_g^{\dagger} 
= \omega((0,0))a_{00} \mathbbm{1} = \mathbbm{1}.
$$
Here for the second equality we used orthogonality of the characters $\rho(g,-)$. In summary, we see that the quantum-to-classical channel $f'$ has POVM elements
$$
M_{\chi} = \frac{1}{8}\sum_{g \in G} \omega(g) a_g \overline{\chi(g)} u_g^{\dagger}.
$$
\end{example}
\begin{remark}
We will briefly explain how to obtain the UPTs realising these entanglement-symmetries. Since $C(G)$ is a finite compact quantum group, every fibre functor is accessible from the canonical fibre functor by a unitary pseudonatural transformation~\cite[Cor. 3.17]{Verdon2020}. For a 2-cocycle $\psi \in Z^2(G,U(1))$, a UPT $F \to F_{\psi}$ can be straightforwardly calculated using~\cite[Thm. 4.12]{Verdon2020}. Explicitly, let $\pi: G \to B(H)$ be an irreducible projective representation of $G$ with cocycle $\psi$.  Let $[g]$ be a simple object of $\Hilb[G]$; we observe that $F(V) = F_{\psi}(V) = \mathbb{C}$. We then define a UPT $(\alpha,H): F \to F_{\psi}$ with components $\alpha_{[g]}: H \to H$ by $\alpha_{[g]}:= \pi(g)$. The $*$-cohomomorphism $F(A) \otimes B(H) \to F_{\psi}(A)$ can then be computed for any f.d. $C^*$-algebra $A$ using~\eqref{eq:udef}.
\end{remark}
\ignore{
For the following proposition, which is an immediate application of Theorem~\ref{}, we fix some notation. By Theorem~\ref{}, for any system $\tilde{A}$ in $\CP(C(G))$ the algebras $A = \tilde{F}(\tilde{A})$ and $\tilde{F}_{\psi}(\tilde{A})$ are identical as graded Hilbert spaces; the only difference is the twist in the multiplication. Let $A_g$ be the homogeneous subspace of $A$ for the group element $g \in G$. There are embeddings $i_g: A_g \to\tilde{F}(\tilde{A})$ and $i_g': A_g \to \tilde{F}_{\psi}(\tilde{A})$. Let $I_g := i_g' \circ i_g^{\dagger}: \tilde{F}(\tilde{A}) \to \tilde{F}_{\psi}(\tilde{A})$.
\begin{proposition}
Let $\tilde{A}$ be a system in $\CP(C(G))$. Let $\tilde{F}$ be the canonical fibre functor on $\CP(C(G))$, and let $\tilde{F}_{\psi}$ be the fibre functor associated to some 2-cocycle $\psi \in Z^2(G,U(1))$.

Then the entanglement-invertible channel $(M_{\tilde{A}},H): \tilde{F}(\tilde{A}) \to \tilde{F}_{\psi}(\tilde{A})$ and its inverse $(N_{\tilde{A}},H): \tilde{F}_{\psi}(\tilde{A}) \to \tilde{F}(\tilde{A})$ are defined as follows:
\begin{align*}
M_{\tilde{A}}(\rho \otimes \sigma) = \frac{1}{\sqrt{\dim(H)}} \sum_{g \in G} \Tr[\pi(g)\sigma] I_g (\rho)  &&
N_{\tilde{A}}(\rho \otimes \sigma) = \frac{1}{\sqrt{\dim(H)}} \sum_{g \in G} \Tr[\pi(g)^* \sigma] I_g^{\dagger} (\rho)
\end{align*}
\end{proposition}
}
\begin{remark}
These are not the only entanglement-symmetries associated to a finite group. Indeed, rather than the category $\Hilb$ of $G$-graded Hilbert spaces one can consider the category $\Rep(G)$ directly; now the systems possess a $G$-action rather than a $G$-grading. In this case the interesting fibre functors arise when $G$ is a group of central type. Again, all fibre functors are accessible from the canonical fibre functor by a unitary pseudonatural transformation. We will not go into details here as the description of the fibre functors is more complicated.
\end{remark}

\ignore{
This is already enough to compute $\hat{E}$ as far as we need to for our purposes:
\begin{itemize}
\item The $\F$ $A(H,\psi)$ is mapped to the $\F$ $A(H,\overline{\omega}\psi)$.
\item The right $A(H,\psi)$ module $V \otimes A(H,\psi)$ is mapped to the right $A(H,\overline{\omega}\psi)$-module $V \otimes A(H,\overline{\omega}\psi)$.
\item A module homomorphism $f: V \otimes A(H,\psi) \to W \otimes A(H,\psi)$ is mapped to itself, considered as a module homomorphism $f: V \otimes A(H,\overline{\omega}\psi) \to W \otimes A(H,\overline{\omega}\psi)$.
\end{itemize}
The first bullet point specifies the equivalence $\tilde{E}$ on objects of $\CP(G)$. 

The second and third bullet points can be used to compute the effect of $\tilde{E}$ on morphisms. The Choi theorem~\eqref{} gives a bijective correspondence between CP morphisms $V \otimes A(H_1,\psi_1) \otimes  V^* \to W \otimes A(H_2,\psi_2) \otimes W^*$ and positive elements of $\End_{A(H_2,\psi_2)-\Mod-A(H_1,\psi_1)}[A(H_2,\psi_2) \otimes W^* \otimes V \otimes A(H_1,\psi_1)]$. The equivalence $\hat{E}$ maps these elements to $\End_{A(H_2,\overline{\omega}\psi_2)-\Mod-A(H_1,\overline{\omega}\psi_1)}[A(H_2,\overline{\omega}\psi_2) \otimes W^* \otimes V \otimes A(H_1,\overline{\omega}\psi_1)]$ in the obvious way. It may of course happen that $\bar{\omega} \psi_i$ is not the chosen representative of its cohomology class; in this case there is an obvious isomorphism $A(H_i,\bar{\omega}\psi_i) \cong A(H_i,\underline{\bar{\omega}\psi_i})$, where $\underline{\bar{\omega}\psi_i}$ is the chosen representative. 
}
\ignore{
\noindent
We will now present an example where classical channels are transformed into quantum channels. For this we introduce the following notion. 
\begin{definition}
Let $G$ be a group. We say that a 2-cocycle $\psi \in Z^2(G,U(1))$ is \emph{nondegenerate} if the twisted group algebra $A(G,\psi)$  is a matrix algebra; or equivalently, if $G$ has, up to isomorphism, a single projective representation $H_{\psi}$ (of dimension $\sqrt{|G|}$) with 2-cocycle $\psi$, defining a $*$-isomorphism $A(G,\psi) \cong B(H_{\psi})$.

Nondegeneracy is preserved under multiplication by a coboundary; we call a cohomology class $[\psi] \in H^2(G,U(1))$ of nondegenerate 2-cocycles a \emph{nondegenerate cohomology class}.

A group possessing a nondegenerate cohomology class is called a \emph{group of central type.} We will sometimes find it convenient to abuse language by referring to a pair $(G,[\psi])$, where $G$ is a group of central type and $[\psi] \in H^2(G,U(1))$ is a nondegenerate 2-cohomology class, as a group of central type.
\end{definition}
\noindent 
An abelian group $A$ is of central type when it is of \emph{symmetric type}, i.e. when there exists some group $S$ such that $A \cong S \times S$. Nonabelian groups $G$ of central type do not admit such an easy classification; see~\cite{} for a list of such groups up to order 121.
} 
\ignore{
\begin{example}[Transformations from a classical system]
We consider all transformations of a single classical system arising from a finite group grading. By the classification~\eqref{}, all indecomposable graded commutative f.d. $C^*$-algebras are twisted group algebras $A(G,1)$, for some abelian group $G$. Possible transformations therefore correspond to 2-cohomology classes $[\psi] \in H^2(G,U(1))$.

Since $G$ is abelian, it is convenient to use the obvious isomorphism $\Hilb(G) \cong \Rep(G^*)$. We thereby consider the twisted group algebra $A(G,1)$ as the algebra $C(G^*)$ of functions on the group $G^*$, with basis $\{\delta_{\chi}~|~\chi \in G^*\}$, and an action of $G^*$ by $\chi' \cdot \delta_{\chi} =  \delta_{\chi' \chi}$. A grading-preserving channel $f$ on $A(G,1)$ is simply one covariant with respect to this action, i.e. satisfying $\chi \cdot f(\bar{\chi} \cdot \delta_{y}) = f(\delta_{y})$ for all $y \in G^*$. Such channels therefore correspond to stochastic matrices $(f_{x,y})_{x, y \in G^*}$, where $f(\delta_{y}) = \sum_{x \in G^*} f_{x,y} \delta_x$; the covariance constraint implies that $f_{x,y} = f_{\chi \cdot x, \bar{\chi} \cdot y}$, so the channel is determined by a choice of $|G^*|$ probabilities $\{f_{x,e}\}_{x \in G^*}$ satisfying $\sum_{x \in G^*} f_{x,e} = 1$. 

To compute the transformation we move from the basis $\{\delta_{\chi}\}$ to the basis $\{\bar{g}\}$. This is a Fourier transform, since $\bar{g} = \sum_{\chi \in G^*} \chi(g) \delta_{\chi}$. Let $(\mu_{g,\chi})_{g \in G, \chi \in G^*}$ be the matrix of this Fourier transform, i.e. $\mu_{\chi,g} = \chi(g)$. Then $f^{\mu}:= \mu f \mu^{\dagger}$ is a diagonal matrix with diagonal coefficients $(f^\mu)_{g,g} = \sum $.  Now the transformation associated to some cohomology class $[\bar{\psi}]$ can straightforwardly be computed. The new algebra is $A(G,\psi)$. In the basis $\{\bar{g}\}$ of the new algebra, the transformed channel has exactly the same matrix $f^{\mu}$.
\end{example}
\begin{example}[Transformations on multiple classical systems]
When more than one system is considered, one must take into account the relationship between subgroups of a larger group. Fix some group $G$ and let $H_1,H_2 < G$ be abelian subgroups. We consider transformations arising from the $G$-grading on the classical systems $A(H_1,1)$ and $A(H_2,1)$, which correspond to 2-cohomology classes $[\psi] \in H^2(G,U(1))$.

As before, the 2-cocycle $\bar{\psi} \in Z^2(G,U(1))$ takes $A(H_1,1)$ and $A(H_2,1)$ to $A(H_1,\psi|_{H_1})$ and $A(H_2,\psi|_{H_2})$ respectively. Endo-channels on these algebras are transformed as in Example~\ref{}. A channel $A(H_1,1) \to A(H_2,1)$ is transformed
\end{example}}

\ignore{

For an abelian group $L$, we recall (e.g. from~\cite{}) the correspondence between 2-cocycles $\psi \in Z^2(L)$ and alternating bimultiplicative forms $\rho: L \times L \to U(1)$, which takes a cocycle $\psi$ to the form $\rho_{\psi}$ defined by $\rho_{\psi}(x,y) := \psi(x,y) \overline{\psi(x,y)}$. It is not hard to show that this induces an isomorphism of the cohomology group $H^2(L)$ with the group $\Hom(\Lambda^2 A, U(1))$ of alternating bimultiplicative forms.
We say that a 2-cohomology class $[\psi]$ is \emph{nondegenerate} if the associated form $\rho_{\psi}$ is nondegenerate in the usual sense (i.e. $\rho(x,-): L \to U(1)$ is nontrivial for all $x \neq e$).

In fact, it is not hard to show that all 2-cocyles are induced from nondegenerate 2-cocycles on quotients. Recall that the \emph{radical} of an alternating form $\rho$ on $L$ is the subgroup $L_{\rho}:= \{x \in L~|~ \rho(x,-) = 1\}$. The following lemma is very easy to prove and is left to the reader.
\begin{lemma}
Let $\rho \in \Hom(\Lambda^2 A,U(1))$. Choose any section $\mu$ of the quotient $q: L \to L/L_{\rho}$. Define a map $\tilde{\rho}: L/L_{\rho} \times L/L_{\rho} \to U(1)$ by $\tilde{\rho}(g,h):= \rho(\mu(g),\mu(h))$. 

The map $\tilde{\rho}$ does not depend on the choice of section; it is an nondegenerate alternating bimultiplicative form on $L/L_{\rho}$, such that $\rho$ is obtained by inflating $\tilde{\rho}$ using the quotient $q:L \to L/L_{\rho}$. 
\end{lemma}
\ignore{\begin{proof}
To make everything concrete, let $L \to B(H)$, $g \mapsto U_g$ be some unitary projective representation of $L$ with cocycle $\psi$. Now this reduces to an ordinary linear representation of the $\alpha$-regular elements, which we call $\alpha: L_{\alpha} \to B(H)$. It is straightforwardly seen that the map $c_{\mu}(g,h): L/L_{\alpha} \times L/L_{\alpha} \to B(H)$ defined by $c_{\mu}(g,h):= \alpha(\mu(g) \mu(h) \mu(gh)^{-1})$ obeys the 2-cocycle equation $c_{\mu}(g,h)c_{\mu}(gh,i) = c_{\mu}(h,i)c_{\mu}(g,hi)$.
\ignore{; indeed, we have $\alpha(\mu(g) \mu(h) \mu(gh)^{-1}) \alpha(\mu(gh) \mu(i) \mu(ghi^{-1}) = \alpha( \mu(g) \mu(h) \mu(i) \mu(ghi)^{-1}) = \alpha(\mu(g) \mu(hi) \mu(ghi)^{-1}) \alpha(\mu(h) \mu(i) \mu(hi)^{-1})$.}

We first show that $\tilde{\psi}$ is a nondegenerate 2-cocycle on $L/L_{\alpha}$. Indeed, by multiplying the central term in the two possible orders we see that
\begin{align*}
\psi(\tilde{h},\tilde{i})\psi(\tilde{g},\tilde{hi}) c_{\mu}(h,i) c_{\mu}(g,hi) U_{\tilde{ghi}} = U_{\tilde{g}} U_{\tilde{h}} U_{\tilde{i}} =  \psi(\tilde{g},\tilde{h}) \psi(\tilde{gh},\tilde{i}) c_{\mu}(g,h) c_{\mu}(gh,i)U_{\tilde{ghi}},
\end{align*}
so using that $c_{\mu}$ is a 2-cocycle it follows that $\tilde{\psi}$ is a 2-cocycle also. Nondegeneracy is clear since $\rho_{\tilde{psi}}(g,h) = e$ implies that $\rho_{\psi}(\tilde{g},h) = e$ for all $h \in L$; thus $\tilde{g} \in L_{\alpha}$ and so $g = e$.

We must show that a different choice of section produces a cohomologous 2-cocycle. This is clear, since a different choice of section comes down to $\mu_{1}(g) = \mu_2(g) \phi(g)$ for some coboundary $\phi: L/L_{\alpha} \to U(1)$. Finally we must show that $[\psi]$ is obtained by inflation of $[\tilde{\psi}]$. For this observe $\mu(q(g)) = g \phi(g)$ for some coboundary $\phi: L \to U(1)$, so the result follows. 
\end{proof}}
\noindent
In terms of the fibre functor $F_{\rho}$, we have the following corollary.
\begin{corollary}
Let $A$ be an abelian group, and let $\rho \in \Hom(\Lambda^2 A, U(1))$. Let $q: A \to A/A_{\rho}$ be the quotient by the radical, and let $\tilde{\rho} \in \Hom(\Lambda^2 A/A_{\rho}, U(1))$ be the nondegenerate form. 

Then $F_{\rho} \cong F \circ E \circ Q$, where $Q:\Hilb(A) \to \Hilb(A/A_{\rho})$ is the quotient functor induced by $q$; $E$ is the autoequivalence of $\Hilb(A/A_{\rho})$ induced by $\tilde{\psi}$; and $F$ is the canonical fibre functor on $\Hilb(A/A_{\rho})$. 
\end{corollary}
\noindent
In the abelian case we therefore restrict our attention to fibre functors coming from nondegenerate 2-cocycles $\omega: G \times G \to U(1)$; i.e. those which do not factor through a quotient. A group admitting a nondegenerate 2-cocyle is called a \emph{group of central type}; in the abelian case these are precisely the groups which can be split as $A = S \oplus S^*$.

\begin{example}
Our main example for computations will be the groups $G = Z_{p}^{2n}$ for prime $p$. These are $2n$-dimensional vector space over the finite field $\mathbb{Z}_p$, and alternating bimultiplicative forms are precisely alternating bilinear forms in the usual sense. One can pick a basis $\{e_i\}$ of $G$; the alternating forms then form a $\mathbb{Z}_p$-vector space with basis $\{e_i \wedge e_j~|~i < j\}$.  As we have seen, these correspond to pairs of a subspace $L$ and a nondegenerate form on the quotient $G/L$. The set of nondegenerate alternating forms on $G/L$ is the quotient of the general linear group by the symplectic group. 

The twisted group algebra $A(H,\rho)$ corresponding to a pair of a subspace $H<G$ and an alternating form $\rho$ on $H$ may be described as follows. Let $H_{\rho}$ be the radical and pick an isomorphism $H \cong H_{\rho} \oplus H/H_{\rho}$. This induces an isomorphism $A(H,\rho) \cong A(H_{\rho},1) \otimes A(H/H_{\rho},\tilde{\rho})$.
\end{example}
}

\subsection{Application: Entanglement-assisted capacities of quantum channels}\label{sec:application}

Quantum channels have several distinct capacities, such as the classical capacity $C$ and the quantum capacity $Q$. Following~\cite{Bennett1999}, we are interested in $C_E$, the \emph{entanglement-assisted classical capacity} of a quantum channel. This is a quantum channel's capacity for transmitting classical information with the help of unlimited prior entanglement between sender and receiver. (Note that $C_E$ determines the analogous entanglement-assisted quantum capacity $Q_E$, since $Q_E = C_E/2$ by teleportation and dense coding.) 

It is well-known that, for classical channels, $C_E = C$; entanglement cannot increase the classical capacity. However, for quantum channels this no longer holds. The following proposition allows $C_E$ to be computed precisely for certain quantum channels.
\begin{proposition}\label{prop:computecap}
Let $G$ be a compact quantum group, and let $f: A \to B$ be a $G$-covariant channel. Let $F'$ be any fibre functor on $\Rep(G)$ which is accessible by a unitary pseudonatural transformation, and let $E: \Chan(G) \to \Chan(G')$ be the corresponding equivalence. Then $C_E[E(f)] = C_E[f]$. 
\end{proposition}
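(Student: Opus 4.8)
The plan is to reduce the claim to the entanglement-symmetry between $f$ and $E(f)$ established in Corollary~\ref{cor:entequivs}, combined with the standard monotonicity of the entanglement-assisted classical capacity under entanglement-assisted channel coding. The conceptual point is exactly the one flagged in the introduction: an entanglement-symmetry makes $f$ and $E(f)$ equal in the partial order $\geq$, and $C_E$ is an invariant of that order.

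First I would unwind Definition~\ref{def:equivalence} and Theorem~\ref{thm:equivalence} to realise $f$ and $E(f)$ as the images $F(\tilde f)$ and $F'(\tilde f)$ of a single abstract channel $\tilde f: [\tilde A] \to [\tilde B]$ between standard special Frobenius algebras in $\Rep(G)$, up to pre- and post-composition with the covariant unitary $*$-isomorphisms $\iota_A, \iota_B$ chosen in Definition~\ref{def:equivalence}. By Corollary~\ref{cor:entequivs}, the tuples $(v_{[\tilde A]}, u_{[\tilde B]}, H_e)$ and $(u_{[\tilde A]}, v_{[\tilde B]}, H_e)$ are entanglement-assisted channel coding schemes for $F'(\tilde f)$ from $F(\tilde f)$, and for $F(\tilde f)$ from $F'(\tilde f)$ respectively, both using the fixed maximally entangled state $\Psi$ on $B(H_e) \otimes B(H_e)$. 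Absorbing the reversible isomorphisms $\iota_A, \iota_B$ into these encoders and decoders, I obtain entanglement-assisted channel coding schemes for $E(f)$ from $f$ and for $f$ from $E(f)$; in the partial-order notation of the introduction this reads $f \geq E(f)$ and $E(f) \geq f$.

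Next I would invoke the monotonicity of $C_E$ under $\geq$: whenever there is an entanglement-assisted channel coding scheme for $T$ from $N$, one has $C_E[T] \leq C_E[N]$. This is the standard observation that any entanglement-assisted classical code for $T^{\otimes n}$ converts into one for $N^{\otimes n}$ of the same asymptotic rate, by replacing each use of $T$ with its exact simulation from $N$; the only additional cost is extra shared entanglement, which is free for $C_E$. Applying this to both simulation directions obtained above gives $C_E[E(f)] \leq C_E[f]$ and $C_E[f] \leq C_E[E(f)]$, hence the desired equality $C_E[E(f)] = C_E[f]$.

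The main obstacle is purely in making this last step precise against the operational (asymptotic) definition of $C_E$: one must verify that concatenating a capacity-achieving protocol for the target channel with the entanglement-assisted simulation preserves both the transmission rate and the vanishing-error condition, and that the supplementary entangled resource consumed per channel use does not affect the rate. This is routine precisely because the interconversion furnished by Corollary~\ref{cor:entequivs} is \emph{exact} and $C_E$ permits an unbounded entangled resource, so the genuinely new input is only the exact two-way simulation; everything else is inherited from the established theory of entanglement-assisted capacities~\cite{Bennett1999}.
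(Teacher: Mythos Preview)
Your proposal is correct and follows essentially the same approach as the paper: both proofs reduce the claim to Corollary~\ref{cor:entequivs}, which provides entanglement-assisted channel coding schemes interconverting $f$ and $E(f)$, and then appeal to the resulting monotonicity of $C_E$. The paper's proof is a one-liner invoking Corollary~\ref{cor:entequivs} directly, whereas you have (correctly) spelled out the unwinding of Definition~\ref{def:equivalence}, the absorption of the unitary $*$-isomorphisms, and the operational monotonicity argument that the paper leaves implicit.
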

\begin{proof}
By Corollary~\ref{cor:entequivs} we have entanglement-assisted channel coding schemes interchanging the two channels. 
\end{proof}
\begin{example}
One straightforward way to apply Proposition~\ref{prop:computecap} is to start with some covariant classical channels whose capacity is known. Let $X,Y$ be finite sets. A \emph{weakly symmetric} classical channel $f:X \to Y$ is one whose stochastic matrix $p(y|x)$ satisfies the following conditions:
\begin{itemize}
\item All rows are permutations of each other. 
\item The channel is unital (i.e. it preserves the uniform distribution).  
\end{itemize}
The capacity of a weakly symmetric classical channel is known to be
\begin{align}\label{eq:weaksymmcap}
C = \log_2(|Y|) - H(\textrm{any row of stochastic matrix})
\end{align}
where $H$ is the Shannon entropy.

We will use Example~\ref{ex:gradedchans} to compute the entanglement-assisted capacities of some quantum channels. Let $G$ be an abelian group of central type and let $f$ be a grading preserving channel on $A(G,1)$; that is, a $|G| \times |G|$ matrix whose rows and whose columns are indexed by elements of $G^*$, satisfying the following conditions:
\begin{itemize}
\item $f_{\chi_i,\chi_j} \in [0,1]~~~\forall \chi_i,\chi_j \in G^*$.
\item $\sum_{\chi \in L^*} f_{\chi,\chi_j} = 1~~\forall \chi_j \in G^*$ (stochastic).
\item $f_{\chi \chi_i,\chi \chi_j} = f_{\chi_i,\chi_j}~~\forall \chi_i,\chi_j \in G^*$ ~~(grading-preserving).
\end{itemize}
Then $f$ is a weakly symmetric classical channel on the commutative f.d. $C^*$-algebra $A(G,1)$; it is weakly symmetric because it is $G$-graded, and unital because it is stochastic:
\begin{align*}
\frac{1}{|G|}\sum_{\chi \in G^*} f_{\chi_i,\chi} = \frac{1}{|G|}\sum_{\chi \in G^*} f_{\chi^{-1} \chi_i,1} = \frac{1}{|G|}
\end{align*}
Its entanglement-assisted classical capacity is therefore given by the formula~\eqref{eq:weaksymmcap}.

Let $\psi$ be a nondegenerate 2-cocycle on $G$, and let $\{u_g~|~g \in G\}$ be the corresponding nice unitary error basis of $B(H)$, where $\dim(H) = \sqrt{|G|}$. Then the channel $f'$ on $B(H)$ defined by~\eqref{eq:fdashquantum} has the same entanglement-assisted classical capacity as the classical channel $f$.

Let us see which channels this gives us the entanglement-assisted capacity for when $G = \mathbb{Z}_2 \times \mathbb{Z}_2$ with the Pauli 2-cocycle~\eqref{eq:paulicocycle}. A $G$-graded stochastic matrix $f$ is defined by its first row $\{f_{1,\chi}\}_{\chi \in G^*}$; these coefficients are free in $[0,1]$ provided that $\sum_{\chi \in G^*} f_{1,\chi} = 1$, i.e. they are a probability distribution over the elements of $G^*$. The capacity of the channel is then defined by 
\begin{align}\label{eq:chancapac}
C = 2-H((f_{1,\chi})_{\chi \in G^*}).
\end{align}
By~\eqref{eq:fdashquantum}, the corresponding quantum channel on $B(H)$ is defined by the following map on Pauli matrices:
$$u_{g} \mapsto \sum_{\chi \in G^*} f_{1,\chi}\overline{\chi(g)} u_g$$
The entanglement-assisted capacity of such channels is given precisely by the formula~\eqref{eq:chancapac}.
\end{example}

\bibliographystyle{alphaurl}
\bibliography{bibliography}
\appendix
\section{Appendix: Faithful traces on f.d. $G$-$C^*$-algebras} \label{sec:app}

In physics, it is common to assume that the faithful positive linear functional on a f.d. $C^*$-algebra is tracial. This is precisely to demand that the corresponding Frobenius algebra should be symmetric.
In the case where there is no $G$-action, one loses no generality since there is always a tracial state on the algebra. In fact, a normalisation can be chosen for the trace that will pick it out uniquely; here we follow~\cite{Vicary2011} and demand that the Frobenius algebra should be separable~\eqref{eq:frobspecial}.
\begin{proposition}\label{prop:specialtrace}
Every f.d. $C^*$-algebra $A$ admits a unique \emph{separable} trace making the corresponding symmmetric Frobenius algebra $[A,m,u]$ separable. Moreover, with this separable trace $u^{\dagger} u = \dim(A)$.
\end{proposition}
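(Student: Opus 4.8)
The plan is to invoke the Artin--Wedderburn structure theorem to reduce the statement to a single matrix block, where speciality becomes one scalar equation per block with a unique solution. First I would write $A \cong \bigoplus_{i=1}^k M_{n_i}(\mathbb{C})$. Every trace on $A$ has the form $\tau = \sum_i \lambda_i \Tr_i$, where $\Tr_i$ is the matrix trace on the $i$-th block (on each block a tracial functional is a scalar multiple of $\Tr$, and a trace on a direct sum is determined blockwise); faithfulness together with positivity forces each $\lambda_i > 0$, and conversely any such choice yields a faithful positive trace. Since the functional is tracial, the associated Frobenius algebra $[A,m,u]$ (with $u^\dagger = \tau$ and inner product $\langle x|y\rangle = \tau(x^*y)$) is automatically symmetric~\eqref{eq:frobsymm}, by the remark preceding the proposition. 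Thus symmetry is not at issue; the content is to pin down the $\lambda_i$ by the speciality condition~\eqref{eq:frobspecial}, $m m^\dagger = \id_A$, and then to read off $u^\dagger u$.

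Next I would note that the Frobenius algebra decomposes as an orthogonal direct sum over the blocks: the subspaces $M_{n_i}(\mathbb{C})$ are mutually orthogonal for the inner product, since $\tau(a_i^* a_j) = 0$ when $a_i, a_j$ lie in distinct blocks, and the multiplication is block-diagonal, since $m$ annihilates $a_i \otimes a_j$ for $i \ne j$ (distinct blocks multiply to zero in $A$). Consequently both $m$ and $m^\dagger$ preserve the block decomposition, so $m m^\dagger = \id_A$ holds iff it holds on each block separately. This localizes the problem to a single matrix algebra $M_n(\mathbb{C})$ equipped with the trace $\lambda \Tr$.

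The core computation is then on $M_n(\mathbb{C})$. Taking matrix units $E_{pq}$, one has $\langle E_{pq}|E_{rs}\rangle = \lambda\,\delta_{pr}\delta_{qs}$, so $\{\lambda^{-1/2}E_{pq}\}$ is an orthonormal basis. Expanding $m^\dagger$ against this basis using $E_{pq}E_{rs} = \delta_{qr}E_{ps}$ gives $m^\dagger(\lambda^{-1/2}E_{ab}) = \lambda^{-1/2}\sum_q (\lambda^{-1/2}E_{aq}) \otimes (\lambda^{-1/2}E_{qb})$, whence $m m^\dagger = (n/\lambda)\,\id$ after summing the resulting $\delta$'s. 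Speciality therefore forces $\lambda = n$, uniquely. (Equivalently, one may invoke Example~\ref{ex:bh}, which already exhibits $M_n(\mathbb{C})$ with trace $n\Tr$ as a special Frobenius algebra, together with the elementary observation that rescaling $\tau \mapsto c\tau$ rescales the inner products and hence sends $m^\dagger \mapsto c^{-1}m^\dagger$, so $m m^\dagger \mapsto c^{-1}m m^\dagger$; this simultaneously supplies existence and makes uniqueness transparent.) Applied blockwise, this yields the unique special trace $\tau = \sum_i n_i \Tr_i$.

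Finally I would compute $u^\dagger u = \tau(1_A) = \sum_i n_i \Tr_i(1_{n_i}) = \sum_i n_i^2 = \dim_{\mathbb{C}}(A)$, using $\dim_{\mathbb{C}} M_{n_i}(\mathbb{C}) = n_i^2$. The only genuine work is the per-block evaluation of $m m^\dagger$; everything else is bookkeeping with the orthogonal block decomposition, and the rescaling argument reduces uniqueness to the single scalar $n/\lambda$ once existence (Example~\ref{ex:bh}) is in hand, so I do not anticipate any serious obstacle.
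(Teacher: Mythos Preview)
Your proof is correct and follows essentially the same approach as the paper: reduce to matrix blocks via Artin--Wedderburn, argue that speciality is a blockwise condition, compute $m m^\dagger = (n/\lambda)\id$ on a single block to pin down $\lambda = n$, and then evaluate $u^\dagger u = \sum_i n_i^2 = \dim(A)$. The only differences are cosmetic: you give the matrix-unit computation explicitly where the paper states the result and cites the scaling behaviour from \cite[Lem.~4.4]{Vicary2011}.
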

\begin{proof}
Every f.d. $C^*$-algebra $A$ admits an orthogonal decomposition as a multimatrix algebra, i.e. $A \cong \oplus_{i = 1}^n M_{n_i}(\mathbb{C})$. Let $\{p_i\}$, $\{\iota_i\}$ be the corresponding orthogonal projections and injections onto and from the factors, and define $\Tr_i: M_{n_i}(\mathbb{C}) \overset{\iota_i}{\hookrightarrow} A  \to[\Tr] \mathbb{C}$. Then we have $\Tr(x) = \Tr(\sum_i \iota_i p_i x)= \sum_i \Tr_i(p_i x)$, so $\Tr$ is determined by the $\Tr_i: M_{n_i}(\mathbb{C}) \to \mathbb{C}$ on each factor. \ignore{Moreover the trace inner product satisfies $\Tr(a^{\dagger}b) = \sum_i \Tr( a^\dagger \iota_i p_i b) = \sum_i \Tr( (p_i a)^{\dagger} p_i b)$, implying the decomposition $A \cong \oplus_{i = 1}^n M_{n_i}(\mathbb{C})$ is orthogonal.} Since the multiplication $m: A \otimes A \to A$ has the form $m = \sum_i \iota_i\circ m_i \circ (p_i \otimes p_i)$, the adjoint comultiplication has the form $m^{\dagger} = \sum_i (\iota_i \otimes \iota_i) \circ m_i^{\dagger} \circ p_i$, and we see that $m \circ m^{\dagger} = \sum_i \iota_i \circ m_i \circ m_i^{\dagger} \circ p_i$. It follows that $\Tr$ will be separable on $A$ iff the $\Tr_i$ are separable on the factors. 

Recall that the trace on a matrix algebra is unique up to scaling. Consider the matrix algebra $M_{n_i}(\mathbb{C})$. Under the inner product resulting from the canonical trace $\overline{\Tr_i}$ it is straightforward to show  that $m_i \circ m_i^{\dagger} = n_i \mathbbm{1}$. Scaling the trace by $n$ scales $m^{\dagger}$ by $n^{-1}$ (this is shown in~\cite[Lem. 4.4]{Vicary2011}). Therefore the unique trace on $M_{n_i}(\mathbb{C})$ yielding a separable Frobenius algebra is $n_i \overline{\Tr_i}$. It follows that the the unique choice of trace on $A$ yielding separability is:
$$
\Tr(x) = \sum_{i} n_i \overline{\Tr_i}(p_i x)
$$

Finally, for the last statement we observe that $u^{\dagger} u = \Tr(1^*1) = \Tr(1) = \sum_i n_i\overline{\Tr_i}(1_i) = \sum_i n_i^2 = \dim(A)$.
\end{proof}
\noindent
In the case of a f.d. $G$-$C^*$-algebra for a compact quantum group $G$ the situation becomes more interesting. Although every f.d. $G$-$C^*$-algebra admits an invariant functional (in particular, we can pick the canonical invariant functional from Lemma~\ref{lem:uniquespecialfunctional}), not every f.d. $G$-$C^*$-algebra will admit a $G$-invariant faithful trace.

We first observe that any Frobenius algebra $[A,m,u]$ in $\Rep(G)$ corresponding to a given f.d. $G$-$C^*$-algebra $A$ is $*$-isomorphic to any other, so in particular their quantum dimension is identical. It therefore makes sense to talk about the quantum dimension $\dim_q(A)$ of a f.d. $G$-$C^*$-algebra (Definition~\ref{def:quantumdimension}). Likewise, we can consider the ordinary dimension $\dim_c(A)$ of a f.d. $G$-$C^*$-algebra; this is the dimension of the object $F(A)$ in $\Hilb$, where $[A,m,u]$ is a Frobenius algebra in $\Rep(G)$ corresponding to the f.d. $G$-$C^*$-algebra $A$ with some choice of invariant functional and $F: \Rep(G) \to \Hilb$ is the canonical fibre functor. 
 
We remark that the separable trace of Proposition~\ref{prop:specialtrace} is precisely the canonical $G$-invariant functional on a f.d. $G$-$C^*$-algebra for trivial $G$.
\begin{lemma}\label{lem:invspectrace}
Let $\phi$ be a $G$-invariant faithful trace on a f.d. $G$-$C^*$-algebra $A$. Then there exists a $G$-invariant faithful trace on $A$ making the corresponding Frobenius algebra in $\Rep(G)$ separable. 
\end{lemma}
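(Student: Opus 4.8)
The plan is to obtain the required functional by rescaling $\phi$ against the \emph{handle element} of its associated Frobenius algebra. Write $\mathcal{A} = [A,m,u]$ for the Frobenius algebra in $\Rep(G)$ determined by $(A,\phi)$ via Proposition~\ref{prop:frobtocstar} and Definition~\ref{def:generalgcstaralg}; since $\phi$ is tracial, $\mathcal{A}$ is symmetric~\eqref{eq:frobsymm}. Consider the \emph{handle} endomorphism $W := m \circ m^{\dagger} : A \to A$. First I would show that $W$ is multiplication by a central element: using associativity~\eqref{eq:assocandunitality} together with the Frobenius law~\eqref{eq:Frobenius}, a short diagrammatic manipulation gives $W \circ L_a = L_a \circ W$ and $W \circ R_a = R_a \circ W$ for every left/right multiplication $L_a, R_a$, so that $z := W(u(1))$ is central and $W = L_z = R_z$. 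Because $W = m m^{\dagger}$ is a positive operator and $m$ is surjective (whence $m^{\dagger}$ is injective), $z$ is moreover positive and invertible.

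The decisive observation is that $z$ is $G$-invariant. Indeed $m$ and $u$ are intertwiners by hypothesis, hence so are $m^{\dagger}$, $W = m m^{\dagger}$, and therefore $z = W(u(1))$, the image of the $G$-invariant unit under an intertwiner. I would then define $\phi'(x) := \phi(zx)$, equivalently $\phi' = \phi \circ L_z$. Centrality and positivity of $z$ give $\phi'(x^*x) = \phi\big((xz^{1/2})^*(xz^{1/2})\big) \ge 0$ with equality only for $x = 0$, so $\phi'$ is faithful and positive; the trace property $\phi'(xy) = \phi(zxy) = \phi(zyx) = \phi'(yx)$ follows from $\phi$ being tracial and $z$ central; and $\phi' = \phi \circ L_z$ is a composite of intertwiners ($L_z = m \circ (\iota_z \otimes \id_A)$ with $\iota_z : \mathbb{C} \to A$ the inclusion of the $G$-invariant element $z$), hence $G$-invariant. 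Thus $\phi'$ is again a $G$-invariant faithful trace.

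It remains to check that the Frobenius algebra $\mathcal{A}' = [A,m,u]$ obtained from $\phi'$ is special~\eqref{eq:frobspecial}. The two inner products are related by $\langle x,y\rangle_{\phi'} = \langle x, Zy\rangle_{\phi}$ with $Z := L_z = R_z$ positive; the induced Gram operator on $A \otimes A$ is $Z \otimes Z$, so the new adjoint of $m$ is $m^{\dagger'} = (Z^{-1} \otimes Z^{-1})\, m^{\dagger}\, Z$. Since $z$ is central, $m \circ (Z^{-1} \otimes Z^{-1}) = Z^{-2} \circ m$, and using $m m^{\dagger} = Z$ one computes the new handle $m \circ m^{\dagger'} = Z^{-2}(m m^{\dagger})Z = Z^{-2}Z^2 = \id_A$, which is exactly speciality. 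I expect the main obstacle to be conceptual rather than computational: in the non-equivariant setting (Proposition~\ref{prop:specialtrace}) speciality is achieved by rescaling each matrix factor of $A$ independently, whereas here such rescalings must be $G$-invariant, which a priori they need not be. The argument resolves this by showing that the rescaling producing speciality is precisely multiplication by the handle element $z$, which is $G$-invariant for free — this is the crux that makes the equivariant statement go through.
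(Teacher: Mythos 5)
Your argument is correct, but it takes a genuinely different route from the paper's. The paper invokes~\cite[Lem.~2.8]{Neshveyev2018} to decompose the Frobenius algebra into (equivariantly) simple summands; on a simple summand the handle $m \circ m^{\dagger}$ is forced to be a scalar $\lambda\,\id$ --- for essentially your reason, since it is an equivariant $A$-$A$-bimodule endomorphism --- and the paper then rescales each summand separately by a scalar $*$-isomorphism $m \mapsto \lambda^{-1/2}m$, $u \mapsto \lambda^{1/2}u$. You avoid the decomposition lemma entirely by working with the global handle element $z = m(m^{\dagger}(u(1)))$: centrality follows because the Frobenius law~\eqref{eq:Frobenius} makes $m^{\dagger}$ a bimodule map; positivity and invertibility of $z$ follow from positivity of the operator $L_z = m m^{\dagger}$ together with faithfulness of left multiplication on the GNS space and injectivity of $m^{\dagger}$; $G$-invariance is automatic since $z$ is assembled from the intertwiners $m$, $m^{\dagger}$, $u$. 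Your Gram-operator computation is right: with $\phi' = \phi(z\,\cdot)$ one gets $m^{\dagger'} = (Z^{-1}\otimes Z^{-1})\,m^{\dagger}\,Z$ and hence $m\,m^{\dagger'} = Z^{-2}(m m^{\dagger})Z = \id_A$, which is speciality~\eqref{eq:frobspecial}. In fact the two constructions produce the \emph{same} trace --- on each equivariantly simple block $z = \lambda_i 1$, so $\phi'$ restricts to $\lambda_i\phi$ there --- but yours is more self-contained (no appeal to the decomposition result) and makes explicit the point you correctly identify as the crux: the rescaling achieving speciality is $G$-invariant for free because $z$ is. Two steps you leave implicit are fine but worth recording: that $z^{1/2}$ is again central (continuous functional calculus stays inside the centre, which is a $C^*$-subalgebra), and that $(A,\rho)$ remains a \emph{unitary} corepresentation with respect to the new inner product, which follows from $G$-invariance of $\phi'$ exactly as in the framework the paper's own proof also relies on.
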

\begin{proof}
We use~\cite[Lem. 2.8]{Neshveyev2018}, which states that any Frobenius algebra in $\Rep(G)$ is unitarily $*$-isomorphic to a direct sum of simple Frobenius algebras. We can therefore reduce to the case of a simple Frobenius algebra with $G$-invariant trace. In this case we already have $m \circ m^{\dagger} = \lambda \id$ for some $\lambda>0$, so using a scalar $*$-isomorphism ($m \mapsto \frac{1}{\sqrt{\lambda}} m$ and $u \mapsto \sqrt{\lambda} u$) we obtain a $G$-invariant faithful trace on $A$ making the corresponding Frobenius algebra separable. 
\end{proof}
\noindent
\begin{theorem}\label{thm:caninvtraceisspec}
A f.d. $G$-$C^*$-algebra $A$ admits a $G$-invariant faithful trace iff $\dim_q(A) = \dim(A)$. In this case the canonical $G$-invariant functional is tracial and is precisely the separable trace of Proposition~\ref{prop:specialtrace}.
\end{theorem}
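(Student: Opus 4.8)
The plan is to reduce everything to one explicit computation of the block densities of the canonical functional on the underlying multimatrix algebra, and then to run a Cauchy--Schwarz argument for one implication and a ``symmetric $\Rightarrow$ standard'' argument for the other. Let $[A,m,u]$ be the standard special Frobenius algebra in $\Rep(G)$ supplied by Lemma~\ref{lem:uniquespecialfunctional}, so its counit $\phi:=u^\dagger$ is the canonical $G$-invariant functional, $mm^\dagger=\id$, and $u^\dagger u=\dim_q(A)$ (the last because standardness makes the Frobenius self-duality~\eqref{eq:cupcapfrob} a standard solution to the conjugate equations, whence $\dim_q(A)=u^\dagger mm^\dagger u=u^\dagger u$). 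By Proposition~\ref{prop:imisgcalg} the image $F([A,m,u])$ is a special Frobenius algebra in $\Hilb$ on the $C^*$-algebra $A$, with the same counit $\phi$. Decompose $A\cong\bigoplus_i M_{n_i}(\mathbb{C})$ and write $\phi=\bigoplus_i\Tr(\rho_i\,\cdot\,)$ with each $\rho_i>0$. Since the multiplication is block-diagonal, a direct calculation yields $mm^\dagger=\bigoplus_i\Tr(\rho_i^{-1})\,\id$, so speciality~\eqref{eq:frobspecial} of the $\Hilb$-algebra is precisely the family of conditions $\Tr(\rho_i^{-1})=1$; and $\phi$ is a trace exactly when every $\rho_i$ is scalar.

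\textbf{The implication $\dim_q(A)=\dim(A)\Rightarrow$ existence (and identification).} From the set-up, $\dim_q(A)=\phi(1)=\sum_i\Tr(\rho_i)$ whereas $\dim(A)=\sum_i n_i^2$. Cauchy--Schwarz on each block gives $\Tr(\rho_i)\,\Tr(\rho_i^{-1})\ge n_i^2$, hence $\Tr(\rho_i)\ge n_i^2$ using $\Tr(\rho_i^{-1})=1$; summing, $\dim_q(A)\ge\dim(A)$ with equality iff every $\rho_i$ is scalar. When $\dim_q(A)=\dim(A)$ we are in the equality case, so $\rho_i=n_i\id$ (scalarity together with $\Tr(\rho_i^{-1})=1$), and therefore $\phi$ is tracial and, by Proposition~\ref{prop:specialtrace}, is exactly the special trace. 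This produces a $G$-invariant faithful trace and identifies it with the special trace, establishing this direction together with the final sentence of the theorem.

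\textbf{The implication existence $\Rightarrow\dim_q(A)=\dim(A)$.} Suppose $A$ admits a $G$-invariant faithful trace. By Lemma~\ref{lem:invspectrace} choose one, $\tau$, making the corresponding Frobenius algebra in $\Rep(G)$ special; as a trace it is symmetric~\eqref{eq:frobsymm}, and by Proposition~\ref{prop:specialtrace} it is the special trace, so $u_\tau^\dagger u_\tau=\dim(A)$. I claim that a special symmetric Frobenius algebra is standard: its Frobenius cup and cap are invariant under the swap, and for a swap-invariant self-duality one has $\epsilon\circ(f\otimes\id)\circ\eta=\epsilon\circ(\id\otimes f)\circ\eta$ for all $f$ (slide $f$ through the symmetric cup), which is exactly the defining equation of a standard Frobenius algebra. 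Hence $\tau$ is standard, its self-duality is a standard solution, and so $u_\tau^\dagger u_\tau=\dim_q(A)$; comparing with $u_\tau^\dagger u_\tau=\dim(A)$ gives $\dim_q(A)=\dim(A)$. (Alternatively one invokes uniqueness in Lemma~\ref{lem:uniquespecialfunctional} to conclude $\tau=\phi$ directly.)

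\textbf{Main obstacle.} The crux is the block identity $mm^\dagger=\bigoplus_i\Tr(\rho_i^{-1})\id$ together with the Cauchy--Schwarz equality analysis, which upgrades the abstract identity $u^\dagger u=\dim_q(A)$ into the sharp comparison $\dim_q(A)\ge\dim(A)$ and isolates the tracial case. The second delicate point is the implication ``special $+$ symmetric $\Rightarrow$ standard,'' which must be verified against the precise form of the standardness condition; the remaining steps are routine bookkeeping with the cited lemmas.
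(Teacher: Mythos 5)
Your proof is sound in substance and follows the paper's overall architecture (Lemma~\ref{lem:uniquespecialfunctional}, reduction to $\Hilb$ along the canonical fibre functor, Proposition~\ref{prop:specialtrace}, Lemma~\ref{lem:invspectrace}), but both halves are routed differently, and in the first half your route is actually an improvement. The paper's forward direction simply forgets the $G$-action and asserts that a special Frobenius algebra in $\Hilb$ with $u^\dagger u=\dim(A)$ ``is precisely'' a $C^*$-algebra with its canonical functional; strictly, Proposition~\ref{prop:specialtrace} only gives uniqueness of the special functional \emph{among traces}, so this step needs exactly what you supply: the block identity $mm^\dagger=\bigoplus_i\Tr(\rho_i^{-1})\,\id$ turning speciality into $\Tr(\rho_i^{-1})=1$, and the Cauchy--Schwarz bound $\Tr(\rho_i)\Tr(\rho_i^{-1})\ge n_i^2$ with equality iff $\rho_i$ is scalar. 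Your computation both substantiates the paper's terse claim and re-derives $\dim_q(A)\ge\dim(A)$ for free, where the paper cites~\cite[Cor.\ 2.2.20]{Neshveyev2013}.

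The converse direction is where you diverge from the paper and where the one genuine soft spot lies. Two points. First, a precision issue: $\Rep(G)$ is not symmetric, so ``as a trace it is symmetric~\eqref{eq:frobsymm}'' only makes sense after applying the canonical fibre functor; your sliding argument lives in $\Hilb$, and it transfers to standardness in $\Rep(G)$ only because both sides of the standardness equation are scalars ($\mathbbm{1}\to\mathbbm{1}$) and the fibre functor is faithful and preserves the Frobenius cup and cap~\eqref{eq:cupcapfrob} --- this should be said, though it is repairable in one sentence. Second, and more seriously: the chain ``paper-standard $\Rightarrow$ the self-duality is a \emph{standard solution} of the conjugate equations $\Rightarrow$ $u_\tau^\dagger u_\tau=\dim_q(A)$'' asserts, without argument, the nontrivial converse direction of the characterization of standard solutions. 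The trace-equality condition over $\End_{\Rep(G)}(A)$ together with $R=\bar R=m^\dagger u$ does force standardness (the invertible operator relating $R$ to a standard solution must be unitary), but this requires an argument along the lines of~\cite[Thm.\ 2.2.19]{Neshveyev2013} and is precisely the hard direction; your parenthetical alternative via uniqueness in Lemma~\ref{lem:uniquespecialfunctional} has the same dependence. The paper avoids needing this altogether: for \emph{any} special Frobenius algebra, $R=\bar R=m^\dagger u$ is some solution of the conjugate equations, so one only gets the inequality $\dim_q(A)\le u^\dagger mm^\dagger u=u^\dagger u$, and combining with $u_\tau^\dagger u_\tau=\dim(A)$ from Proposition~\ref{prop:specialtrace} already yields $\dim_q(A)\le\dim(A)$, contradicting $\dim_q(A)>\dim(A)$. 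Replacing your equality claim with this one-line inequality closes the gap and shortens the proof; alternatively, keep your stronger statement but supply the proof that trace-equality plus $R=\bar R$ implies standardness.
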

\begin{proof}
We first show that if $\dim_q(A)=\dim(A)$ then the canonical $G$-invariant functional is tracial. Consider the separable Frobenius algebra $[A,m,u]$ in $\Rep(G)$ corresponding to $A$ with its canonical $G$-invariant functional. If we forget the $G$-action we obtain a separable Frobenius algebra in $\Hilb$ satisfying $u^{\dagger}u = \dim_q(A) = \dim(A)$. But then this is precisely a finite-dimensional $C^*$-algebra equipped with its canonical functional, which was shown to be tracial in Proposition~\ref{prop:specialtrace}.

We now show that if $\dim_q(A) > \dim(A)$ (by~\cite[Cor. 2.2.20]{Neshveyev2013}, it can only be greater), then the f.d. $G$-$C^*$-algebra $A$ admits no $G$-invariant faithful trace. We will prove this by contradiction. Suppose that $A$ admits a $G$-invariant faithful trace. Then by Lemma~\ref{lem:invspectrace} it admits a $G$-invariant faithful trace making the corresponding Frobenius algebra separable; we may therefore consider a separable Frobenius algebra $[A,m,u]$ in $\Rep(G)$ such that the associated positive linear functional is tracial.

We observe that a separable Frobenius algebra $[A,m,u]$ in $\Rep(G)$ satisfies $\dim_q(A) \leq u^{\dagger}u$, with equality iff the Frobenius algebra is standard. For this we use~\cite[Thm. 2.2.19]{Neshveyev2013}. Indeed, we note that, in the language of that theorem, $R=\overline{R}= m^{\dagger}u$ is a solution of the conjugate equations for $A$. We then have $\dim_q(A) \leq u^{\dagger}mm^{\dagger}u=u^{\dagger}u$ with equality iff $m^{\dagger}u$ is a standard solution of the conjugate equations up to a scalar factor; the scalar factor must be trivial since $R=\overline{R}$.

Forgetting the $G$-action, we obtain a separable Frobenius algebra in $\Hilb$ such that the faithful positive linear functional is tracial. But in Proposition~\ref{prop:specialtrace} it was shown that the only such positive linear functional is the canonical one, satisfying $u u^{\dagger} = \dim(A)$. But since we know that $u^{\dagger} u \geq \dim_q(A) > \dim(A)$, we arrive at a contradiction. There can therefore be no $G$-invariant faithful trace on $A$.  
\end{proof}
\end{document}